\def\techreport{a}
\newcommand{\pagewidth}{\textwidth}
\tikzset{->,>=stealth'}
\newcommand{\thmhelperpre}[2]{\newcommand{\theoremlike}[1]{\par\medskip\penalty-250\refstepcounter{theorem}{\bfseries\noindent##1 \ref{#1}.}\itshape}\theoremlike{#2}}
\newcommand{\thmhelperpost}{\par\medskip%
 \renewcommand{\theoremlike}[1]{\par\medskip\penalty-250\refstepcounter{theorem}{\bfseries\noindent##1 \thesection .\thetheorem.}\itshape}%
}
\newenvironment{reflemma}[1]{\thmhelperpre{#1}{Lemma}}{\thmhelperpost}
\newenvironment{refproposition}[1]{\thmhelperpre{#1}{Proposition}}{\thmhelperpost}
\newcounter{applemma}
\newtheorem{alemma}[applemma]{Lemma}
\newcommand{\Qset}{\mathbb{Q}}
\newcommand{\Nset}{\mathbb{N}}
\newcommand{\dist}{\mathit{Dist}}
\newcommand{\X}{{\ensuremath{\mathbf{X}}}}
\newcommand{\F}{{\ensuremath{\mathbf{F}}}}
\newcommand{\G}{{\ensuremath{\mathbf{G}}}}
\newcommand{\Gf}[1]{\G^{#1}}
\newcommand{\U}{{\ensuremath{\mathbf{U}}}}
\newcommand{\true}{{\ensuremath{\mathbf{tt}}}}
\newcommand{\false}{{\ensuremath{\mathbf{ff}}}}
\newcommand{\fltlfragment}{{\textrm{fLTL}\textsubscript{$\setminus\G\U$}}}
\renewcommand{\sf}{{\ensuremath{\mathsf{sf}}}}
\newcommand{\un}{{\ensuremath{\mathsf{Unf}}}}
\newcommand{\gsf}{{\ensuremath{\mathbb{G}}}}
\newcommand{\fsf}{{\ensuremath{\mathbb{F}}}}
\newcommand{\tran}[1]{\stackrel{#1}{\longrightarrow}}
\newcommand{\acc}{{\ensuremath{\mathit{Acc}}}}
\newcommand{\sink}{\mathsf{Sink}}
\newcommand{\SGF}{\mathcal{S}_{\G\F}}
\newcommand{\SFG}{\mathcal{S}_{\F\G}}
\newcommand{\SGf}{\mathcal{S}_{\Gf{\bowtie p}_\ext}}
\newcommand{\sat}{\mathit{Sat}}
\newcommand{\accInfinite}{\Inf}
\newcommand{\accFinite}{\Fin}
\DeclareMathOperator{\limext}{\mathrm{lim\,ext}}
\DeclareMathOperator{\Inf}{\mathit{Inf}}
\DeclareMathOperator{\Fin}{\mathit{Fin}}
\newcommand{\accMPname}{\mathit{MP}}
\newcommand{\accMP}[3][]{\ifblank{#1}{\accMPname^{#2}(#3)}{\accMPname_{#1}^{#2}(#3)}} %
\newcommand{\accmec}{\textsc{AcceptingMEC}}
\newcommand{\accT}[1]{\acc_{\T}({#1})}
\newcommand{\accP}[1]{\acc({#1})}
\newcommand{\T}{\mathcal{M}}
\renewcommand{\S}{\mathcal{S}}
\newcommand{\A}{\mathcal{A}}
\renewcommand{\P}{\mathcal{P}}
\newcommand{\rec}{\mathcal{R}\mathrm{ec}}
\newcommand{\R}{\mathcal{R}}
\renewcommand{\L}{\mathsf{L}}
\newcommand{\suffix}[2]{#1^{#2}}%
\newcommand{\infix}[3]{#1^{#2#3}}
\newcommand{\mdp}{\mathsf{M}}
\newcommand{\product}{\mdp\times\A}
\newcommand{\inits}{\hat s}
\newcommand{\act}[1]{\mathit{Act}(#1)}
\newcommand{\pat}{\omega}
\newcommand{\fpat}{w}
\newcommand{\mec}{\mathsf{MEC}}
\newcommand{\pr}{\mathbb P}	
\renewcommand{\Pr}[3]{\pr^{#1}_{#2}\hspace{-0.16em}\left[{#3}\right]}   %
\newcommand{\singlereward}{r}
\newcommand{\singlerewardalt}{q}
\newcommand{\lrLim}[1]{\mathrm{lr}(#1)}  %
\newcommand{\lrInf}{\mathrm{lr}_{\inf}}  %
\newcommand{\lrSup}{\mathrm{lr}_{\sup}}
\newcommand{\lrExt}{\mathrm{lr}_{\ext}}
\newcommand{\ext}{\mathrm{ext}}
\DeclareMathOperator{\proves}{\vdash}
\DeclareMathOperator{\notproves}{\not\vdash}
\newcommand{\idf}{\mathds{1}}
\newcommand{\mypara}[1]{\smallskip\noindent {\bf #1}\ }
\newcommand{\myspace}{\vspace*{-0.5em}}
\begin{document}
\title{Controller synthesis for MDPs and \\ Frequency LTL$_{\setminus \G\U}$}

\author{Vojt\v{e}ch~Forejt\inst{1} \and Jan Kr\v{c}\'al\inst{2} \and Jan K\v{r}et\'insk\'y\inst{3}}
\institute{Department of Computer Science, University of Oxford, UK\and Saarland University -- Computer Science,  Saarbr\"{u}cken, Germany \and IST Austria}

\maketitle

\begin{abstract}
Quantitative extensions of temporal logics have recently attracted significant attention. In this work, we study frequency LTL (fLTL), an extension of LTL which allows to speak about frequencies of events along an execution. 
Such an extension is particularly useful for probabilistic systems that often cannot fulfil strict qualitative guarantees on the behaviour.
It has been recently shown that controller synthesis for Markov decision processes and fLTL is decidable when all the bounds on frequencies are $1$. As a step towards a complete quantitative solution, we show that the problem is decidable for the fragment \fltlfragment, where $\U$ does not occur in the scope of $\G$ (but still $\F$ can).
Our solution is based on a novel translation of such quantitative formulae into equivalent deterministic automata.
\end{abstract}

\section{Introduction}

Markov decision processes (MDP) are a common choice when modelling systems that exhibit (un)controllable and probabilistic behaviour.
In controller synthesis of MDPs, the goal is then to steer the system so that it meets certain property. 
Many properties specifying the desired behaviour, such as ``the system is always responsive'' can be easily captured by
Linear Temporal Logic (LTL).
This logic is in its nature qualitative and cannot express \emph{quantitative} linear-time properties such as ``a given failure happens only {\em rarely}''.
To overcome this limitation, especially apparent for stochastic systems, extensions of LTL with \emph{frequency} operators have been recently studied~\cite{BDL-tase12,BMM14}.
Such extensions come at a cost, and for example the ``frequency until'' operator can make the controller-synthesis problem undecidable already for non-stochastic systems~\cite{BDL-tase12,BMM14}. 
It turns out~\cite{our-concur,DBLP:journals/corr/abs-1111-3111,AT12} that a way of providing significant added expressive power while preserving tractability is to extend LTL only by the ``frequency globally'' formulae $\Gf{\geq p} \varphi$.
Such a formula is satisfied if the long-run frequency of satisfying $\varphi$ on an infinite path is at least $p$. 
More formally, $\Gf{\geq p} \varphi$ is true on an infinite path $s_0s_1 \cdots$ of an MDP if and only if $\frac{1}{n}\cdot |\{i \mid \text{$i < n$ and $s_i s_{i+1} \cdots$ satisfies $\varphi$}\}|$ is at least $p$ as $n$ tends to infinity. Because the relevant limit might not be defined, we need to consider two distinct operators, $\Gf{\geq p}_{\inf}$ and $\Gf{\geq p}_{\sup}$, whose definitions
use limit inferior and limit superior, respectively.
We call the resulting logic \emph{frequency LTL (fLTL)}. 

So far, MDP controller synthesis for fLTL has been shown decidable for the fragment containing only the operator $\Gf{\geq 1}_{\inf}$~\cite{our-concur}.
Our paper makes a significant further step towards the ultimate goal of a model checking procedure for the whole fLTL. We address the general \emph{quantitative} setting with arbitrary frequency bounds $p$ and consider %
the fragment \fltlfragment{}, which is obtained from frequency LTL by preventing the $\U$ operator from occurring inside $\G$ or $\Gf{\geq p}$ formulas (but still allowing the $\F$ operator to occur anywhere in the formula).
The approach we take is completely different from~\cite{our-concur} 
where %
ad hoc product MDP construction is used, heavily relying on existence of certain types of strategies in the $\Gf{\geq 1}_{\inf}$ case.
In this paper we provide, to the best of our knowledge, the first translation of a quantitative logic to equivalent \emph{deterministic} automata. 
This allows us to take the standard automata-theoretic approach to verification~\cite{VW86}: after obtaining the finite automaton, we do not deal with the structure of the formula originally given, and we solve a (reasonably simple) synthesis problem on a product of the single automaton with the MDP.

Relations of various kinds of logics and automata are widely studied (see e.g.~\cite{Vardi96anautomata-theoretic,thomas1997languages,Droste}), and our results provide new insights into this area 
for quantitative logics.
Previous work~\cite{AT12} offered only translation of a similar logic to {\em non-deterministic} ``mean-payoff B\"uchi automata'' noting that it is difficult to give an analogous
reduction to {\em deterministic} ``mean-payoff Rabin automata''. The reason is that the non-determinism is inherently present in the form of guessing whether the subformulas of $\Gf{\geq p}$ are satisfied on a suffix.
Our construction overcomes this difficulty and offers equivalent deterministic automata. It is a first and highly non-trivial step towards providing a reduction for the complete logic. \
Although our algorithm does not allow us to handle the extension of the whole LTL, the considered fragment \fltlfragment{} 
contains a large class of formulas and offers significant expressive power.
It subsumes the GR(1) fragment of LTL  \cite{BJPPS12}, which has found use in synthesis for hardware designs. The $\U$ operator, although not allowed within a scope of a $\G$ operator, can still
be used for example to distinguish paths based on their prefixes. As an example synthesis problem expressible in this fragment, 
consider a cluster of servers 
where each server
plays either a role of a load-balancer or a worker. 
On startup, each server 
\underline{l}istens
for a message specifying its role. A load-%
\underline{b}alancer
\underline{f}orwards
each 
\underline{r}equest
and only waits for a 
\underline{c}onfirmation
whereas a 
\underline{w}orker
\underline{p}rocesses
the requests itself.
A specification for a single server in the cluster can require, for example, that the following formula (with propositions \underline{explained} above) holds with probability at least $0.95$:
\[
\Big(\big(\mathit{l}\,\U\,\mathit{b}\big) \rightarrow \Gf{\geq 0.99}\big(r \rightarrow \X (f\wedge \F c)\big)\Big)
\wedge
\Big(\big(\mathit{l}\,\U\,\mathit{w}\big) \rightarrow \Gf{\geq 0.85}\big(r \rightarrow (\X p \vee \X\X p)\big)\Big)
\]

\mypara{Related work.} 
Frequency LTL was studied in another variant in~\cite{BDL-tase12,BMM14} where a {\em frequency until} operator is introduced in two different LTL-like logics, and undecidability is proved for problems relevant to our setting. The work \cite{BDL-tase12} also yields decidability with restricted nesting of the frequency until operator; as the decidable fragment in~\cite{BDL-tase12} does not contain frequency-globally operator, it is not possible to express many useful properties expressible in our logic.
A logic that speaks about frequencies on a finite interval was introduced in~\cite{DBLP:journals/corr/abs-1111-3111}, but the paper provides algorithms only for Markov chains and a bounded fragment of the logic.

Model checking MDPs against LTL objectives relies on the automata-theoretic approach, namely on translating LTL to automata that are to some extent deterministic~\cite{CY95}.
This typically involves translating LTL to non-deterministic automata, which are then determinized using e.g. Safra's construction.
During the determinization, the original structure of the formula is lost, which prevents us from extending this technique to the frequency setting.
However, an alternative technique of translating LTL directly to deterministic automata has been developed \cite{cav12,atva13,cav14}, where the logical structure is preserved.
In our work, we extend the algorithm for LTL$_{\setminus\G\U}$ partially sketched in \cite{atva13}.
In %
Section~\ref{sec:conclusions}, 
we explain why adapting the algorithm for full LTL \cite{cav14} is difficult.
Translation of LTL$_{\setminus\G\U}$ to other kinds of automata has been considered also in \cite{DBLP:conf/tacas/KiniV15}.
Our technique relies on a solution of a multi-objective mean-payoff problem on MDP~\cite{BBC+14,lics15}. Previous results only consider limit inferior rewards, and so we cannot use them as off-the-shelf results, but need to adapt them first to our setting with both inferior and superior limits together with Rabin condition.
There are several works that combine mean-payoff objectives with e.g. logics or parity objectives, but in most cases only simple atomic propositions can be used to define the payoff \cite{bloem2009better,boker2011temporal,chatterjee2011energy}.
The work \cite{baier2014weight} extends LTL with another form of quantitative operators, allowing accumulated weight constraint expressed using automata, again not allowing quantification over complex formulas. Further, \cite{ABK14} introduces a variant of LTL with a discounted-future operator.
Finally, techniques closely related to the ones in this paper are used in \cite{EKVY08,CR15lics,RRS15cav}.

\mypara{Our contributions.}
To our best knowledge, this paper gives the first decidability result for probabilistic verification against linear-time temporal logics extended by \emph{quantitative} frequency operators with \emph{complex nested subformulas} of the logic.
It works in two steps,
keeping the same time complexity as for ordinary LTL.
In the first step, a $\fltlfragment$ formula gets translated to an equivalent \emph{deterministic} generalized Rabin automaton extended with mean-payoff objectives. 
 This step is inspired by previous work~\cite{atva13}, %
 but the extension with auxiliary automata for $\Gf{\geq p}$ requires a different construction.
The second step is the analysis of MDPs against conjunction of limit inferior mean-payoff, limit superior mean-payoff, and generalized Rabin objectives.
 This result is obtained by adapting and combining several existing involved proof techniques~\cite{lics15,BCFK13}.

The paper is organised as follows: 
the main algorithm is explained in Section~\ref{sec:alg}, relegating the details of the two technical steps above to Sections~\ref{sec:automata} and~\ref{sec:mean-payoff}.

\section{Preliminaries}

We use  $\Nset$ and $\Qset$ to denote the sets of non-negative integers and rational numbers.
The set of all distributions over a countable set $X$ is denoted by $\dist(X)$.
For a predicate $P$, the {\em indicator function} $\idf_{P}$ equals $1$ if $P$ is true, and $0$ if $P$ is false.

\mypara{Markov decision processes (MDPs).}
An MDP is a tuple $\mdp=(S,A,\mathit{Act},\delta,\inits)$ 
where $S$ is a finite set of states, $A$ is a finite set of actions, 
$\mathit{Act} : S\rightarrow 2^A\setminus \{\emptyset\}$ assigns to each state $s$ the set $\act{s}$ of actions enabled 
in $s$,
$\delta : A\rightarrow \dist(S)$ is a probabilistic 
transition function that given 
an action $a$ 
gives a probability distribution over the 
successor states, and $\inits$ is the initial state.
To simplify notation, w.l.o.g. we require that every action is enabled in exactly one state.

\mypara{Strategies.} 
A strategy in an MDP $\mdp$ is a ``recipe'' to choose actions.
Formally, it is a function 
$\sigma : (SA)^*S \to \dist(A)$ that given a finite path~$\fpat$, representing 
the history of a play, gives a probability distribution over the 
actions enabled in the last state. 
A strategy $\sigma$ in $\mdp$ induces a \emph{Markov chain} 
$\mdp^\sigma$ which is a tuple $(L,P,\inits)$ where the set of \emph{locations} $L=(S \times A)^*\times S$ encodes the history of the play,
$\inits$ is an \emph{initial location},
and $P$ is a \emph{probabilistic transition function} that assigns to each location a probability distribution over successor locations defined by
\(
   P(h)(h\,a\,s)\ =\ 
   \sigma(h)(a)\cdot \delta(a)(s)\,.
\)
for all $h\in (SA)^*S$, $a\in A$ and $s\in S$.

The probability space of the runs of the Markov chain is denoted by $\pr^\sigma_\mdp$ and defined in the standard way
\ifx\techreport\undefined\cite{KSK76,techreport}\else \cite{KSK76}; for reader's convenience the construction is recalled in Appendix~\ref{app:mc}\fi.

\mypara{End components.}
A tuple $(T,B)$ with $\emptyset\neq T\subseteq S$ and $B\subseteq \bigcup_{t\in T}\act{t}$
is an \emph{end component} of $\mdp$
if (1) for all $a\in B$, whenever $\delta(a)(s')>0$ then $s'\in T$;
and (2) for all $s,t\in T$ there is a path 
$w = s_1 a_1\cdots a_{k-1} s_k$ such that $s_1 = s$, $s_k=t$, and all states
and actions that appear in $w$ belong to $T$ and $B$, respectively.
An end component $(T,B)$ is a \emph{maximal end component (MEC)}
if it is maximal with respect to the componentwise subset ordering. Given an MDP, the set of MECs is denoted by $\mec$.
Finally, an MDP is \emph{strongly connected} if $(S,A)$ is a MEC.

\mypara{Frequency linear temporal logic (fLTL).}
The formulae of the logic
fLTL
are given by the following syntax:
\[
\varphi\quad\mathop{::=}\quad \true \mid \false \mid a\mid \neg a\mid \varphi\wedge\varphi \mid \varphi\vee\varphi \mid \X\varphi \mid \F\varphi \mid \G\varphi \mid \varphi\U\varphi  \mid \Gf{\bowtie p}_{\ext} \varphi 
\] 
over a finite set $Ap$ of atomic propositions, ${\bowtie}\in\{\geq,>\}$, $p\in[0,1]\cap\Qset$, and $\ext \in \{\inf,\sup \}$.
A formula that is neither a conjunction, nor a disjunction is called \emph{non-Boolean}.
The set of non-Boolean subformulas of $\varphi$ is denoted by $\sf(\varphi)$.

\mypara{Words and fLTL Semantics.}
Let $w\in (2^{Ap})^\omega$ be an infinite word. The $i$th letter of $w$ is denoted $w[i]$, i.e.~$w=w[0]w[1]\cdots$. 
We write $\infix w i j$ for the finite word $w[i] w[i+1] \cdots w[j]$, and $w^{i\infty}$
or just $\suffix w i$ for the suffix $w[i] w[i+1] \cdots $. 
The semantics of a formula on a word $w$ is defined inductively: for $\true$, $\false$, $\wedge$, $\vee$, and for
atomic propositions and their negations, the definition is straightforward, for the remaining operators
we define:
$$
\begin{array}[t]{lcl}
w \models \X \varphi & \iff & \suffix w1 \models \varphi \\
w \models \F \varphi & \iff & \exists \, k\in\Nset: \suffix wk \models \varphi \\
w \models \G \varphi & \iff & \forall \, k\in\Nset: \suffix wk \models \varphi
\end{array}
\quad
\begin{array}[t]{lcl}
w \models \varphi\U \psi & \iff &
\begin{array}[t]{l}
\exists \, k\in\Nset: \suffix wk \models \psi \text{ and } \\
\forall\, 0\leq j < k: \suffix wj\models \varphi
\end{array}\\
w \models \Gf{\bowtie p}_{\ext} \varphi & \iff & \lrExt(\idf_{\suffix w0\models\varphi}\idf_{\suffix w1\models\varphi}\cdots) \bowtie p 
\end{array}$$
where we set $\lrExt(q_1q_2\cdots) := \limext_{i\to \infty} \frac{1}{i} \sum_{j=1}^i q_i$.
By $\L(\varphi)$ we denote the set $\{w\in(2^{Ap})^\omega\mid w\models\varphi\}$ of words satisfying $\varphi$.

The \fltlfragment{} fragment of fLTL is defined by disallowing occurrences of $\U$ in $\G$-formulae, i.e. it is given by the following syntax for $\varphi$:
\begin{align*}
\varphi::= & a\mid \neg a\mid \varphi\wedge\varphi \mid \varphi\vee\varphi \mid \X\varphi \mid \varphi\U\varphi \mid \F\varphi \mid \G\xi \mid \Gf{\bowtie p}_{\ext} \xi\\
\xi::= & a\mid \neg a\mid \xi\wedge\xi \mid \xi\vee\xi \mid \X\xi \mid  \F\xi \mid \G\xi \mid \Gf{\bowtie p}_{\ext} \xi
\end{align*} 

Note that restricting negations to atomic propositions is without loss of generality as all operators are closed under negation, for example $\neg \Gf{\ge p}_{\inf} \varphi \equiv \Gf{> 1- p}_{\sup} \neg \varphi$ or $\neg \Gf{> p}_{\sup} \varphi \equiv \Gf{\ge 1- p}_{\inf} \neg \varphi$.  Furthermore, we could easily allow $\bowtie$ to range also over $\leq$ and $<$ as $\Gf{\leq p}_{\inf} \varphi \equiv \Gf{\geq 1- p}_{\sup} \neg \varphi$ and $\Gf{< p}_{\inf} \varphi \equiv \Gf{> 1- p}_{\sup} \neg \varphi$. 

\mypara{Automata.}
Let us fix a finite alphabet $\Sigma$.
A deterministic \emph{labelled transition system (LTS)} over $\Sigma$ is a tuple $(Q,q_0,\delta)$ where $Q$ is a finite set of states, $q_0$ is the initial state, and $\delta: Q \times \Sigma \to Q$ is a partial transition function. We denote $\delta(q,a) = q'$ also by $q \tran{a} q'$.
A \emph{run} of the LTS $\S$ over an infinite word $w$ is a sequence of states $\S(w)=q_0 q_1 \cdots$ such that $q_{i+1} = \delta(q_i,w[i])$.
For a finite word $w$ of length $n$, we denote by $\S(w)$ the state $q_n$ in which $\S$ is after reading $w$.

An \emph{acceptance condition} is a positive boolean formula over 
formal 
variables 
\[
\{ \accInfinite(S), \accFinite(S), \accMP[\ext]{\bowtie p}{\singlereward} \mid S{\subseteq} Q, \ext {\in} \{\inf,\sup\}, \mathord{\bowtie} {\in} \{ {\geq}, {>}\}, p {\in} \Qset, \singlereward: Q {\to} \Qset \}.
\]
Given a run $\rho$ and an acceptance condition $\alpha$, we 
assign truth values as follows:
\begin{itemize}
	\item $\accInfinite(S)$ is true if{}f $\rho$ visits 
	(some state of) $S$ infinitely often,
	\item $\accFinite(S)$ is true
	if{}f $\rho$ visits (all states of) $S$ finitely often,
	\item $\accMP[\ext]{\bowtie p}{\singlereward}$ is true if{}f 
	$\lrExt(\singlereward(\rho[0]) \singlereward(\rho[1]) \cdots) \bowtie p$.
\end{itemize}
The run $\rho$ satisfies $\alpha$ 
if this truth-assignment makes $\alpha$ true.
An \emph{automaton} $\A$ is an LTS with an acceptance condition $\alpha$. The language of $\A$, denoted by $\L(\A)$, is the set of all words inducing a run satisfying $\alpha$.
An acceptance condition $\alpha$ is a \emph{B\"uchi}, \emph{generalized B\"uchi}, or \emph{co-B\"uchi} acceptance condition
if it is of the form $\accInfinite(S)$, $\bigwedge_i\accInfinite(S_i)$, or $\accFinite(S)$, respectively.
Further, $\alpha$ is a \emph{generalized Rabin mean-payoff}, or a \emph{generalized B\"uchi mean-payoff} acceptance condition
if it is in disjunctive normal form, or if it is a conjunction not containing any $\accFinite(S)$, respectively.
For each acceptance condition we define a corresponding automaton, e.g. \emph{deterministic generalized Rabin mean-payoff automaton (DGRMA)}.

\section{Model-checking algorithm}\label{sec:alg}

In this section, we state the problem of model checking MDPs against $\fltlfragment$ specifications and provide a solution.
As a black-box we use two novel routines described in detail in the following two sections. All proofs are in the appendix.

Given an MDP $\mdp$ and a valuation $\nu:S\to2^{Ap}$ of its states, we say that its run 
$\pat= s_0 a_0 s_1 a_1 \cdots$ \emph{satisfies} $\varphi$, written $\pat\models\varphi$, if $\nu(s_0)\nu(s_1)\cdots\models \varphi$.
We use $\Pr{\sigma}{}{\varphi}$ as a shorthand for the probability of all runs satisfying $\varphi$, i.e.  $\Pr{\sigma}{\mdp}{\{\omega\mid \omega\models \varphi\}}$.
This paper is concerned with the following task:\medskip

\noindent
\framebox[\pagewidth]{\parbox{0.96\pagewidth}{\smallskip
		
		\textbf{Controller synthesis problem:}
		Given an MDP with a valuation, an $\fltlfragment$ formula $\varphi$ and
		$x\in[0,1]$, decide whether $\Pr{\sigma}{}{\varphi} \ge x$ for some strategy $\sigma$,
		and if so, construct such a \emph{witness} strategy.
		\smallskip
	}}\medskip

\noindent
The following is the main result of the paper.
\begin{theorem}\label{thm:main}
The controller synthesis problem for MDPs and \fltlfragment{} is decidable and the witness strategy can be constructed in doubly exponential time.
\end{theorem}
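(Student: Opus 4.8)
\noindent
The plan is to follow the two-step strategy announced above: translate $\varphi$ to an equivalent \emph{deterministic} automaton and then solve a synthesis problem on its product with $\mdp$. For the first step I would apply the construction of Section~\ref{sec:automata} to obtain a deterministic generalized Rabin mean-payoff automaton (DGRMA) $\A$ over the alphabet $2^{Ap}$ with $\L(\A)=\L(\varphi)$ and with doubly exponentially many states in $|\varphi|$. Following~\cite{atva13}, $\A$ consists of a ``master'' tracking a positive-boolean unfolding of $\varphi$ together with auxiliary ``slave'' components, one per subformula in $\sf(\varphi)$. The new ingredient is the treatment of a subformula $\Gf{\bowtie p}_{\ext}\xi$: rather than guessing on which suffixes $\xi$ holds (which is exactly the source of non-determinism in the B\"uchi automata of~\cite{AT12}), its slave emits a $0/1$ reward recording satisfaction of $\xi$ at the current position, and the acceptance condition carries a literal $\accMP[\ext]{\bowtie p}{\singlereward}$ referring to it, so that the frequency is evaluated deterministically. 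Correctness of this translation --- a word induces an accepting run of $\A$ iff it satisfies $\varphi$ --- is established in Section~\ref{sec:automata}.

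\smallskip\noindent
Since $\A$ is deterministic, the synchronous product $\product$ is again an MDP, and strategies of $\mdp$ and of $\product$ are in a natural, measure-preserving correspondence on runs. Transporting the valuation $\nu$ through $\A$, a run of $\mdp$ satisfies $\varphi$ exactly when the corresponding run of $\product$ satisfies the acceptance condition $\alpha$ of $\A$. Consequently $\sup_{\sigma}\Pr{\sigma}{\mdp}{\varphi}$ equals the optimal probability, over strategies in $\product$, of producing a run satisfying $\alpha$, and a witness strategy on the product is turned into a finite-memory witness strategy on $\mdp$, the extra memory being the automaton state together with whatever the product solution requires.

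\smallskip\noindent
It then remains to compute, and realise, the optimal probability of satisfying a generalized Rabin mean-payoff condition on an MDP, which is the subject of Section~\ref{sec:mean-payoff}. I would proceed in the standard way: compute the MEC decomposition of $\product$, and for each MEC decide which sub-end-component admits a strategy satisfying some disjunct of the DNF acceptance condition almost surely --- i.e. the $\accInfinite$/$\accFinite$ part together with a conjunction of $\accMP[\inf]{\bowtie p}{\singlereward}$ and $\accMP[\sup]{\bowtie p}{\singlereward}$ literals --- and finally maximise, by a linear program, the probability of reaching the union of these ``winning'' regions. Deciding, inside a strongly connected MDP, feasibility of a conjunction mixing $\liminf$ and $\limsup$ mean-payoff constraints with a Rabin constraint, and exhibiting a witnessing strategy, is the genuinely new step: the multi-objective mean-payoff results of~\cite{BBC+14,lics15} cover only $\liminf$ rewards, so they must be adapted --- using the techniques of~\cite{lics15,BCFK13} --- to allow $\limsup$ rewards and to be combined with the acceptance structure. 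The product has doubly exponentially many states and the acceptance condition at most doubly exponentially many literals, while all MDP-side computations (MEC decomposition, per-MEC feasibility, reachability LP) are polynomial in the product; hence the whole procedure runs in doubly exponential time in $|\varphi|$ and polynomial time in $|\mdp|$, matching plain LTL.

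\smallskip\noindent
The main obstacle is two-fold. First, making the automaton genuinely \emph{deterministic}: because the semantics of $\Gf{\bowtie p}_{\ext}$ quantifies the long-run frequency of a \emph{complex nested} subformula, the usual trick of guessing the relevant suffixes has to be replaced by a quantitative mean-payoff acceptance component, and one must show that the slaves track the truth of those subformulas faithfully at every position. Second, on the MDP side, the existing $\liminf$ multi-objective mean-payoff machinery has to be pushed to the setting where $\liminf$ and $\limsup$ objectives occur simultaneously under a generalized Rabin condition, which is not available off the shelf; both of these are carried out in Sections~\ref{sec:automata} and~\ref{sec:mean-payoff}, respectively.
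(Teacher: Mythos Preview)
Your proposal is correct and follows essentially the same route as the paper: translate to a DGRMA via Section~\ref{sec:automata}, take the product with $\mdp$, reduce the generalized Rabin mean-payoff condition on the product to per-disjunct almost-sure feasibility in end components via Section~\ref{sec:mean-payoff}, and finish with maximal reachability of the winning region. Two minor remarks on presentation: the paper handles the $\Fin(F_i)$ part by first \emph{removing} $F_i$ and recomputing MECs (rather than searching for ``sub-end-components'' inside the MECs of the full product), which is equivalent but cleaner; and the acceptance condition is actually only exponential in $|\varphi|$ (Theorem~\ref{thm:auto}), not doubly exponential as you wrote, though your looser bound still yields the same overall complexity.
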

In this section, we present an algorithm for Theorem~\ref{thm:main}. %
The skeleton of our algorithm is the same as for the standard model-checking algorithm for MDPs against LTL. 
It proceeds in three steps. Given an MDP $\mdp$ and a 
formula $\varphi$,
\begin{enumerate}
 \item compute a deterministic automaton $\A$ such that $\L(\A)=\L(\varphi)$,
 \item compute the product MDP $\product$,
 \item analyse the product MDP $\product$.
\end{enumerate}
In the following, we concretize these three steps to fit our setting. 

\mypara{1.~Deterministic automaton}
For ordinary LTL, usually a Rabin automaton or a generalized Rabin automaton is constructed~\cite{prism,ltl2dstar,cav14,atva14}.
Since in our setting, along with $\omega$-regular language the specification also includes quantitative constraints over runs, we generate a DGRMA. The next theorem is the first black box, detailed in Section~\ref{sec:automata}.

\begin{theorem}\label{thm:auto}
For any $\fltlfragment$ formula, there is a DGRMA $\A$, constructible in doubly exponential time, such that $\L(\A)=\L(\varphi)$, and the acceptance condition is of exponential size.
\end{theorem}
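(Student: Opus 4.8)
The plan is to proceed by structural induction on the formula $\varphi$, following the "logic to deterministic automata" methodology of~\cite{atva13} but adding, for each $\Gf{\bowtie p}_{\ext}$ subformula, a dedicated mean-payoff component. I first recall the key feature of the $\ltlfragment$-to-automata translation: the state of the master automaton is (roughly) the current formula obtained by unfolding $\varphi$ one step at a time, quotiented by propositional equivalence; because $\U$ never occurs under $\G$, these unfolded formulas stay in a finite set (closure of $\varphi$), so the reachable state space is at most exponential. The acceptance condition must certify that the eventualities promised by the unfolding are actually met: a $\G\xi$ subformula contributes an $\accInfinite$/$\accFinite$-style requirement checked by a "slave" automaton tracking $\xi$ (an $\fltlfragment$-without-$\U$-at-top, i.e.\ an $\ltlfragment$-like, safety/co-Büchi gadget), and an $\F$ eventuality contributes a standard Büchi-type fairness conjunct. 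Taken together over all subformulas, this yields a generalized Rabin condition of exponential size over a doubly-exponential... in fact exponential state space, constructible in doubly exponential time.

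The genuinely new ingredient is the handling of $\Gf{\bowtie p}_{\ext}\xi$. For each such subformula I would attach a \emph{counting slave}: a deterministic automaton (essentially the deterministic monitor for $\xi$, which exists because $\xi$ is in the $\G$-free-enough fragment and its truth on a suffix is determined by a bounded look-ahead / a finite-memory condition) equipped with a reward function $\singlereward$ that assigns $1$ to exactly those transitions taken from states where the suffix currently satisfies $\xi$, and $0$ otherwise. Then the long-run average of $\singlereward$ along the run equals $\lrExt(\idf_{\suffix w0\models\xi}\idf_{\suffix w1\models\xi}\cdots)$, so the semantic clause $w\models\Gf{\bowtie p}_{\ext}\xi$ is captured exactly by the acceptance primitive $\accMP[\ext]{\bowtie p}{\singlereward}$. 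The overall automaton $\A$ is the product of the master with all slaves (Büchi/co-Büchi slaves for $\G$ and $\F$, counting slaves for $\Gf{\bowtie p}_{\ext}$); its acceptance condition is the Boolean combination, dictated by the Boolean structure of $\varphi$, of the individual conditions — a disjunctive normal form over conjunctions of $\accInfinite$, $\accFinite$ and $\accMP$ atoms, i.e.\ a generalized Rabin mean-payoff condition. Correctness, $\L(\A)=\L(\varphi)$, is then proved by induction: soundness because a run in $\A$ synchronously witnesses all obligations, completeness because from any $w\models\varphi$ one reads off the unique run and checks each conjunct using the induction hypothesis on the slaves.

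The main obstacle I expect is the mutual recursion between the master and the slaves in the presence of nested $\Gf{\bowtie p}_{\ext}$ and $\G$ operators: the truth of an $\fltlfragment$ subformula on a suffix (needed to define the reward function of a counting slave) itself depends on whether deeper frequency- or $\G$-subformulas hold on that suffix, so one cannot simply compute a Boolean flag "$\suffix wi\models\xi$" locally. The resolution is to let each slave automaton itself be the (recursively constructed) deterministic automaton for $\xi$, so that "$\xi$ holds on the current suffix" is decided by the acceptance status of the slave's own sub-run started at position $i$ — but this requires running, at every position, a fresh copy of the slave, i.e.\ the standard "$\suc$/ranking" trick of tracking finitely many slave-copies simultaneously (finitely many because, modulo the finite state space of the slave, only boundedly many distinct live copies matter). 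Making this precise — defining the product state space, bounding it, and verifying that the mean-payoff reward is well-defined on it and computes the right frequency — is where the real work lies; the size and time bounds ($2^{2^{O(|\varphi|)}}$ construction time, $2^{O(|\varphi|)}$ acceptance condition) then follow by the usual counting of closure elements and slave configurations. The details are deferred to Section~\ref{sec:automata}.
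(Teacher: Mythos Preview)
Your high-level picture (master automaton tracking the unfolded formula, slave automata for the ``infinitary'' subformulas, product, DNF acceptance) matches the paper's architecture, and you correctly identify the central obstacle: for a counting slave attached to $\Gf{\bowtie p}_\ext\xi$, whether $\suffix{w}{i}\models\xi$ may depend on deeper $\G$-, $\F$-, and $\Gf{\bowtie p}$-subformulas, so it is \emph{not} determined by bounded look-ahead and cannot be read off a local state.

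The gap is in your proposed resolution. You suggest letting the slave for $\xi$ be the recursively constructed DGRMA $\A_\xi$ and defining the reward at position $i$ according to whether the sub-run of $\A_\xi$ started at $i$ accepts. But acceptance of $\A_\xi$ is a tail condition (possibly itself a mean-payoff constraint), not a property of a state or transition; there is no reward function $r:Q\to\Qset$ whose long-run average equals the frequency of positions $i$ with $\suffix{w}{i}\in\L(\A_\xi)$. The ``ranking/finitely many live copies'' trick handles B\"uchi-style eventualities but does not give you a counting signal when the inner acceptance is quantitative. So as stated, the reward function of your counting slave is not well-defined within the DGRMA framework.

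The paper avoids this by a mechanism you do not mention: it does \emph{not} recurse into $\xi$. The slave LTS $\S(\xi)$ treats all inner $\rec=\fsf\cup\gsf\cup\gsf^{\bowtie}$ subformulas as atomic; because no $\U$ occurs under $\G$, this LTS is \emph{acyclic}, so every token reaches a sink in bounded time and the token count per state is bounded (this is precisely why the fragment is restricted, cf.\ Section~\ref{sec:conclusions}). A sink is a positive Boolean combination over $\rec$. The acceptance condition is then a \emph{disjunction over all assumption sets} $\R\subseteq\rec$: for a fixed $\R$, a sink is ``accepting'' iff $\R\proves$ it, the mean-payoff reward counts tokens landing in accepting sinks, and the conjunct $\acc(\R)$ simultaneously forces every formula in $\R$ to be validated by its own slave (Lemmas~\ref{lem:slave-promises}--\ref{lem:slaves-product}). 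Soundness and completeness (Propositions~\ref{prop:A-sound},~\ref{prop:A-complete}) are proved with $\R=\R(w)$ as witness. This global guess-and-check over $2^{\rec}$ is the idea that replaces your recursive construction; without it, the coupling between the master's reward and the slaves' acceptance cannot be expressed as a generalized Rabin mean-payoff condition.
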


\mypara{2.~Product}
Computing the synchronous parallel product of the MDP $\mdp=(S,A,\mathit{Act},\Delta,\inits)$ with valuation $\nu:S\to 2^{Ap}$ and the LTS $(Q,i,\delta)$ over $2^{Ap}$ underlying $\A$ is rather straightforward.
The product $\product$ is again an MDP $(S\times Q, A\times Q,\mathit{Act}',\Delta',(\inits,\hat{q}))$ where\footnote{In order to guarantee that each action is enabled in at most one state, we have a copy of each original action for each state of the automaton.} $\mathit{Act}'((s,q))=\mathit{Act}(s)\times\{q\}$, $\hat{q} = \delta(i,\nu(\inits))$, and 
$\Delta'\big((a,q)\big)\big(({s},\bar{q})\big)$ is equal to $\Delta(a)(s)$ if $\delta(q,\nu({s}))=\bar{q}$, and to $0$ otherwise.
We lift acceptance conditions $\acc$ of $\A$ to $\product$: a run of $\product$ satisfies $\acc$ if its projection to the component of the automata states satisfies $\acc$.\footnote{Technically, the projection should be preceded by $i$ to get a run of the automaton, but the acceptance does not depend on any finite prefix of the sequence of states.} 

\mypara{3.~Product analysis} The MDP $\product$ is solved with respect to $\acc$, i.e., a strategy in $\product$ is found that maximizes the probability of satisfying $\acc$.
Such a strategy then induces a (history-dependent) strategy on $\mdp$ in a straightforward manner.
Observe that for DGRMA, it is sufficient to consider the setting with 
\begin{align}\label{eq:alg-acc}
\acc=\bigvee_{i=1}^k(\Fin(F_i)\wedge\acc_i')
\end{align}
where $\acc_i'$ is a conjunction of several $\Inf$ and $\accMPname$ (in contrast with a Rabin condition used for ordinary LTL where $\acc_i'$ is simply of the form $\Inf(I_i)$). 
Indeed, one can replace each $\bigwedge_{j}\Fin(F_j)$ by $\Fin(\bigcup_j F_j)$ to obtain the desired form, since avoiding several sets is equivalent to avoiding their union.

For a condition of the form (\ref{eq:alg-acc}), the solution is obtained as follows:
\begin{enumerate}
\item For $i=1,2,\ldots, k$:
\begin{enumerate}
\item Remove the set of states $F_i$ from the MDP.
\item Compute the MEC decomposition.
\item Mark each MEC $C$ as winning iff $\accmec(C,\acc_i')$ returns Yes. 
\item Let $W_i$ be the componentwise union of winning MECs above.
\end{enumerate}
\item Let $W$ be the componentwise union of all $W_i$ for $1\le i\le k$.
\item Return the 
maximal probability to reach the set $W$ in the MDP.
\end{enumerate}
The procedure $\accmec(C,\acc_i')$ is the second black box used in our algorithm, detailed in Section~\ref{sec:mean-payoff}. It decides, whether the maximum probability of satisfying $\acc_i'$ in $C$ is $1$ (return Yes), or $0$ (return No).

\begin{theorem}\label{thm:mec}
For a strongly connected MDP $\mdp$ and a generalized B\"uchi mean-payoff acceptance condition $\acc$, %
the maximal probability to satisfy $\acc$ is either $1$ or $0$, and is the same for all initial states.
Moreover, there is a polynomial-time algorithm that computes this probability, and also outputs a witnessing strategy if the probability is $1$.
\end{theorem}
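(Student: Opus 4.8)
The plan is to reduce the statement to a combination of two known phenomena on strongly connected MDPs: the zero--one law for $\omega$-regular (here, generalized Büchi) objectives, and the characterization of achievable mean-payoff values via expected frequencies of actions. First I would observe that the generalized Büchi mean-payoff condition $\acc$ is a finite conjunction $\bigwedge_j \Inf(S_j) \wedge \bigwedge_\ell \accMP[\ext_\ell]{\bowtie_\ell p_\ell}{\singlereward_\ell}$. Since $\mdp$ is strongly connected, the only maximal end component is $(S,A)$ itself, and every state is reachable from every other with probability one under a suitable strategy; this already gives that the optimal value is independent of the initial state, so it suffices to fix an arbitrary $\inits$.

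The heart of the argument is to show the value is $0$ or $1$ and to identify when it is $1$. In one direction, if there is \emph{no} strategy achieving probability $1$, I would argue the value must in fact be $0$. The idea: suppose some strategy $\sigma$ achieves $\Pr{\sigma}{}{\acc} = c > 0$. By decomposing the runs of $\mdp^\sigma$ according to the (a.s.-existing) set of infinitely-visited state-action pairs, almost every run eventually behaves like a run confined to some sub-end-component, and on the event ``$\acc$ holds'' the mean-payoff limits $\lrExt$ of the reward sequences take prescribed values. A standard averaging/concatenation argument — splice together, with appropriate frequencies, finite detours that realize each required long-run behaviour — then produces a \emph{single} strategy (exploiting strong connectedness to move freely between the relevant regions) under which $\acc$ holds almost surely. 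Thus $c>0$ implies value $1$, giving the dichotomy. For the Büchi part this is classical; the mean-payoff part is where I expect to lean on the machinery of~\cite{BBC+14,lics15}: realizable long-run average reward vectors on a strongly connected MDP form a convex set described by expected action frequencies satisfying the balance equations, and a point in the relative interior can be attained simultaneously for both $\lrInf$ and $\lrSup$ (the two limits coincide a.s.\ for a frequency-based strategy). Handling $\lrSup$-constraints together with $\lrInf$-constraints is the main obstacle, since the cited results are stated only for $\lrInf$: I would address this by noting that a stationary (or ultimately-periodic-in-frequency) strategy makes the Cesàro averages converge almost surely, so $\lrInf = \lrSup$ on almost every run, collapsing both families of constraints to the same linear-programming feasibility question over action frequencies.

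Concretely, the algorithm I would extract is: (i) set up the linear program with variables $x_a \geq 0$ for $a \in A$, the flow-conservation constraints $\sum_{a \in \act{s}} x_a = \sum_{a \in A} x_a \cdot \delta(a)(s)$ for all $s$, normalization $\sum_a x_a = 1$, and for each mean-payoff conjunct the constraint $\sum_a x_a\, \singlereward_\ell(s_a) \bowtie_\ell p_\ell$ where $s_a$ is the state in which $a$ is enabled; (ii) additionally require, for each Büchi set $S_j$, that $\sum_{a \text{ enabled in } S_j} x_a > 0$, and that the support of $x$ is itself an end component reaching all $S_j$ (a reachability/connectivity check on the support subgraph). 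If this LP is feasible with a solution whose support induces a strongly connected sub-MDP meeting every $S_j$, answer Yes and build the witness strategy as the "frequency strategy" that in the long run plays action $a$ with probability proportional to $x_a$ — made proper by a first phase that reaches the support and by interleaving with a recurrence schedule ensuring each $S_j$ is hit infinitely often; otherwise answer No. Feasibility of a rational LP and the connectivity checks are all polynomial, so this meets the stated complexity. The remaining technical care is the equivalence between "LP feasible" and "probability-$1$ achievable", which follows from the frequency-strategy analysis in~\cite{BCFK13,lics15} adapted as above; the subtle point, and the one I would write out in full, is the almost-sure coincidence of $\lrInf$ and $\lrSup$ under the constructed strategy, which is what lets a single strategy satisfy mixed limit-inferior and limit-superior constraints at once.
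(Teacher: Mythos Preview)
Your reduction to a \emph{single} frequency vector is where the argument breaks. The claim that ``a stationary strategy makes $\lrInf=\lrSup$ almost surely, collapsing both families of constraints to the same LP'' is correct for the strategy you construct, but it is \emph{not} correct that satisfiability of $\acc$ implies feasibility of a single-flow LP. Take a one-state MDP with two self-loop actions $a,b$, rewards $\singlerewardalt_1(a)=1,\singlerewardalt_1(b)=0$ and $\singlerewardalt_2(a)=0,\singlerewardalt_2(b)=1$, and $\acc=\accMP[\sup]{\ge 1}{\singlerewardalt_1}\wedge\accMP[\sup]{\ge 1}{\singlerewardalt_2}$. No frequency vector $(x_a,x_b)$ with $x_a+x_b=1$ satisfies $x_a\ge 1$ and $x_b\ge 1$, so your LP is infeasible; yet the strategy that alternates between $a$-blocks and $b$-blocks of super-exponentially growing length achieves both limsup averages equal to $1$ with probability $1$. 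Your algorithm would answer No on an instance whose correct answer is Yes.

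The paper's route avoids this by introducing \emph{one flow per limsup constraint}: variables $x_{i,a}$ for $1\le i\le n$, each satisfying the balance equations, each satisfying \emph{all} the $\lrInf$-constraints, and the $i$th flow additionally satisfying only the $i$th $\lrSup$-constraint (this is the system $L$ in Proposition~\ref{thm:mp-main}). From a feasible solution one obtains strategies $\sigma_1,\dots,\sigma_n$ via~\cite{lics15}, each realizing its flow and visiting every state infinitely often; the witnessing strategy then cycles through $\sigma_1,\dots,\sigma_n$ in rounds of doubly growing length, which preserves all $\lrInf$-bounds (since every $\sigma_i$ meets them) while pushing each $\lrSup$-average to its target infinitely often. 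Conversely, from a run satisfying $\acc$ one extracts, for each $i$, a subsequence along which the $i$th limsup is attained and action frequencies converge, yielding the $i$th flow. Your single-flow idea is exactly the special case $n=1$; for $n\ge 2$ the multi-flow formulation and the round-robin combination of the $\sigma_i$ are the missing ingredients.
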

The procedure is rather complex in our case, as opposed to 
standard cases such as Rabin condition, where a MEC is accepting for $\acc_i'=\Inf(I_i)$ if its states intersect $I_i$; or
a generalized Rabin condition~\cite{cav13}, where a MEC is accepting for $\acc_i'=\bigwedge_{j=1}^{\ell_i}\Inf(I_{ij})$ if its states
intersect with each $I_i^j$, for $j=1,2,\ldots,\ell_i$.
\smallskip

\mypara{Finishing the proof of Theorem~\ref{thm:main}}
Note that for MDPs that are not strongly connected, the maximum probability might not be in $\{0,1\}$.
Therefore, the problem is decomposed into a qualitative satisfaction problem in step 1.(c) and a quantitative reachability problem in step 3. 
Consequently, the proof of correctness is the same as the proofs for LTL via Rabin automata \cite{BP08} and generalized Rabin automata \cite{cav13}.
The complexity follows from Theorem~\ref{thm:auto} and~\ref{thm:mec}. Finally, the overall witness strategy first reaches the winning MECs and if they are reached it switches to the witness strategies from Theorem~\ref{thm:mec}.

\begin{remark}
We remark that by a simple modification of the product construction above and of the proof of Theorem~\ref{thm:mec}, we obtain an algorithm synthesising a strategy achieving a given bound w.r.t.\ multiple mean-payoff objectives (with a combination of superior and inferior limits) and (generalized) Rabin acceptance condition for \emph{general} (not necessarily strongly connected) MDP.
\end{remark}

\tikzset{
			state/.style={
				rectangle,
				rounded corners,
				draw=black,
				minimum height=1.5em,
				minimum width=2em,
				inner sep=2pt,
				text centered,
			}
		}

\section{Automata characterization of \fltlfragment}
\label{sec:automata}

In this section, we prove Theorem~\ref{thm:auto}. We give an algorithm for translating a given $\fltlfragment$ formula $\varphi$ into a deterministic generalized Rabin mean-payoff automaton $\A$ that recognizes words satisfying $\varphi$.
For the rest of the section, let $\varphi$ be an $\fltlfragment$ formula. 
Further, $\fsf$, $\gsf$, $\gsf^{\bowtie}$, and $\sf$ denote the set of $\F$-, $\G$-, $\Gf{\bowtie p}_{\ext}$-, and non-Boolean subformulas of $\varphi$, respectively. 

In order to obtain an automaton for the formula, we first need to give a more operational view on fLTL. 
To this end, we use expansions of the formulae in a very similar way as they are used, for instance, in tableaux techniques for LTL translation to automata, or for deciding LTL satisfiability.
We define a symbolic one-step unfolding (expansion) $\un$ of a formula inductively by the rules below. 
Further, for a valuation $\nu\subseteq Ap$, we define the ``next step under $\nu$''-operator. This operator (1) substitutes unguarded atomic propositions for their truth values, and (2) peels off the outer $\X$-operator whenever it is present. Formally, we define

\noindent
\begin{minipage}{0.5\linewidth}
	\begin{align*}
		\un(\psi_1\wedge\psi_2)&=\un(\psi_1)\wedge\un(\psi_2)\\ 
		\un(\psi_1\vee\psi_2)&=\un(\psi_1)\vee\un(\psi_2)\\
		\un(\F\psi_1)&=\un(\psi_1)\vee\X\F\psi_1\\
		\un(\G\psi_1)&=\un(\psi_1)\wedge\X\G\psi_1\\
		\un(\psi_1\U\psi_2)&=\un(\psi_2){\vee}\big(\un(\psi_1){\wedge} \X(\psi_1\U\psi_2)\big)\\
		\un(\Gf{\bowtie p}_\ext\psi_1)&=\true\wedge\X\Gf{\bowtie p}_\ext\psi_1\\
		\un(\psi) &= \psi \text{ for any other $\psi$}\\
	\end{align*}
\end{minipage}
\hspace*{-1em}
\begin{minipage}{0.4\linewidth}
	\begin{align*}
		(\psi_1\wedge \psi_2)[\nu]&= \psi_1[\nu] \wedge \psi_2[\nu]\\
		(\psi_1\vee \psi_2)[\nu]&= \psi_1[\nu] \vee \psi_2[\nu]\\
		a[\nu]&=
		\begin{cases}
			\true &\text{if }a\in\nu\\
			\false &\text{if }a\notin\nu
		\end{cases}\\
		\neg a[\nu]&=
		\begin{cases}
			\false &\text{if }a\in\nu\\
			\true &\text{if }a\notin\nu
		\end{cases}\\
		(\X\psi_1)[\nu]&= \psi_1\\
		\psi[\nu] &= \psi \text{ for any other $\psi$}
	\end{align*}
\end{minipage}
Note that after unfolding, a formula becomes a positive Boolean combination over literals (atomic propositions and their negations) and $\X$-formulae.
The resulting formula is LTL-equivalent to the original formula. 
The formulae of the form $\Gf{\bowtie p}_\ext\psi$ have ``dummy’’ unfolding; they are dealt with in a special way later.
Combined with unfolding, the ``next step''-operator then preserves and reflects satisfaction on the given word:

\begin{lemma}\label{lem:unfolding}
For every word $w$ and $\fltlfragment$ formula $\varphi$, we have 
$w\models \varphi$ if and only if $\suffix{w}{1} \models (\un(\varphi))[w[0]]$.
\end{lemma}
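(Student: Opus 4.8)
The plan is to prove Lemma~\ref{lem:unfolding} by structural induction on $\varphi$, showing simultaneously that for every word $w$ and every subformula $\psi$ of $\varphi$ (hence every \fltlfragment{} formula), $w\models\psi$ iff $\suffix{w}{1}\models(\un(\psi))[w[0]]$. Before the induction it helps to record two easy facts about the two operators in isolation. First, $\un$ is \emph{LTL-equivalent}: $\L(\un(\psi))=\L(\psi)$, which follows by a trivial induction using the standard LTL expansion laws $\F\psi_1\equiv\psi_1\vee\X\F\psi_1$, $\G\psi_1\equiv\psi_1\wedge\X\G\psi_1$, $\psi_1\U\psi_2\equiv\psi_2\vee(\psi_1\wedge\X(\psi_1\U\psi_2))$, and for $\Gf{\bowtie p}_\ext\psi_1$ the fact that membership of a word in $\L(\Gf{\bowtie p}_\ext\psi_1)$ is unchanged by prepending one letter (the mean-payoff limit ignores finite prefixes), so $\Gf{\bowtie p}_\ext\psi_1\equiv\X\Gf{\bowtie p}_\ext\psi_1\equiv\true\wedge\X\Gf{\bowtie p}_\ext\psi_1$. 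Second, for a formula $\chi$ that is already a positive Boolean combination of literals and $\X$-formulae (the shape of $\un(\psi)$), the substitution $\chi[\nu]$ exactly captures the ``read one letter'' step: $w\models\chi$ iff $\suffix{w}{1}\models\chi[\nu]$ whenever $\nu=w[0]$. This is proved by induction on $\chi$: the Boolean cases are immediate; for a literal $a$ (or $\neg a$), $w\models a$ iff $a\in w[0]=\nu$ iff $a[\nu]=\true$ iff $\suffix{w}{1}\models a[\nu]$; and for $\X\psi_1$, $w\models\X\psi_1$ iff $\suffix{w}{1}\models\psi_1=(\X\psi_1)[\nu]$ by the semantics of $\X$.

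Combining the two facts gives the lemma directly: $w\models\varphi$ iff $w\models\un(\varphi)$ (by LTL-equivalence of $\un$), iff $\suffix{w}{1}\models(\un(\varphi))[w[0]]$ (by the substitution fact applied to $\chi=\un(\varphi)$, which by construction is a positive Boolean combination of literals and $\X$-formulae). So the only real content is the verification that $\un(\varphi)$ indeed has that syntactic shape and that each expansion rule is LTL-sound; everything else is bookkeeping.

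I would nonetheless present it as a single combined induction on $\varphi$, treating the cases in the order $\true/\false$, literals, $\wedge$, $\vee$, $\X$, $\F$, $\G$, $\U$, $\Gf{\bowtie p}_\ext$. The conjunction and disjunction cases reduce to the induction hypotheses and the obvious compatibility of $\un$ and $[\cdot]$ with $\wedge,\vee$. For $\X\psi_1$: $\un(\X\psi_1)=\X\psi_1$, $(\X\psi_1)[w[0]]=\psi_1$, and $w\models\X\psi_1$ iff $\suffix{w}{1}\models\psi_1$, done without even invoking the IH. For $\F\psi_1$: $\un(\F\psi_1)=\un(\psi_1)\vee\X\F\psi_1$, so $(\un(\F\psi_1))[w[0]]=(\un(\psi_1))[w[0]]\vee\F\psi_1$; then $\suffix{w}{1}$ satisfies this iff either $\suffix{w}{1}\models(\un(\psi_1))[w[0]]$, i.e.\ $w\models\psi_1$ by IH, or $\suffix{w}{1}\models\F\psi_1$; and $w\models\F\psi_1$ iff $\suffix{w}{0}\models\psi_1$ or $\suffix{w}{k}\models\psi_1$ for some $k\ge1$, which matches. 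The $\G$ and $\U$ cases are analogous, using the dual/until expansion. The $\Gf{\bowtie p}_\ext$ case is where to be slightly careful: $\un(\Gf{\bowtie p}_\ext\psi_1)=\true\wedge\X\Gf{\bowtie p}_\ext\psi_1$, so the right-hand side becomes $\suffix{w}{1}\models\Gf{\bowtie p}_\ext\psi_1$, and one must argue $w\models\Gf{\bowtie p}_\ext\psi_1$ iff $\suffix{w}{1}\models\Gf{\bowtie p}_\ext\psi_1$ — i.e.\ that $\lrExt$ of the indicator sequence is insensitive to dropping the first term. This holds because $\frac1i\sum_{j=1}^{i}q_j$ and $\frac1i\sum_{j=2}^{i+1}q_j$ differ by at most $2/i\to0$, so the two sequences have the same $\liminf$ and $\limsup$; note that here we specifically do \emph{not} appeal to the induction hypothesis on $\psi_1$, because the unfolding of a frequency-globally formula is deliberately ``dummy'' and $\psi_1$ reappears unchanged under the $\X$.

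The main obstacle, such as it is, is purely a matter of matching the off-by-one indexing in the $\Gf{\bowtie p}_\ext$ case — reconciling the definition $\lrExt(q_1q_2\cdots)=\limext_i \frac1i\sum_{j=1}^{i}q_j$ with the fact that the relevant indicator sequence in the semantics is $\idf_{\suffix{w}{0}\models\varphi}\idf_{\suffix{w}{1}\models\varphi}\cdots$ and that, after one unfolding step, we are comparing the tail starting at $\suffix{w}{1}$ with the original starting at $\suffix{w}{0}$. Once the (standard) fact that a Cesàro-type average is prefix-independent is invoked, all cases close routinely, and the lemma follows.
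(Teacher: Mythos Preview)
Your proposal is correct and follows essentially the same approach as the paper: a structural induction on $\varphi$, with the non-trivial content being the LTL expansion laws for $\F$, $\G$, $\U$ and the prefix-independence of the Ces\`aro-type average for the $\Gf{\bowtie p}_\ext$ case. The paper presents exactly this single combined induction (showing representative cases $a$, $\F\psi$, and $\Gf{\bowtie p}_{\inf}\psi$); your additional two-step decomposition into ``$\un$ is LTL-equivalent'' plus ``$[\nu]$ captures one-letter consumption on the syntactic image of $\un$'' is a clean modular repackaging of the same argument, not a different route.
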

The construction of $\A$ proceeds in several steps. 
We first construct a ``master'' transition system, which monitors the formula and transforms it in each step to always keep exactly the formula that needs to be satisfied at the moment. 
However, this can only deal with properties whose satisfaction has a finite witness, e.g. $\F a$. 
Therefore we construct a set of ``slave'' automata, which check whether ``infinitary'' properties (with no finite witness), e.g., $\F\G a$, hold or not. 
They pass this information to the master, who decides on acceptance of the word.

\subsection{Construction of master transition system $\T$}

We define a LTS $\T=(Q,\varphi,\delta^{\T})$  over $2^{Ap}$ by letting
$Q$ be the set
 of positive Boolean functions\footnote{We use Boolean functions, i.e. classes of propositionally equivalent formulae, to obtain a finite state space. To avoid clutter, when referring to such a Boolean function, we use some formula representing the respective equivalence class.
 	The choice of the representing formula is not relevant since, for all operations we use, the propositional equivalence is a congruence, see
 \ifx\techreport\undefined\cite{techreport}\else Appendix~\ref{app:bool}\fi. Note that, in particular, $\true,\false\in Q$. } over $\sf$, by letting
 $\varphi$ be the initial state, and by letting the transition function $\delta^\T$, for every $\nu\subseteq Ap$ and $\psi \in Q$, contain 
 $\psi \tran\nu (\un(\psi))[\nu]$.

The master automaton 
keeps the property that is still required up to date:

\begin{lemma}[Local (finitary) correctness of master LTS]\label{lem:master-local}
	Let $w$ be a word and $\mathcal \T(w)=\varphi_0\varphi_1\cdots$ the corresponding run. Then for all $n\in\mathbb N$, we have $w\models\varphi$ if and only if $\suffix{w}{n} \models\varphi_n$.
\end{lemma}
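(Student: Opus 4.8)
The statement is a straightforward induction on $n$, using Lemma~\ref{lem:unfolding} as the single-step engine. First I would observe that the base case $n=0$ is trivial, since $\varphi_0 = \varphi$ and $\suffix{w}{0} = w$, so the equivalence $w \models \varphi \iff \suffix{w}{0} \models \varphi_0$ holds by definition. For the inductive step, suppose the claim holds for some $n$, i.e.\ $w \models \varphi$ iff $\suffix{w}{n} \models \varphi_n$. By the definition of $\delta^{\T}$, the run satisfies $\varphi_{n+1} = (\un(\varphi_n))[w[n]]$. Now apply Lemma~\ref{lem:unfolding} to the word $\suffix{w}{n}$ and the formula $\varphi_n$: its first letter is $(\suffix{w}{n})[0] = w[n]$ and its suffix $\suffix{(\suffix{w}{n})}{1}$ is exactly $\suffix{w}{n+1}$, so the lemma gives $\suffix{w}{n} \models \varphi_n$ iff $\suffix{w}{n+1} \models (\un(\varphi_n))[w[n]] = \varphi_{n+1}$. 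Chaining this with the induction hypothesis yields $w \models \varphi$ iff $\suffix{w}{n+1} \models \varphi_{n+1}$, completing the induction.

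There are two small technical points I would want to be careful about, though neither is a genuine obstacle. The first is that Lemma~\ref{lem:unfolding} is stated for $\fltlfragment$ formulae, so I should check that every $\varphi_n$ appearing along the run is still (a Boolean function over) $\fltlfragment$ formulae; this follows because $\un$ and the substitution $[\nu]$ only ever produce positive Boolean combinations of literals and $\X$-formulae over subformulae of $\varphi$, and the states of $\T$ are by construction positive Boolean functions over $\sf(\varphi)$, which stays within the fragment. The second is the passage from formulae to Boolean functions (propositional equivalence classes): I would invoke the footnoted congruence property, namely that $\un$, $[\nu]$, and the satisfaction relation $\models$ all respect propositional equivalence, so that working with representatives is harmless and Lemma~\ref{lem:unfolding} lifts to the quotient.

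I do not anticipate a hard part here; the lemma is essentially the inductive closure of Lemma~\ref{lem:unfolding} along a run, and the only care needed is the bookkeeping about staying inside the fragment and respecting the quotient by propositional equivalence, both of which are handled by the setup already in place.
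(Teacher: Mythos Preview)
Your proposal is correct and essentially identical to the paper's own proof: both proceed by induction on $n$, invoke Lemma~\ref{lem:unfolding} for the single step, and use the definition of $\delta^{\T}$ to identify $\varphi_{n+1}$ with $(\un(\varphi_n))[w[n]]$. The additional remarks you make about staying within the fragment and about the quotient by propositional equivalence are sound bookkeeping; the paper simply leaves them implicit.
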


\begin{example}
The formula $\varphi=a\wedge \X(b\U a)$ yields a master LTS depicted below.
\begin{center}		
		\begin{tikzpicture}[x=4cm,y=1.5cm,font=\footnotesize,initial text=,outer sep=0.5mm]
		\tikzstyle{acc}=[double]
		\node[state,initial] (i) at (0,0) {$a\wedge \X(b\U a)$};
		\node[state] (t) at (1,-0.9) {$\true$};
		\node[state] (f) at (0,-0.9) {$\false$};
		\node[state] (u) at (1,0) {$b\U a$};

		\path[->] 
		(i) edge node[left]{$\emptyset,\{b\}$} (f)
		    edge node[pos=0.3,below]{$\{a\},\{a,b\}$} (u)
		(u) edge[loop right, max distance=8mm,in=-10,out=10,looseness=10] node[right]{$\{b\}$} (u)
		    edge[bend left=15] node[below,pos=0.3] {$\emptyset$} (f)
		    edge node[right] {$\{a\},\{a,b\}$} (t)
		(t) edge[loop right, max distance=8mm,in=-10,out=10,looseness=10] node[right]{$\emptyset,\{a\},\{b\},\{a,b\}$} (t)   
		(f) edge[loop left, max distance=8mm,in=170,out=190,looseness=10] node[left]{$\emptyset,\{a\},\{b\},\{a,b\}$} (f)   
		;
		\end{tikzpicture}
\end{center}
\end{example}

One can observe that for an fLTL formula $\varphi$ with no $\G$- and $\Gf{\bowtie p}_\ext$-operators,
we have $w\models\varphi$ if{}f the state $\true$ is reached while reading $w$.
However, for formulae with $\G$-operators (and thus without finite witnesses in general), this claim no longer holds.
To check such behaviour we construct auxiliary ``slave'' automata.

\subsection{Construction of slave transition systems $\S(\xi)$}

We define a LTS $\S(\xi)=(Q,\xi,\delta^{\S})$ over $2^{Ap}$
with the same state space as $\T$ and the initial state $\xi \in Q$.
Furthermore, we call a state $\psi$ a \emph{sink}, written $\psi\in\sink$, iff for all $\nu\subseteq Ap$ we have $\psi[\nu]=\psi$.
Finally, 
the transition relation $\delta^{\S}$, for every $\nu\subseteq Ap$ and $\psi\in Q \setminus \sink$, contains 
$
\psi\tran{\nu}\psi[\nu] 
$.

\begin{example}\label{ex:slave}
The slave LTS for the formula $\xi=a\vee b\vee\X(b\wedge \G \F a)$ has a structure depicted in the following diagram:
	\begin{center}	
		\begin{tikzpicture}[x=4cm,y=1.5cm,font=\footnotesize,initial text=,outer sep=0.5mm]
		\tikzstyle{acc}=[double]
		\node[state,initial] (i) at (0,0) {$a\vee b\vee\X(b\wedge\G\F a)$};
		\node[state] (t) at (0.3,-0.5) {$\true$};
		\node[state] (a) at (1,0) {$b\wedge(\G\F a)$};
		\node[state] (f) at (1.3,-0.5) {$\false$};
		\node[state] (g) at (2.2,0) {$\G\F a$};
		
		\draw[->,rounded corners] (i.-110) |- (t) node[pos=0.4,left]{$\{a\},\{b\},\{a,b\}$};
		\draw[->,rounded corners] (i) edge node[below]{$\emptyset$} (a)
		(a) edge node[pos=0.6, below] {$\{b\},\{a,b\}$} (g)
		(a.-40) |- (f) node[pos=0.4,left] {$\emptyset,\{a\}$} 
		;
		\end{tikzpicture}
	\end{center}
Note that we do not unfold any inner \F- and \G-formulae.
Observe that if we start reading $w$ at the $i$th position and end up in $\true$, we have $\suffix w i\models \xi$.
Similarly, if we end up in $\false$ we have $\suffix w i\not\models \xi$.
This way we can monitor for which position $\xi$ holds and will be able to determine if it holds, for instance, infinitely often. 
But what about when we end up in $\G\F a$? Intuitively, this state is accepting or rejecting depending on whether $\G\F a$ holds or not.
Since this cannot be checked in finite time, we delegate this task to yet another slave, now responsible for $\G\F a$.
Thus instead of deciding whether $\G\F a$ holds, we may use it as an \emph{assumption} in the automaton for $\xi$ and let the automaton for $\G\F a$ check whether the assumption turns out correct.
\end{example}

Let $\rec := \fsf \cup \gsf \cup \gsf^{\bowtie}$. This is the set of subformulas that are potentially difficult to check in finite time.
Subsets of $\rec$ can be used as assumptions to prove other assumptions and in the end also the acceptance.
Given a set of formulae $\Psi$ and a formula $\psi$, we say that $\Psi$ 
\emph{(propositionally) proves} $\psi$, written $\Psi\proves\psi$, if $\psi$ can be deduced from formulae in $\Psi$ using only propositional reasoning (for a formal definition see
\ifx\techreport\undefined\cite{techreport}\else Appendix~\ref{app:bool}\fi). 
So, for instance, $\{\G\F a\}\proves\G\F a \vee \G b$, but $\G\F a\not\proves\F a$. 

The following is the ideal assumption set we would like our automaton to identify.
For a fixed word $w$, we denote by $\R(w)$ the set
\[
\{ \F \xi \in \fsf \mid w \models \G \F \xi  \} 
  \cup \{ \G \xi \in \gsf \mid w \models \F \G \xi  \}
  \cup \{ \Gf{\bowtie p}_\ext \xi \in \gsf^{\bowtie} \mid w \models \Gf{\bowtie p}_\ext \xi  \}
\]
 of formulae in $\rec$ eventually always satisfied on $w$.
The slave LTS is useful for recognizing whether its respective formula $\xi$ holds infinitely often, almost always, or with the given frequency.
Intuitively, it reduces this problem for a given formula to the problems for its subformulas in $\rec$:

\begin{lemma}[Correctness of slave LTS]\label{lem:slave-promises} 	
Let us fix $\xi \in \sf$ and a word~$w$. For any $\R \in \rec$, we denote by $\sat(\R)$ the set $\{i\in\Nset\mid \exists j\geq i:\R \proves \S(\xi)(\infix wij) \}$. %
Then for any $\underline{\R},\overline{\R} \subseteq \rec$ such that $\underline{\R} \subseteq \R(w) \subseteq \overline{\R}$, we have
	\begin{eqnarray}
	\sat(\underline{\R}) \text{ is infinite} \implies & w \models \G\F\xi \implies & \sat(\overline{\R}) \text{ is infinite}\\
	\Nset \setminus \sat(\underline{\R}) \text{ is finite} \implies & w \models \F\G\xi \implies & \Nset \setminus \sat(\overline{\R}) \text{ is finite}\\
    \lrExt(\big(\idf_{i\in \sat(\underline{\R})}\big)_{i=0}^\infty) \bowtie p \implies & w \models \Gf{\bowtie p}_\ext \xi \implies & \lrExt(\big(\idf_{i\in\sat(\overline{\R})}\big)_{i=0}^\infty) \bowtie p \qquad
\end{eqnarray}
\end{lemma}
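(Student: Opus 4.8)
\medskip\noindent\textbf{Proof plan.}
The plan is to reduce all three implication chains to a single statement about the fixed word $w$: that $\sat(\R(w))$ differs only on finitely many positions from $\{i : \suffix{w}{i}\models\xi\}$. Two easy monotonicity facts make this reduction work. First, $\proves$ is monotone in its left-hand side, so $\underline{\R}\subseteq\R(w)\subseteq\overline{\R}$ gives $\sat(\underline{\R})\subseteq\sat(\R(w))\subseteq\sat(\overline{\R})$. Second, each of the predicates ``$A$ infinite'', ``$\Nset\setminus A$ finite'', and ``$\lrExt((\idf_{i\in A})_{i=0}^{\infty})\bowtie p$'' is monotone in $A$ under inclusion --- for the last one because $A\subseteq B$ forces $\idf_{i\in A}\le\idf_{i\in B}$ pointwise, hence the same inequality for every partial average and therefore for $\liminf$ and $\limsup$, and $\bowtie\in\{\ge,>\}$ is upward closed. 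Hence it suffices to prove, for our fixed $w$,
\begin{align*}
\sat(\R(w)) \text{ infinite} &\iff w\models\G\F\xi,\\
\Nset\setminus\sat(\R(w)) \text{ finite} &\iff w\models\F\G\xi,\\
\lrExt(\big(\idf_{i\in\sat(\R(w))}\big)_{i=0}^{\infty})\bowtie p &\iff w\models\Gf{\bowtie p}_\ext\xi.
\end{align*}
The right-hand sides unfold, by the semantics, to ``$\suffix{w}{i}\models\xi$ for infinitely many $i$'', ``for cofinitely many $i$'', and ``$\lrExt$ of the indicator of $\{i:\suffix{w}{i}\models\xi\}$ is $\bowtie p$'' (the last one being the definition of $\Gf{\bowtie p}_\ext$). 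Since each of these predicates is insensitive to finite changes of the underlying position set --- for the frequency one because altering finitely many bits of an indicator sequence changes every partial average by $O(1/i)$ --- everything reduces to: $\sat(\R(w))$ and $\{i:\suffix{w}{i}\models\xi\}$ differ only on a finite set.

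First I would pin down the shape of slave runs. A short induction on the structure of a slave state shows that applying $[\,\cdot\,]$ resolves all unguarded literals in one step and strictly decreases the nesting depth of unguarded $\X$-operators; hence, with $d:=|\xi|$, the run of $\S(\xi)$ on any window $\infix{w}{i}{j}$ with $j\ge i+d$ ends in the same \emph{sink} state $\sigma_i$, which is a positive Boolean combination of $\true$, $\false$ and the temporal subformulas of $\xi$ that the slave leaves un-decomposed --- all of which lie in $\rec$ (for such $\xi$ no $\U$ occurs unguarded, by the \fltlfragment{} restriction). Next, $\proves$ is preserved by $[\,\cdot\,]$: if $\Psi\proves\psi$ then, $\psi$ being positive, some disjunct of a DNF of $\psi$ consists only of elements of $\Psi$ and $\true$ --- all temporal formulas or $\true$, hence untouched by $[\nu]$ --- so the same disjunct witnesses $\Psi\proves\psi[\nu]$. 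Combining this with the stabilisation of the slave, for every $\R\subseteq\rec$ we obtain $i\in\sat(\R)\iff\R\proves\sigma_i$ (for ``$\Leftarrow$'' take $j=i+d$; for ``$\Rightarrow$'' propagate the propositional proof along the transitions from the witnessing $j$ up to $i+d$).

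The core step is an ``annotated correctness'' of the slave. Along the run $\xi\tran{w[i]}\chi_1\tran{w[i+1]}\cdots\tran{w[i+d]}\sigma_i$, record for each temporal subformula $\tau$ of $\xi$ the number $k_\tau$ of $\X$-operators that originally guarded it (so $\X^{k_\tau}\tau$ becomes the bare $\tau$ after exactly $k_\tau$ steps, and $\tau[\nu]=\tau$ keeps it bare thereafter). A routine induction on the step count --- using that $\suffix{w}{m}\models\psi\iff\suffix{w}{m+1}\models\psi[w[m]]$ for $\psi$ with no unguarded temporal operator, and pre-evaluating each bare $\tau$ by the truth value of $\suffix{w}{i+k_\tau}\models\tau$ --- shows that $\suffix{w}{i}\models\xi$ iff $\sigma_i$ evaluates to true under the assignment sending each bare $\tau$ to the truth value of $\suffix{w}{i+k_\tau}\models\tau$. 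Finally, for each temporal subformula $\tau$ of $\xi$, which lies in $\rec$ and so has the form $\F\psi$, $\G\psi$, or $\Gf{\bowtie' q}_{\ext'}\psi$, the truth value of $\suffix{w}{m}\models\tau$ is eventually constant in $m$ and equals the truth value of $\tau\in\R(w)$: for $\Gf{}$-formulas since they are tail properties; for $\G\psi$ because $\G\psi$ holds either nowhere (then $w\not\models\F\G\psi$, so $\tau\notin\R(w)$) or from some position on (then $\tau\in\R(w)$); and for $\F\psi$ because $\F\psi$ holds at infinitely many positions iff at all of them iff $w\models\G\F\psi$ iff $\tau\in\R(w)$, while otherwise $\psi$, hence $\F\psi$, fails from some point on. Picking $N$ past all these (finitely many) thresholds, for $i\ge N$ the assignment above sends every bare $\tau$ to the truth value of $\tau\in\R(w)$, so $\suffix{w}{i}\models\xi$ iff $\sigma_i$ is satisfied by the valuation making exactly the elements of $\R(w)$ true, i.e.\ iff $\R(w)\proves\sigma_i$, i.e.\ iff $i\in\sat(\R(w))$. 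Thus the two sets agree on $[N,\infty)$, as required.

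I expect the main obstacle to be the two technical ingredients of the last paragraph: the annotated-correctness induction --- keeping the $\X$-guard bookkeeping coherent across transitions, in particular checking that applying $[\,\cdot\,]$ to a slave state and pre-evaluating its bare temporal subformulas at the right positions interact correctly at the step where a subformula becomes bare --- and the case analysis showing that sufficiently late suffixes agree with $\R(w)$ on every temporal subformula. Everything else (the two monotonicity facts and the robustness of the three acceptance-style predicates under finite perturbations) is routine.
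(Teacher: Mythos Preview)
Your overall strategy matches the paper's: both sandwich $\sat(\R(w))$ between $\sat(\underline{\R})$ and $\sat(\overline{\R})$ via monotonicity of $\proves$ in its hypothesis set, and both reduce everything to showing that $\sat(\R(w))$ and $\{i:\suffix{w}{i}\models\xi\}$ agree for all but finitely many~$i$. The difference is in how the core step is carried out. The paper defines a threshold $T(w)$ past which every $\tau\in\rec$ has constant truth value on suffixes, and then observes that for $t\geq T(w)$ the slave transition $\psi\mapsto\psi[\nu]$ is \emph{semantically equivalent} to the master's $\psi\mapsto\un(\psi)[\nu]$: since $\suffix{w}{u}\models\tau\iff\suffix{w}{u+1}\models\tau$ for every $\tau\in\rec$ and $u\geq T(w)$, not unfolding $\tau$ is harmless, and the local-correctness argument of Lemma~\ref{lem:master-local} applies verbatim to the slave, yielding $\suffix{w}{t}\models\xi\iff\R(w)\proves\sigma_t$ in one stroke. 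Your annotated-correctness induction reaches the same conclusion with more bookkeeping, and as stated has a small wrinkle: a temporal subformula $\tau$ can occur at several $\X$-depths in $\xi$ (take $\xi=\F a\wedge\X\F a$), so ``the number $k_\tau$ of $\X$-operators that originally guarded it'' is not well-defined per subformula, only per occurrence --- and distinct occurrences may require evaluation at different positions. This is harmless once you restrict to $i\geq N$, since past the threshold all relevant positions agree anyway; but at that point you have effectively re-derived the paper's shortcut, and the position annotations can be dropped.
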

Before we put the slaves together to determine $\R(w)$, we define \emph{slave automata}.
In order to express the constraints from Lemma~\ref{lem:slave-promises} as acceptance conditions, we need to transform the underlying LTS.
Intuitively, we replace quantification over various starting positions for runs by a subset construction.
This means that in each step we put a \emph{token} to the initial state and move all previously present tokens to their successor states.

\mypara{B\"uchi}
For a formula $\F \xi \in \fsf$, its
slave LTS $\S(\xi)=(Q,\xi,\delta^{\S})$, and $\R \subseteq \rec$,
we define a B\"uchi automaton $\SGF(\xi,\R)=(2^{Q},\{\xi\},\delta%
)$ over $2^{Ap}$ by setting
\[\Psi\tran{\nu}\{\delta^{\S}(\psi,\nu)\mid\psi\in\Psi\setminus \sink\}\cup\{\xi\} \qquad \qquad \text{for every $\nu\subseteq Ap$} \]
and the B\"uchi acceptance condition 
$\accInfinite(\{\Psi\subseteq Q\mid \exists \psi\in\Psi \cap \sink: \R\proves\psi\})$.

\smallskip

In other words, the automaton accepts if infinitely often a token ends up in an \emph{accepting sink}, i.e., element of $\sink$ that is provable from $\R$. 
For Example~\ref{ex:slave}, depending on whether we assume $\G\F a\in\R$ or not, the accepting sinks are $\true$ and $\G\F a$, or only $\true$, respectively.

\mypara{Co-B\"uchi}
For a formula $\G \xi \in \gsf$, its slave LTS $\S(\xi)=(Q,\xi,\delta^{\S})$ and $\R \subseteq \rec$, we define a co-B\"uchi automaton $\SFG(\xi,\R)=(2^{Q},\{\xi\},\delta%
)$ over $2^{Ap}$ with the same LTS as above.
It differs from the B\"uchi automaton only by having 
a co-B\"uchi acceptance condition $\accFinite(\{\Psi\subseteq Q\mid \exists \psi \in \Psi\cap \sink : \R \notproves \psi\})$.

\mypara{Mean-payoff}
For a formula $\Gf{\bowtie p}_\ext \xi\in \gsf^{\bowtie}$, its slave LTS $\S(\xi)=(Q,\xi,\delta^{\S})$, and $\R \subseteq \rec$ we define a \emph{mean-payoff automaton} 
 $\SGf(\xi,\R)= (|Q|^{Q},\idf_{\xi},\delta%
 )$ over $2^{Ap}$ so that for every $\nu\subseteq Ap$, we have $f\tran{\nu}f'$ where 
\[
f'(\psi') = \idf_\xi(\psi')+\sum_{\delta^\S(\psi,\nu) = \psi'} f(\psi).
\]
Intuitively, we always count the number of tokens in each state. When a step is taken, all tokens moving to a state are summed up and, moreover, one token is added to the initial state. 
Since the slave LTS is acyclic the number of tokens in each state is bounded.

Finally, the acceptance condition is $\accMP[\ext]{\bowtie p}{\singlereward(\R)}$ where the function $\singlereward(\R)$ assigns to every state $f$ the reward:
\[
\sum_{\psi\in\sink,\R \proves \psi} f(\psi).
\]
Each state thus has a reward that is the number of tokens in accepting sinks. 
Note that each token either causes a reward 1 once per its life-time when it reaches an accepting sink, or never causes any reward in the case when it never reaches any accepting state.

\begin{lemma}[Correctness of slave automata]\label{lem:slaves-acc}
Let $\xi \in \sf$, $w$, and $\underline{\R},\overline{\R} \subseteq \rec$ be such that $\underline{\R} \subseteq \R(w) \subseteq \overline{\R}$. Then 
	\begin{eqnarray}
		w \in \L(\SGF(\xi,\underline{\R}))  \implies & w \models \G\F \xi & \implies w \in \L(\SGF(\xi,\overline{\R}))\\
		w \in \L(\SFG(\xi,\underline{\R}))  \implies & w \models \F\G \xi & \implies w \in \L(\SFG(\xi,\overline{\R}))\\
		w \in \L(\SGf(\xi,\underline{\R}))  \implies & w \models \Gf{\bowtie p}_\ext \xi & \implies w \in \L(\SGf(\xi,\overline{\R}))
	\end{eqnarray}
\end{lemma}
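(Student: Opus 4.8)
The plan is to derive Lemma~\ref{lem:slaves-acc} directly from Lemma~\ref{lem:slave-promises} by checking that each of the three slave automata constructions faithfully encodes the corresponding $\sat$-condition from that lemma. The key observation is the token-bookkeeping invariant: after reading the prefix $\infix{w}{0}{n-1}$ (i.e.\ $n$ steps), the set of tokens present in $\SGF(\xi,\R)$ (resp.\ $\SFG(\xi,\R)$) is exactly $\{\S(\xi)(\infix{w}{i}{n-1}) \mid 0 \le i \le n, \text{ the run from position } i \text{ has not yet entered a sink}\}$ plus the freshly inserted token $\xi$; and for the mean-payoff automaton, the counter $f$ after $n$ steps satisfies $f(\psi') = |\{i \le n \mid \S(\xi)(\infix{w}{i}{n-1}) = \psi'\}|$ (with the natural truncation when a token reaches a sink and is removed, for the B\"uchi/co-B\"uchi versions; for the mean-payoff version tokens in sinks stay put but, since sinks are self-loops, this is harmless). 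I would first state and prove this invariant by a straightforward induction on $n$, using the definition of $\delta$ in each automaton (``add a token to $\xi$, push all non-sink tokens forward'') together with the fact that $\S(\xi)(\infix{w}{i}{n}) = \delta^{\S}(\S(\xi)(\infix{w}{i}{n-1}), w[n])$.

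Given the invariant, I would translate the three acceptance conditions into statements about $\sat(\R)$. For the B\"uchi case: the run visits the accepting set $\{\Psi \mid \exists \psi \in \Psi\cap\sink : \R \proves \psi\}$ at step $n$ iff some token present at step $n$ is an accepting sink, i.e.\ iff there is $i$ with $\R \proves \S(\xi)(\infix{w}{i}{n-1})$ and $\S(\xi)(\infix{w}{i}{n-1}) \in \sink$; since once a run enters a sink it stays there, this happens for infinitely many $n$ precisely when $\sat(\R)$ is infinite. Hence $w \in \L(\SGF(\xi,\R)) \iff \sat(\R) \text{ is infinite}$. For the co-B\"uchi case, the automaton accepts iff the ``bad'' set $\{\Psi \mid \exists \psi \in \Psi\cap\sink : \R \notproves \psi\}$ is visited only finitely often, i.e.\ iff from some point on every token that has reached a sink has reached a sink provable from $\R$; using again the stability of sinks, this is exactly $\Nset \setminus \sat(\R)$ being finite. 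For the mean-payoff case, the reward at step $n$ is $\sum_{\psi \in \sink,\, \R \proves \psi} f(\psi)$, which by the invariant equals $|\{i \le n \mid \S(\xi)(\infix{w}{i}{\,\cdot\,}) \text{ is an } \R\text{-provable sink by step } n\}|$; one then argues that the long-run average of these per-step rewards coincides with $\lrExt((\idf_{i \in \sat(\R)})_{i=0}^\infty)$ — each index $i \in \sat(\R)$ contributes exactly $1$ to the reward from the first step at which its token reaches an accepting sink onwards, and since that token then stays there, the Cesàro averages of the two sequences differ only by a vanishing correction term, so the two $\lrExt$-values are equal. Thus $w \in \L(\SGf(\xi,\R)) \iff \lrExt((\idf_{i \in \sat(\R)})_{i=0}^\infty) \bowtie p$.

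With these three equivalences in hand, the lemma follows immediately: instantiating Lemma~\ref{lem:slave-promises} with $\underline{\R}$ and $\overline{\R}$, the left implication of each line of Lemma~\ref{lem:slave-promises} reads ``$\sat(\underline{\R})$-condition $\implies w\models\ldots$'', which by the $\underline{\R}$-instance of the equivalence becomes ``$w \in \L(\cdot\,(\xi,\underline{\R})) \implies w\models\ldots$''; similarly the right implication becomes ``$w\models\ldots \implies w \in \L(\cdot\,(\xi,\overline{\R}))$''. This is precisely the claimed chain of implications.

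The main obstacle I anticipate is the mean-payoff argument: matching $\lrExt$ of the per-step reward sequence $r(\R)(\rho[0])r(\R)(\rho[1])\cdots$ to $\lrExt$ of the indicator sequence $(\idf_{i \in \sat(\R)})_i$ requires care. The subtlety is that a token inserted at position $i$ that does reach an accepting sink does so only after some delay (possibly unbounded as $i$ varies, though bounded for a fixed slave since $\S(\xi)$ is acyclic and finite — so in fact the delay is uniformly bounded by $|Q|$, which is the saving grace), and once there it contributes $1$ to every subsequent reward rather than a single pulse. So the reward at step $n$ roughly counts $|\{i : i \le n - \text{delay}_i\} \cap \sat(\R)|$ rather than $|\{i \le n\} \cap \sat(\R)|$; one must check that dividing by $n$ and taking $\liminf$/$\limsup$ washes out both the bounded per-token delay and the fact that the reward is an accumulated count rather than a running count of ``active'' accepting events. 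Because the delay is bounded by $|Q|$, the two partial sums up to $n$ differ by at most $|Q|$, so after dividing by $n$ the difference tends to $0$ and the $\lrExt$-values coincide; spelling this out cleanly (for both $\inf$ and $\sup$, and both $\geq$ and $>$, noting $\bowtie$ is insensitive to the vanishing perturbation in the limit) is the one genuinely technical point. The B\"uchi and co-B\"uchi cases, by contrast, are essentially immediate once the token invariant and sink-stability are recorded.
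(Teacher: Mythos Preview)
Your overall strategy---reduce to Lemma~\ref{lem:slave-promises} by proving, for each of the three automata and any $\R$, the equivalence between acceptance and the corresponding $\sat(\R)$-condition---is exactly the paper's approach. The B\"uchi and co-B\"uchi cases are essentially fine once phrased correctly.

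The genuine gap is in your model of the mean-payoff automaton. You write that ``for the mean-payoff version tokens in sinks stay put'' and that the reward at step $n$ is the \emph{accumulated} count $|\{i\le n : \text{token $i$ has reached an accepting sink by step }n\}|$. That is not what the construction does. In the definition
\[
f'(\psi')=\idf_{\xi}(\psi')+\sum_{\delta^{\S}(\psi,\nu)=\psi'} f(\psi),
\]
the sum ranges only over $\psi$ for which $\delta^{\S}(\psi,\nu)$ is defined, and $\delta^{\S}$ is defined \emph{only on $Q\setminus\sink$}. Hence a count $f(\psi)$ with $\psi\in\sink$ is never propagated forward: a token that has just entered a sink contributes to $f$ for that single step and then disappears. (This is also why the state space can be taken to be $|Q|^{Q}$; under your ``tokens stay put'' reading the sink counters would be unbounded.) Consequently the reward at step $n$ is the number of \emph{arrivals} at accepting sinks at step $n$, not a running cumulative total. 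With your accumulated-count reading, the reward sequence grows linearly and its Ces\`aro average $\frac{1}{N}\sum_{n<N}\singlereward(\rho[n])$ diverges, so $\accMP[\ext]{\bowtie p}{\singlereward(\R)}$ would essentially never hold; your ``the two partial sums differ by at most $|Q|$'' step would then be comparing the wrong quantities.

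Once you switch to the correct per-arrival interpretation, your final paragraph is right for the right reason: $\sum_{n<N}\singlereward(\rho[n])=|\{i\in\sat(\R):j(i)<N\}|$ (where $j(i)$ is the step at which token $i$ reaches its sink), and since $0\le j(i)-i\le |Q|$ by acyclicity, this differs from $|\sat(\R)\cap[0,N)|=\sum_{n<N}\idf_{n\in\sat(\R)}$ by at most $|Q|$, so the two $\lrExt$-values coincide. This is precisely the paper's argument. The same ``one step of presence'' reading also repairs your B\"uchi/co-B\"uchi reasoning: the state $\Psi_n$ contains an accepting sink iff some token \emph{arrives} at an accepting sink at step $n$; the equivalence with $\sat(\R)$ infinite (resp.\ cofinite) then follows because the map $i\mapsto j(i)$ has fibres of size at most $|Q|$, not because ``once a run enters a sink it stays there''.
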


\vspace{-0.9em}

\subsection{Product of slave automata}

Observe that the LTS of slave automata never depend on the assumptions $\R$. Let $\S_1,\ldots,\S_n$ be the LTS of automata for elements of $\rec = \{\xi_1,\ldots,\xi_n\}$. Further, given $\R\subseteq\rec$, let $\acc_i(\R)$ be the acceptance condition for the slave automaton for $\xi_i$ with assumptions $\R$.

We define $\P$ to be the LTS product $\S_1 \times \cdots \times \S_n$.
The slaves run independently in parallel. 
For $\R \subseteq\rec$, we define the acceptance condition for the product\footnote{An acceptance condition of an automaton is defined to hold on a run of the automata product if it holds on the projection of the run to this automaton. We can still write this as a standard acceptance condition. Indeed, for instance, a B\"uchi condition for the first automaton given by $F\subseteq Q$ is a B\"uchi condition on the product given by $\{(q_1,q_2,\ldots,q_n)\mid q_1\in F, q_2,\ldots,q_n\in Q\}$.}
\[
\acc(\R) = \bigwedge_{\xi_i\in\R} \acc_i(\R)
\]
and $\P(\R)$ denotes the LTS $\P$ endowed with the acceptance condition $\acc(\R)$. 
Note that $\acc(\R)$ checks that $\R$ is satisfied when each slave assumes $\R$.

\begin{lemma}[Correctness of slave product]\label{lem:slaves-product}
	For $w$ and $\R \subseteq \rec$, we have
\begin{description}
\item[(soundness)]  whenever
        $w \in \L(\P({\R}))$ then $\R\subseteq\R(w)$;
\item[(completeness)]    $w \in \L(\P({\R(w)}))$.    
\end{description}
\end{lemma}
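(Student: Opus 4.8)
The plan is to derive both directions from the per-slave correctness statement in Lemma~\ref{lem:slaves-acc}, using the crucial observation that the LTS underlying $\P$ is just the product of the slave LTS $\S_1,\dots,\S_n$ and does not depend on the chosen assumption set, so a single run of $\P$ on $w$ simultaneously witnesses the runs of all slave automata. Throughout, fix $w$ and recall that $\acc(\R) = \bigwedge_{\xi_i\in\R}\acc_i(\R)$, and that by the footnote each $\acc_i(\R)$, read on the product, is exactly the acceptance condition of the $i$-th slave automaton evaluated on the projection of the product run to its component.

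\emph{Soundness.} Suppose $w\in\L(\P(\R))$; I want $\R\subseteq\R(w)$. I will show each $\xi_i\in\R$ lies in $\R(w)$, splitting on the three shapes of formulas in $\rec$. Since $w\in\L(\P(\R))$, the run of $\P$ on $w$ satisfies $\bigwedge_{\xi_i\in\R}\acc_i(\R)$, hence in particular satisfies $\acc_i(\R)$; projecting to the $i$-th component, $w\in\L(\SGF(\xi_i,\R))$ (resp.\ $\SFG$, $\SGf$) depending on whether $\xi_i$ is an $\F$-, $\G$-, or $\Gf{\bowtie p}_\ext$-formula. Now apply the \emph{first implication} of the relevant line of Lemma~\ref{lem:slaves-acc} with the lower assumption set instantiated to $\R$ itself — this is legitimate provided $\R\subseteq\R(w)$, which is precisely what we are trying to prove, so a naive single application is circular. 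The standard fix, which I will use, is induction on a well-founded ordering of $\rec$ by the subformula relation (or by nesting depth of operators in $\rec$): the acceptance condition $\acc_i(\R)$ only ever queries $\R\proves\psi$ for sinks $\psi$ of $\S(\xi_i)$, and such sinks are Boolean combinations of \emph{proper} subformulas of $\xi_i$ that belong to $\rec$ (the inner $\F$-, $\G$-, $\Gf{\bowtie p}_\ext$-subformulas). So, ordering $\rec$ so that subformulas come first, assume inductively that all members of $\R$ strictly below $\xi_i$ already lie in $\R(w)$; then when Lemma~\ref{lem:slaves-acc} needs a lower bound $\underline\R\subseteq\R(w)$ controlling exactly those inner formulas, we may take $\underline\R = \R\cap\{\text{formulas strictly below }\xi_i\} \subseteq \R(w)$, and the hypothesis ``$w$ is accepted with assumptions queried via $\R$'' matches ``$w$ is accepted with assumptions queried via $\underline\R$'' because the only queries made concern those inner formulas and $\R$ and $\underline\R$ agree on membership for all formulas that can possibly be queried. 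The first implication of Lemma~\ref{lem:slaves-acc} then yields $w\models\G\F\xi_i$ (resp.\ $\F\G\xi_i$, $\Gf{\bowtie p}_\ext\xi_i$), i.e.\ $\xi_i\in\R(w)$, closing the induction.

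\emph{Completeness.} Here I set $\R := \R(w)$ and must show $w\in\L(\P(\R(w)))$, i.e.\ the run of $\P$ on $w$ satisfies $\acc_i(\R(w))$ for every $\xi_i\in\R(w)$. Fix such an $\xi_i$; by definition of $\R(w)$ we have $w\models\G\F\xi_i$ (resp.\ $\F\G\xi_i$, $\Gf{\bowtie p}_\ext\xi_i$). Apply the \emph{second implication} of the appropriate line of Lemma~\ref{lem:slaves-acc} with $\overline\R := \R(w)$ (so $\R(w)\subseteq\overline\R$ trivially): this gives $w\in\L(\SGF(\xi_i,\R(w)))$ (resp.\ $\SFG$, $\SGf$), which is exactly the statement that the projection of the $\P$-run satisfies $\acc_i(\R(w))$. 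Taking the conjunction over all $\xi_i\in\R(w)$ gives that the run satisfies $\acc(\R(w))$, hence $w\in\L(\P(\R(w)))$. No induction is needed on this side since we are free to pick $\overline\R$ as large as $\R(w)$.

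\emph{Main obstacle.} The routine part is recognizing that a product run restricts to a run of each slave and that the footnote's rewriting makes $\acc_i(\R)$ a bona fide acceptance condition on $\P$. The genuinely delicate point is the \emph{soundness} direction: Lemma~\ref{lem:slaves-acc} is stated with a gap between $\underline\R$ and $\overline\R$, and to extract $\R\subseteq\R(w)$ from acceptance under assumption $\R$ one must avoid the circularity ``$\R$ is correct because $\R$ is correct.'' I expect the bulk of the write-up to go into setting up the well-founded induction over $\rec$ along the subformula order and checking carefully that the only assumption-queries $\acc_i(\R)$ performs concern strictly smaller elements of $\rec$, so that the inductive hypothesis supplies exactly the $\underline\R\subseteq\R(w)$ that Lemma~\ref{lem:slaves-acc} requires while $\acc_i(\R)$ and $\acc_i(\underline\R)$ coincide on the run of $w$.
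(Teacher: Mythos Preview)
Your plan is correct and matches the paper's proof essentially line for line: the paper proves soundness by induction on the subformula dag over $\R$, replacing $\R$ by $\R\cap\sf(\xi)$ (exactly your $\underline\R$ of ``strictly smaller'' assumptions) and invoking the ``Moreover'' clause of Lemma~\ref{lem:slaves-acc} to justify that only $\R\cap\sf(\xi)$ matters for $\acc_i$; completeness is the direct second-implication application you describe. The one cosmetic difference is that the paper packages the observation ``$\acc_i(\R)$ and $\acc_i(\underline\R)$ coincide'' into that ``Moreover'' clause rather than rederiving it inside the induction step.
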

Intuitively, soundness means that whatever set of assumptions we prove with $\P$ it is also satisfied on the word. Note that the first line can be written as
\[
w \in \L(\P({\R}))  \implies w \models \bigwedge_{\F\xi \in \R} \G\F \xi \land \bigwedge_{\G\xi \in \R} \F\G \xi \land \bigwedge_{\Gf{\bowtie p}_\ext \xi \in \R} \Gf{\bowtie p}_\ext \xi
\]
Completeness means that for every word the set of all satisfied assumptions can be proven by the automaton.

\subsection{The final automaton: product of slaves and master}

Finally, we define the generalized Rabin mean-payoff automaton $\A$ to have the 
LTS $\T \times \P$ 
and the acceptance condition
\(
\bigvee_{\R\subseteq\rec} \accT{\R} \wedge \accP{\R}
\)
where
\[
\accT{\R} = \accFinite\Big(\Big\{\big(\psi,(\Psi_\xi)_{\xi\in \rec}\big)\ \Big|\ 
\R\cup \bigcup_{\G\xi\in\R} \Psi_\xi[(\rec\setminus\R)/\false] \not\proves \psi\Big\}\Big)\,
\]
eventually prohibits states where the current formula of the master $\psi$ is not proved by the assumptions and by all tokens of the slaves for $\G\xi\in\R$. 
Here $\Psi[X/\false]$ denotes the set of formulae of $\Psi$ where each element of $X$ in the Boolean combination is replaced by $\false$. For instance, $\{a\vee\F a\}[\{a\}/\false]=\false\vee\F a=\F a$. (For formal definition, see
\ifx\techreport\undefined\cite{techreport}\else Appendix~\ref{app:bool}\fi.)
We illustrate how the information from the slaves in this form helps to decide whether the master formula holds or not.

\begin{example}
Consider $\varphi=\G(\X a\vee\G\X b)$, and its respective master transition system as depicted below:
\begin{center}
		\begin{tikzpicture}[x=5cm,y=1.5cm,font=\footnotesize,initial text=,outer sep=0.5mm]
		\tikzstyle{acc}=[double]
		\node[state,initial] (x) at (-1,0) {$\varphi$};
		\node[state] (i) at (0,0) {$\varphi\wedge(a\vee(b\wedge\G\X b))$};
		\node[state] (f) at (0,-0.8) {$\false$};
		\node[state] (u) at (0.8,0) {$\varphi\wedge(b\wedge\G\X b)$};

		\path[->] 
		(x) edge node[above]{$\emptyset,\{a\},\{b\},\{a,b\}$} (i)
		(i) edge node[left]{$\emptyset$} (f)
		(i) edge[loop above] node[left]{$\{a\},\{a,b\}$} ()
		    edge node[above]{$\{b\}$} (u)
		(u) edge[loop above] node[left]{$\{b\},\{a,b\}$} (u)
		    edge node[below,pos=0.3] {$\emptyset,\{a\}$} (f)
		(f) edge[loop left, max distance=8mm,in=170,out=190,looseness=10] node[left]{$\emptyset,\{a\},\{b\},\{a,b\}$} (f)   
		;
		\end{tikzpicture}
\end{center}
Assume we enter the second state and stay there forever, e.g., under words $\{a\}^\omega$ or $\{a,b\}^\omega$.
How do we show that $\varphi\wedge(a\vee(b\wedge\G\X b))$ holds?
For the first conjunct, we obviously have $\R\proves\varphi$ for all $\R$ containing $\varphi$.
However, the second conjunct is more difficult to prove.

One option is that we have $\G\X b\in\R$ and want to prove the second disjunct. To this end, we also need to prove $b$.
We can see that if $\G\X b$ holds then in its slave for $\X b$, 
there is always a token in the state $b$, 
which is eventually always guaranteed to hold.
This illustrates why we need the tokens of the $\G$-slaves for proving the master formula.

The other option is that $\G\X b$ is not in $\R$, and so we need to prove the first disjunct.
However, from the slave for $\G(\X a\vee\G\X b)$ we eventually always get only the tokens $\X a\vee\G\X b$, $a\vee\G\X b$, and $\true$. 
None of them can prove $a\vee (b\wedge\G\X b)$.
However, since the slave does not rely on the assumption $\G\X b$, we may safely assume it not to hold here.
Therefore, we can substitute $\false$ for $\G\X b$ and after the substitution the tokens turn into $\X a$, $a$, and $\true$. 
The second one is then trivially sufficient to prove the first disjunct. 
\end{example}

\begin{proposition}[Soundness]\label{prop:A-sound}
	If $w \in \L(\A)$, then $w \models \varphi$.
\end{proposition}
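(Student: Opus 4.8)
The plan is to fix a word $w \in \L(\A)$ and unpack what its accepting run certifies. Since the acceptance condition of $\A$ is $\bigvee_{\R\subseteq\rec} \accT{\R}\wedge\accP{\R}$, there is some fixed $\R\subseteq\rec$ such that the run of $\A$ on $w$ satisfies both $\accT{\R}$ and $\accP{\R}$. The first thing I would do is apply soundness of the slave product (Lemma~\ref{lem:slaves-product}, soundness direction) to the $\P$-projection of the run: from $w\in\L(\P(\R))$ we get $\R\subseteq\R(w)$. In particular every $\F\xi\in\R$ satisfies $w\models\G\F\xi$, every $\G\xi\in\R$ satisfies $w\models\F\G\xi$, and every $\Gf{\bowtie p}_\ext\xi\in\R$ satisfies $w\models\Gf{\bowtie p}_\ext\xi$; so $\R$ is a set of genuinely-satisfied "eventually always" assumptions, and it is sound to reason propositionally from the formulae in $\R$ about $w$'s far suffixes.

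Next I would extract the content of $\accT{\R}$. Write $\T(w)=\varphi_0\varphi_1\cdots$ for the master run and $(\Psi_\xi^{(n)})_{\xi\in\rec}$ for the slave-token configurations along the run. Satisfaction of $\accT{\R}=\accFinite(\{\ldots\})$ means that for all sufficiently large $n$ we have $\R\cup\bigcup_{\G\xi\in\R}\Psi_\xi^{(n)}[(\rec\setminus\R)/\false]\proves\varphi_n$. So I must argue two things: first, that for large $n$ every formula in the left-hand set is actually satisfied by $\suffix{w}{n}$, and second, that propositional entailment transfers satisfaction. For the formulae in $\R$ this is exactly $\R\subseteq\R(w)$ together with the definition of $\R(w)$ (each is eventually-always true, hence true on $\suffix{w}{n}$ for $n$ large). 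For the slave tokens: a token $\psi\in\Psi_\xi^{(n)}$ for $\G\xi\in\R$ arose from reading some infix $\infix{w}{i}{n}$ of $\S(\xi)$, i.e. $\psi=\S(\xi)(\infix{w}{i}{n})$; since $w\models\F\G\xi$, for large $n$ this token (for $i$ also large enough, which is the situation in the relevant "surviving tokens" part of the $\F\G$ argument — this is precisely what the co-B\"uchi machinery of $\SFG$ guarantees, cf. Lemma~\ref{lem:slave-promises}/\ref{lem:slaves-acc}) satisfies $\suffix{w}{n}\models\psi$, and likewise $\suffix{w}{n}\models\psi[(\rec\setminus\R)/\false]$ because substituting $\false$ for formulae we are entitled to assume false (they are not in $\R\subseteq\R(w)$, so not eventually-always true — here one needs that these $\Gf{}$/$\G$/$\F$-type formulae, if true at all infinitely often in the relevant sense, would force membership in $\R(w)$; for the $\G\xi'$ and $\Gf{}{\xi'}$ cases $\suffix{w}{n}\models\xi'$ for large $n$ would give $w\models\F\G\xi'$ resp. the frequency bound, contradiction) only weakens the formula in a satisfaction-preserving way on the suffix. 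Hence every formula in $\R\cup\bigcup_{\G\xi\in\R}\Psi_\xi^{(n)}[(\rec\setminus\R)/\false]$ holds on $\suffix{w}{n}$, and since propositional proof is sound for the satisfaction relation (all propositional connectives have their classical meaning in fLTL semantics), $\varphi_n$ holds on $\suffix{w}{n}$.

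Finally I would close the loop with local correctness of the master LTS (Lemma~\ref{lem:master-local}): $w\models\varphi$ iff $\suffix{w}{n}\models\varphi_n$, and we have just exhibited some large $n$ with $\suffix{w}{n}\models\varphi_n$, hence $w\models\varphi$. I expect the main obstacle to be the middle step — rigorously justifying that the substitution $[(\rec\setminus\R)/\false]$ applied to slave tokens yields formulae true on the suffix. This requires (a) pinning down that the "old enough" tokens in $\Psi_\xi^{(n)}$ are exactly the ones whose originating position $i$ can be taken large (so that $\suffix{w}{i}\models$ the token, via the slave LTS tracking $\xi$ faithfully under the assumption $\G\xi\in\R$ hence $w\models\F\G\xi$), and (b) a careful argument that every conjunct/disjunct we zero out is indeed not eventually-always satisfied on $w$, using that $\R\supseteq$ would be forced otherwise since $\R\subseteq\R(w)$ is all we know but the needed statement is that $\R(w)$ itself contains everything of the relevant shape that is eventually-always true — which is immediate from the definition of $\R(w)$. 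Making the token-bookkeeping precise (which tokens survive, at which position they were spawned, and that the substitution is monotone w.r.t. satisfaction) is the delicate part; the rest is routine soundness-of-propositional-reasoning plumbing.
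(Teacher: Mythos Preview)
Your overall architecture is exactly the paper's: fix the accepting disjunct $\R$, use soundness of the slave product (Lemma~\ref{lem:slaves-product}) to get $\R\subseteq\R(w)$, extract from $\accT{\R}$ that the extended assumptions propositionally prove the master state $\varphi_n$ for large $n$, argue the extended assumptions are satisfied on $\suffix{w}{n}$, and close with Lemma~\ref{lem:master-local}. The paper packages the middle step as a separate technical lemma (its Lemma~\ref{lem:tokens}) establishing $\suffix{w}{\tau}\models\psi[(\rec\setminus\R)/\false]$ for late-born tokens $\psi$ of each $\G\xi\in\R$, but the skeleton is identical.

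There is, however, a genuine gap in your justification of the substitution step. You argue that formulae in $\rec\setminus\R$ ``are not in $\R\subseteq\R(w)$, so not eventually-always true,'' and hence may be replaced by $\false$ without losing satisfaction. But the inclusion goes the wrong way: from $\R\subseteq\R(w)$ you cannot conclude $\rec\setminus\R\subseteq\rec\setminus\R(w)$. A formula $\chi\in\R(w)\setminus\R$ is precisely one that \emph{is} eventually-always true on $w$ yet gets replaced by $\false$. And substituting $\false$ for a positively-occurring atom makes a formula logically \emph{stronger}, so satisfaction of $\psi$ on $\suffix{w}{n}$ does not by itself transfer to $\psi[(\rec\setminus\R)/\false]$.

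The paper's argument avoids this semantic route. For $\G\xi\in\R$, acceptance of $\SFG(\xi,\R)$ guarantees that every late-born token ends in a sink $s_\delta$ with $\R\proves s_\delta$. Purely propositionally, $\R\proves s_\delta$ implies $\R\proves s_\delta[(\rec\setminus\R)/\false]$ (evaluate under an assignment making $\R$ true and $\rec\setminus\R$ false). Since $\R\subseteq\R(w)$ and $\delta\geq T(w)$ we have $\suffix{w}{\delta}\models\R$, hence $\suffix{w}{\delta}\models s_\delta[(\rec\setminus\R)/\false]$; one then pulls this back along the slave transitions, which commute with the substitution because they do not touch $\rec$-atoms. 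The crux is that the co-B\"uchi condition supplies $\R\proves s_\delta$, which makes the substitution harmless \emph{propositionally}; you never need the (unavailable) claim that the substituted formulae are false on $w$.
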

The key proof idea is that for the slaves of $\G$-formulae in $\R$, all the tokens eventually always hold true.
Since also the assumptions hold true so does the conclusion $\psi$. By Lemma~\ref{lem:master-local}, $\varphi$ holds true, too.

\begin{proposition}[Completeness]\label{prop:A-complete}
	If $w \models \varphi$, then $w \in \L(\A)$.
\end{proposition}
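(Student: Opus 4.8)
<br>

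The goal is to prove Proposition~\ref{prop:A-complete}: if $w \models \varphi$ then $w \in \L(\A)$. I need to exhibit a disjunct $\R \subseteq \rec$ in the acceptance condition $\bigvee_\R \accT{\R} \wedge \accP{\R}$ that the run of $\A$ on $w$ satisfies.

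The natural candidate is $\R = \R(w)$, the set of formulas in $\rec$ eventually always satisfied on $w$. Let me sketch the proof.

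Plan:

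The plan is to take $\R := \R(w)$ and show that the run of $\A = \T \times \P$ on $w$ satisfies both $\accP{\R(w)}$ and $\accT{\R(w)}$.

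\textbf{Satisfaction of $\accP{\R(w)}$.} This is exactly the completeness part of Lemma~\ref{lem:slaves-product}: $w \in \L(\P(\R(w)))$. Since the LTS of $\A$ is $\T \times \P$ and acceptance conditions on a product are evaluated on the projection, the run of $\A$ on $w$ projected to $\P$ satisfies $\acc(\R(w))$, hence $\accP{\R(w)}$ holds. So this part is immediate from the black-box lemma already proved.

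\textbf{Satisfaction of $\accT{\R(w)}$.} Recall that $\accT{\R}$ is a co-B\"uchi condition $\accFinite(\text{bad set})$ where a state $(\psi, (\Psi_\xi)_{\xi \in \rec})$ is ``bad'' if
\[
\R \cup \bigcup_{\G\xi \in \R} \Psi_\xi[(\rec \setminus \R)/\false] \notproves \psi.
\]
I need: for $\R = \R(w)$, the run of $\A$ on $w$ visits bad states only finitely often. Let $\T(w) = \varphi_0 \varphi_1 \cdots$ be the master run and let the $\P$-run produce token multisets; write $\Psi_\xi^{(n)}$ for the set of formulas carried as tokens by the slave for $\G\xi$ at step $n$ (forgetting multiplicities). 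So I must show that for all sufficiently large $n$,
\[
\R(w) \cup \bigcup_{\G\xi \in \R(w)} \Psi_\xi^{(n)}[(\rec \setminus \R(w))/\false] \proves \varphi_n.
\]

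To do this I will argue, for a fixed large $n$, by structural induction on the master formula — but it's cleaner to phrase it as: $\suffix{w}{n} \models \varphi$ (by Lemma~\ref{lem:master-local}, since $w \models \varphi$ and $\varphi_n$ is the monitored formula, actually $\suffix w n \models \varphi_n$), and I want to convert this semantic fact into a propositional derivation from the available assumptions. The key closure facts I'll use:

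\emph{(a)} For each $\Gf{\bowtie p}_\ext \xi \in \R(w)$ or $\F\xi \in \R(w)$ or $\G\xi \in \R(w)$, the formula itself is in the assumption set $\R(w)$, and eventually always $\suffix w m \models \xi$ (for $\G$-case) resp.\ the frequency/infinitely-often property holds.

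\emph{(b)} For $\G\xi \in \R(w)$: the slave LTS $\S(\xi)$ for $\G\xi$, after reading a long enough prefix, has all its live tokens in states $\chi$ with $\suffix w n \models \chi$; moreover those tokens, when we substitute $\false$ for every recurrent subformula \emph{not} in $\R(w)$, still hold on $\suffix w n$, because a recurrent subformula outside $\R(w)$ is, from some point on, never again required — this is exactly why the substitution $[(\rec \setminus \R(w))/\false]$ is sound. This uses that $\R(w)$ is downward-closed in the relevant sense, i.e.\ if $\G\xi \in \R(w)$ and $\xi$ mentions a recurrent subformula $\chi \notin \R(w)$ then eventually $\chi$ contributes nothing.

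\emph{(c)} The crucial lemma: for $n$ large, the current master formula $\varphi_n$ is a positive Boolean combination of literals (all true on $\suffix w n$, hence $\proves$-trivially discharged after they evaluate to $\true$... but wait, $\varphi_n$ need not have its literals evaluated) — more precisely $\varphi_n$ is a positive Boolean combination over $\sf(\varphi)$, and every non-Boolean conjunct/disjunct that is ``true for a temporal reason'' on $\suffix w n$ is propositionally entailed by the assumptions. This is where I will need to unwind what kinds of subformulas can appear: $\X$-formulas get peeled, $\U$-formulas reduce (since $\U$ is not under $\G$, each $\U$-subformula is discharged within finitely many steps — for $n$ past that point the run has ``resolved'' all top-level $\U$s), and what remains are $\F$-, $\G$-, $\Gf{}$- formulas (and Boolean combinations), each either in $\R(w)$ (hence in the assumption set) or provable from a $\G$-slave token.

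\textbf{Main obstacle.} The hard part is step (c)/(b): showing that the tokens of the $\G$-slaves, after the substitution $[(\rec\setminus\R(w))/\false]$, together with $\R(w)$, \emph{propositionally} prove the master formula $\varphi_n$. Semantically it's clear ($\suffix w n \models \varphi_n$ eventually and all pieces stabilize), but the content of the proposition is that this semantic truth is witnessed by a purely propositional derivation from the finite, syntactically available assumption set. This requires a careful induction on the structure of $\varphi$ tracking how the master formula $\varphi_n$ decomposes, showing that each maximal non-Boolean subformula of $\varphi_n$ is literally a member of $\R(w)$, or is a literal true on $\suffix w n$ whose $\X$-ancestor was a token in the appropriate $\G$-slave, or is entailed thereby; and crucially that $\Gf{\bowtie p}_\ext$-subformulas behave here like ordinary $\G$-subformulas once they're in $\R(w)$ (their ``dummy'' unfolding means the master just carries $\X\Gf{\bowtie p}_\ext\xi$, which after peeling is $\Gf{\bowtie p}_\ext\xi \in \R(w)$). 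I expect the argument to mirror the LTL$_{\setminus\G\U}$ completeness proof of \cite{atva13}, with the $\Gf{}$-slaves handled uniformly alongside the $\G$-slaves because, at the master level, $\Gf{\bowtie p}_\ext\xi$ is treated exactly as an opaque recurrent formula that is either assumed or not. I would also need the preservation lemma (Lemma~\ref{lem:master-local}) to conclude: once the run of $\A$ is eventually in non-bad states forever and $\accP{\R(w)}$ holds, the chosen disjunct is satisfied, so $w \in \L(\A)$.

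\QED
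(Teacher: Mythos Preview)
Your overall strategy coincides with the paper's: take $\R=\R(w)$, discharge $\accP{\R(w)}$ directly from the completeness half of Lemma~\ref{lem:slaves-product}, and then argue that for all sufficiently large $n$ the extended assumption set propositionally proves the master state $\varphi_n$, so that $\accT{\R(w)}$ holds.

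Where your sketch stops is exactly where the work is. The paper carries out the missing step by introducing \emph{derived subformulae}: under repeated application of $\delta^\T$, each occurrence of $\G\xi$ in $\varphi$ evolves into a shape $\G\xi\wedge\bigwedge_i\xi_i$, each $\F\xi$ into $\F\xi\vee\bigvee_i\xi_i$, each $\Gf{\bowtie p}_\ext\xi$ into $\Gf{\bowtie p}_\ext\xi\wedge\bigwedge\true$, and each $\psi_1\U\psi_2$ into an analogous disjunctive form. The structural induction then shows that every derived subformula that \emph{holds} at $\suffix w n$ is propositionally provable from the extended assumptions. The delicate case is $\G\xi\wedge\bigwedge_i\xi_i$: here one must match each $\xi_i$ produced by the master against a token in $\Psi_\xi$ (or an accepting sink), and this matching is \emph{not} the identity, because the master unfolds inner $\F$- and $\G$-subformulae of $\xi$ while the slave $\S(\xi)$ does not. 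The paper's correspondence maps each derived $\F$/$\G$-piece inside $\xi_i$ back to its head $\F\chi$ or $\G\chi$, and then closes the gap using either $\R(w)$ directly, the substitution $[(\rec\setminus\R(w))/\false]$, or the induction hypothesis on the inner $\G\chi$. Your item~(b) gestures at the semantic fact that tokens hold on $\suffix w n$, but does not supply this syntactic correspondence, which is what actually yields $\proves$ rather than merely $\models$.

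One imprecision to flag: your claim that ``for $n$ past that point the run has resolved all top-level $\U$s'' is not how the argument runs. Derived $\U$-subformulae can persist in $\varphi_n$ indefinitely (and $\U$ can also sit under $\F$, reappearing via the $\F$-unfolding). The paper handles them as an explicit case of the induction: for $n$ large enough, a satisfied derived $\U$-subformula has a disjunct not containing the $\U$ itself that already holds, and the induction hypothesis applies to that disjunct. Likewise, a satisfied derived $\F$-subformula with $\F\xi\notin\R(w)$ is discharged via some $\xi_i$ and the induction hypothesis, not by ``resolving''.
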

The key idea is that subformulas generated in the master from $\G$-formulae closely correspond to their slaves' tokens.
Further, observe that for an $\F$-formula $\chi$, its unfolding is a disjunction of $\chi$ and other formulae.
Therefore, it is sufficient to prove $\chi$, which can be done directly from $\R$.
Similarly, for $\Gf{\bowtie p}_\ext$-formula $\chi$, its unfolding is just $\chi$ and is thus also provable directly from $\R$.

\mypara{Complexity}
Since the number of Boolean functions over a set of size $n$ is $2^{2^n}$, the size of each automaton is bounded by $2^{2^{|\sf|}}$, i.e., doubly exponential in the length of the formula.
Their product is thus still doubly exponential.
Finally, the acceptance condition is polynomial for each fixed $\R\subseteq\rec$. 
Since the whole condition is a disjunction over all possible values of $\R$, it is exponential in the size of the formula, which finishes the proof of Theorem~\ref{thm:auto}.

\section{Verifying strongly connected MDPs against generalized B\"uchi mean-payoff automata}
\label{sec:mean-payoff}
\begin{figure}[t]
\begin{align}
\sum\nolimits_{a\in A} x_{i,a} & =  1 & \text{for all $1\le i\le n$}
\label{eq:xssum}\\
\sum\nolimits_{a\in A} x_{i,a}\cdot \delta(a)(s) & = 
\sum\nolimits_{a\in \act{s}} x_{i,a} & \text{for all $s\in S$ and $1{\le} i{\le} n$}
\label{eq:xa}\\
\sum\nolimits_{s\in S, a\in \act{s}} x_{i,a}\cdot\singlereward_j(s) & \bowtie  v_j &
\text{ for all $1{\le} j {\le} m$ and $1{\le} i{\le} n$}
\label{eq:rew-inf}\\
\sum\nolimits_{s\in S, a\in \act{s}} x_{i,a}\cdot\singlerewardalt_i(s) & \bowtie  u_i &
\text{ for all $1{\le} i {\le} n$}
\label{eq:rew-sup}
\end{align}
\myspace\myspace
\caption{Linear constraints $L$ of Proposition~\ref{thm:mp-main}}
\label{system-L}
\end{figure}
Theorem~\ref{thm:mec} can be obtained from the following proposition.

\begin{proposition}\label{thm:mp-main}
 Let $\mdp=(S,A,\mathit{Act},\delta,\inits)$ be a strongly connected MDP, and $\acc$ an acceptance condition over $S$ given by:
 \[
  \bigwedge\nolimits_{i=1}^k\Inf(S_i) \quad\wedge\quad \bigwedge\nolimits_{i=1}^m \accMP[\inf]{\bowtie v_i}{\singlereward_i} \quad\wedge\quad \bigwedge\nolimits_{i=1}^n \accMP[\sup]{\bowtie u_i}{\singlerewardalt_i)}
 \]
 The constraints from Figure~\ref{system-L} have a non-negative solution if and only if
 there is a strategy $\sigma$ and %
 a set of runs $R$ of non-zero probability such that $\acc$ holds true on all $\omega\in R$.
Moreover, $\sigma$ and $R$ can be chosen so that $R$ has probability~$1$.
\end{proposition}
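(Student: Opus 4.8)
The plan is to characterize strategies achieving the acceptance condition with positive probability via the standard "frequency LP" on a strongly connected MDP, extended to cope with the combination of $\Inf$, $\accMP[\inf]{}{}$ and $\accMP[\sup]{}{}$ requirements. The variables $x_{i,a}$ represent, for each index $i$, a long-run frequency distribution on state-action pairs; constraint~(\ref{eq:xa}) forces each vector $(x_{i,a})_a$ to be an invariant measure of the MDP (a flow), and constraint~(\ref{eq:xssum}) normalizes it. Since $\mdp$ is strongly connected, any such $x_i$ has support forming a sub-MDP that is itself one big end component, so by the standard results on mean-payoff in MECs one can realize $x_i$ as the vector of genuine long-run state-action frequencies of a memoryless strategy $\sigma_i$, under which almost every run visits every state of $S$ infinitely often (hence satisfies all $\Inf(S_j)$) and has limit-average of reward $\singlereward_j$ equal to $\sum_{s,a} x_{i,a}\singlereward_j(s)$, which by~(\ref{eq:rew-inf}) is $\bowtie v_j$; this gives one "witness component" for the inferior-limit objectives together with the $\mathit{Inf}$ objectives, and additionally makes the superior-limit reward $\singlerewardalt_i$ have limit-average (hence limsup) $\bowtie u_i$ by~(\ref{eq:rew-sup}).

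The core construction, for the "if" direction, is then to combine the $n$ memoryless strategies $\sigma_1,\dots,\sigma_n$ into a single strategy that plays $\sigma_i$ during longer and longer stretches, cycling $1,2,\dots,n,1,2,\dots$, with the $t$-th block around index $i$ made long enough (and entered only after steering back through a fixed short path, which costs vanishingly little) that the running average of every reward is within $1/t$ of the target values. Because $\sigma_i$ already secures the limsup bound on $\singlerewardalt_i$ along its own (infinitely recurring) blocks, and the inferior-limit rewards $\singlereward_j$ are forced $\bowtie v_j$ in \emph{every} block (every $\sigma_i$ satisfies~(\ref{eq:rew-inf})), one checks that along such a run: each $\accMP[\inf]{}{}$ objective holds because the running average stays eventually above $v_j$ (up to $o(1)$) in all blocks; each $\accMP[\sup]{}{}$ objective $\singlerewardalt_i$ holds because infinitely often, at the ends of the blocks using $\sigma_i$, the average is $\bowtie u_i - o(1)$; and every $\Inf(S_j)$ holds since every state is visited infinitely often. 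A standard Borel–Cantelli / law-of-large-numbers argument makes "the running average in block $t$ is within $1/t$ of its target with probability $\to 1$" into "almost surely eventually", so in fact the set $R$ of good runs has probability $1$, giving the "moreover".

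For the "only if" direction, suppose a strategy $\sigma$ and a positive-probability set $R$ of runs satisfying $\acc$. By the standard analysis of Markov chains induced by strategies, almost every run eventually stays in (and the frequencies converge on) a single MEC, which here is all of $(S,A)$; restricting to $R$ we may pick a run $\omega$ whose Cesàro limits of state-action frequencies exist along a suitable subsequence. The $\Inf$ and $\accMP[\inf]{}{}$ parts, and — crucially — the $\accMP[\sup]{}{}$ parts, are handled by extracting, for each $i$, a subsequence of times along which the average of $\singlerewardalt_i$ approaches its limsup (which is $\bowtie u_i$) while the averages of all the inferior-limit rewards stay $\bowtie v_j$ (they do eventually, since the liminf is $\bowtie v_j$) and the empirical frequencies converge; the limit of the empirical state-action frequency vector along this subsequence is a flow satisfying~(\ref{eq:xa}), (\ref{eq:xssum}), and it satisfies~(\ref{eq:rew-inf}) and the $i$-th instance of~(\ref{eq:rew-sup}). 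Taking these $n$ limit vectors as $x_1,\dots,x_n$ yields a non-negative solution of $L$.

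The main obstacle is the interaction of limit-superior and limit-inferior objectives in a \emph{single} run: a witness must, simultaneously and forever, keep every liminf-reward average above its threshold while infinitely often pushing each limsup-reward average up to its threshold, and it must do this for all $n$ limsup-objectives at once with a single strategy. This is exactly why the LP uses a separate frequency vector $x_i$ per limsup-objective but shares the $\singlereward_j \bowtie v_j$ constraints across all of them, and why the block construction cycles through all $i$; getting the quantitative bookkeeping right — that the contributions of "wrong" blocks and of the connecting paths are $o(1)$ and so never drag a liminf-average permanently below its target — is the delicate part, and is where adapting the existing multi-mean-payoff techniques of~\cite{lics15,BCFK13} to the mixed $\limsup/\liminf$-with-Rabin setting is needed.
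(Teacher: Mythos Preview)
Your plan matches the paper's proof almost exactly: for the direction from the LP to a strategy, the paper also builds per-index strategies $\sigma_i$ realizing the frequencies $(x_{i,a})_a$ and then cycles through them in growing blocks (of length $2^{|h|}$ in the paper), invoking Borel--Cantelli for the $\limsup$ objectives and the block construction of~\cite{lics15} for the $\liminf$ ones; for the converse it, like you, extracts for each $i$ a subsequence of times along which the empirical action frequencies converge, the running average of $\singlerewardalt_i$ tends to its $\limsup$, and the running averages of all $\singlereward_j$ stay $\bowtie v_j$, and then uses the strong law of large numbers to verify the flow equation~(\ref{eq:xa}).

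There is one concrete slip. You claim that the memoryless strategy realizing $x_i$ ``visits every state of $S$ infinitely often (hence satisfies all $\Inf(S_j)$)''. This is false in general: the support of a non-negative solution of (\ref{eq:xssum})--(\ref{eq:xa}) is an end component, but nothing forces it to be all of $S$ (note the LP contains no constraint mentioning the sets $S_j$), and a memoryless strategy whose support is a proper sub-end-component will never leave it. The paper avoids this by taking the $\sigma_i$ from \cite[Corollary~12]{lics15}, which are not asserted to be memoryless but \emph{are} guaranteed to visit every state of the strongly connected MDP infinitely often; your construction is easily repaired the same way, or by inserting a short tour of all states between consecutive blocks, which contributes $o(1)$ to every average and so does not disturb the mean-payoff arguments.
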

Intuitively, variables $x_{i,a}$ describe the frequencies of using action $a$. Equation (\ref{eq:xa}) is Kirchhof's law of flow. Equation (\ref{eq:rew-inf}) says the inferior limits must be satisfied by all flows, while Equation (\ref{eq:rew-sup}) says that the $i$th limit superior has its own dedicated $i$th flow.
Note that $L$ does not dependent on the initial state $\inits$.

\begin{proof}[Sketch]
Existing results for multi-objective mean payoff MDPs would only allow to establish the proposition in absence of supremum limits,
and so we need to extend and combine results of several works to prove the proposition.
In the direction $\Rightarrow$, \cite[Corollary 12]{lics15} gives a strategy $\sigma_i$ for every $i$ such that for almost every run $s_0a_0s_1a_1\ldots$ we have $\lrInf((\idf_{a_t = a})_{t=0}^\infty) = x_{i,a}$, and in fact
the corresponding limit exists. Hence, for the number $p=\sum_{s\in S, a\in \act{s}} \singlereward(s)\cdot x_{i,a}$ the predicates $\accMP[\inf]{\ge p}{\singlereward}$ and $\accMP[\sup]{\ge p}{\singlereward}$ almost surely holds, for any reward function $\singlereward$.
Hence, our constraints ensure that $\sigma_i$ satisfies $\accMP[\inf]{\bowtie v_j}{\singlereward_j}$ for all $j$, and $\accMP[\sup]{\bowtie u_i}{\singlerewardalt_i}$. 
Moreover, $\sigma_i$ is guaranteed to visit every state of $\mdp$ infinitely often almost surely. The strategy $\sigma$ is then constructed to take these strategies $\sigma_i,1\leq i\leq n$ in turn and mimic each one of them for longer and longer
periods.

For the direction $\Leftarrow$, we combine the ideas of \cite{lics15,BBC+14,BCFK13} and select solutions to $x_{i,a}$ from ``frequencies'' of actions under the strategy $\sigma$. 

\end{proof}

\vspace*{-4mm}
\vspace*{-1mm}
\section{Conclusions}
\label{sec:conclusions}

We have given an algorithm for computing the optimal probability of satisfying an $\fltlfragment$ formula
in an MDP. The proof relies on a decomposition of the formula into master and slave automata, and on solving a mean-payoff
problem in a product MDP.
The obvious next step is to extend the algorithm so that it can handle arbitrary formulae of fLTL. This appears to be
a major task, since our present construction relies
on acyclicity of slave LTS, a property which is not satisfied for unrestricted formulae~\cite{cav14}.
Indeed, since $\Gf{\bowtie p}$-slaves count the number of tokens in each state, this property ensures a bounded number of tokens and thus finiteness of the slave automata.

\mypara{Acknowledgments.}
{\footnotesize This work is partly supported by the German Research Council (DFG) as part of the Transregional Collaborative Research Center AVACS (SFB/TR 14), by the Czech Science Foundation under grant agreement P202/12/G061, by the EU 7th Framework Programme under grant agreement no. 295261 (MEALS) and 318490 (SENSATION), by the CDZ project 1023 (CAP), by the CAS/SAFEA International Partnership Program for Creative Research Teams, by the EPSRC grant EP/M023656/1, by the People Programme (Marie Curie Actions) of the European Union’s
Seventh Framework Programme (FP7/2007–2013) REA Grant No 291734, by the Austrian Science Fund (FWF) S11407-N23 (RiSE/SHiNE), and by the ERC Start Grant (279307: Graph Games). Vojt\v{e}ch Forejt is also affiliated with FI MU, Brno, Czech Republic.}

\vspace*{-3mm}
\bibliographystyle{abbrv}
\bibliography{bib}

\begin{thebibliography}{10}

\bibitem{ABK14}
S.~Almagor, U.~Boker, and O.~Kupferman.
\newblock Discounting in {LTL}.
\newblock In {\em TACAS}, volume 8413 of {\em LNCS}. Springer, 2014.

\bibitem{BP08}
C.~Baier and J.-P. Katoen.
\newblock {\em {Principles of model checking}}.
\newblock MIT Press, 2008.

\bibitem{baier2014weight}
C.~Baier, J.~Klein, S.~Kl{\"u}ppelholz, and S.~Wunderlich.
\newblock Weight monitoring with linear temporal logic: Complexity and
  decidability.
\newblock In {\em CSL-LICS}. ACM, 2014.

\bibitem{bloem2009better}
R.~Bloem, K.~Chatterjee, T.~A. Henzinger, and B.~Jobstmann.
\newblock Better quality in synthesis through quantitative objectives.
\newblock In {\em CAV}. Springer, 2009.

\bibitem{BJPPS12}
R.~Bloem, B.~Jobstmann, N.~Piterman, A.~Pnueli, and Y.~Sa'ar.
\newblock Synthesis of reactive(1) designs.
\newblock {\em J. Comput. Syst. Sci.}, 78(3), 2012.

\bibitem{boker2011temporal}
U.~Boker, K.~Chatterjee, T.~A. Henzinger, and O.~Kupferman.
\newblock Temporal specifications with accumulative values.
\newblock In {\em LICS}. IEEE, 2011.

\bibitem{BDL-tase12}
B.~Bollig, N.~Decker, and M.~Leucker.
\newblock Frequency linear-time temporal logic.
\newblock In {\em {TASE}}, Beijing, China, July 2012. {IEEE}.

\bibitem{BMM14}
P.~Bouyer, N.~Markey, and R.~M. Matteplackel.
\newblock Averaging in {LTL}.
\newblock In {\em {CONCUR}}, volume 8704 of {\em LNCS}. Springer, 2014.

\bibitem{BBC+14}
T.~Br{\'a}zdil, V.~Bro\v{z}ek, K.~Chatterjee, V.~Forejt, and A.~Ku\v{c}era.
\newblock Markov decision processes with multiple long-run average objectives.
\newblock {\em LMCS}, 10(4), 2014.

\bibitem{BCFK13}
T.~Br{\'a}zdil, K.~Chatterjee, V.~Forejt, and A.~Ku\v{c}era.
\newblock Trading performance for stability in {M}arkov decision processes.
\newblock In {\em LICS}. IEEE, 2013.

\bibitem{chatterjee2011energy}
K.~Chatterjee and L.~Doyen.
\newblock Energy and mean-payoff parity {M}arkov decision processes.
\newblock In {\em MFCS}. Springer, 2011.

\bibitem{cav13}
K.~Chatterjee, A.~Gaiser, and J.~K{\v r}et{\'{\i}}nsk{\'{y}}.
\newblock Automata with generalized {R}abin pairs for probabilistic model
  checking and {LTL} synthesis.
\newblock In {\em CAV}, 2013.

\bibitem{lics15}
K.~Chatterjee, Z.~Kom{\'{a}}rkov{\'{a}}, and J.~K{\v{r}}et{\'{\i}}nsk{\'{y}}.
\newblock Unifying two views on multiple mean-payoff objectives in {M}arkov
  decision processes.
\newblock In {\em LICS}, 2015.
\newblock To appear.

\bibitem{CR15lics}
L.~Clemente and J.-F. Raskin.
\newblock Multidimensional beyond worst-case and almost-sure problems for
  mean-payoff objectives.
\newblock In {\em LICS}. To appear, 2015.

\bibitem{CY95}
C.~Courcoubetis and M.~Yannakakis.
\newblock The complexity of probabilistic verification.
\newblock {\em Journal of the ACM}, 42(4):857--907, 1995.

\bibitem{Droste}
M.~Droste and V.~Perevoshchikov.
\newblock Multi-weighted automata and mso logic.
\newblock In {\em Comp. Sci. – Theory and Applications}, volume 7913 of {\em
  LNCS}. Springer, 2013.

\bibitem{cav14}
J.~Esparza and J.~K{\v r}et{\'{\i}}nsk{\'{y}}.
\newblock From {LTL} to deterministic automata: {A} safraless compositional
  approach.
\newblock In {\em CAV}, 2014.

\bibitem{EKVY08}
K.~Etessami, M.~Kwiatkowska, M.~Vardi, and M.~Yannakakis.
\newblock Multi-objective model checking of {Markov} decision processes.
\newblock {\em LMCS}, 4(4):1--21, 2008.

\bibitem{our-concur}
V.~Forejt and J.~Kr\v{c}\'al.
\newblock On frequency {LTL} in probabilistic systems.
\newblock In {\em CONCUR}, 2015.
\newblock To appear.

\bibitem{KSK76}
J.~Kemeny, J.~Snell, and A.~Knapp.
\newblock {\em Denumerable {M}arkov Chains}.
\newblock Springer, 2nd edition, 1976.

\bibitem{DBLP:conf/tacas/KiniV15}
D.~Kini and M.~Viswanathan.
\newblock Limit deterministic and probabilistic automata for
  {LTL}{\(\setminus\)}{GU}.
\newblock In {\em {TACAS}}, 2015.

\bibitem{ltl2dstar}
J.~Klein and C.~Baier.
\newblock Experiments with deterministic $\omega$-automata for formulas of
  linear temporal logic.
\newblock {\em Theor. Comput. Sci.}, 363(2), Oct. 2006.

\bibitem{atva14}
Z.~Kom{\'{a}}rkov{\'{a}} and J.~K{\v r}et{\'{\i}}nsk{\'{y}}.
\newblock Rabinizer 3: Safraless translation of {LTL} to small deterministic
  automata.
\newblock In {\em ATVA}, volume 8837 of {\em LNCS}. Springer, 2014.

\bibitem{atva13}
J.~K{\v{r}}et{\'i}nsk{\'y} and R.~Ledesma-Garza.
\newblock Rabinizer 2: Small deterministic automata for {LTL$\setminus$GU}.
\newblock In {\em ATVA}, 2013.

\bibitem{cav12}
J.~K\v{r}et{\'i}nsk{\'y} and J.~Esparza.
\newblock Deterministic automata for the ({F},{G})-fragment of {LTL}.
\newblock In {\em CAV}, 2012.

\bibitem{prism}
M.~Kwiatkowska, G.~Norman, and D.~Parker.
\newblock {PRISM} 4.0: Verification of probabilistic real-time systems.
\newblock In {\em CAV 2011}, volume 6806 of {\em LNCS}. Springer, 2011.

\bibitem{RRS15cav}
M.~Randour, J.-F. Raskin, and O.~Sankur.
\newblock Percentile queries in multi-dimensional {M}arkov decision processes.
\newblock In {\em CAV}. To appear, 2015.

\bibitem{Royden88}
H.~Royden.
\newblock {\em Real analysis}.
\newblock Prentice Hall, 3rd edition, 12 Feb. 1988.

\bibitem{thomas1997languages}
W.~Thomas.
\newblock {\em Languages, automata, and logic}.
\newblock Springer, 1997.

\bibitem{DBLP:journals/corr/abs-1111-3111}
T.~Tomita, S.~Hagihara, and N.~Yonezaki.
\newblock A probabilistic temporal logic with frequency operators and its model
  checking.
\newblock In {\em INFINITY}, 2011.

\bibitem{AT12}
T.~Tomita, S.~Hiura, S.~Hagihara, and N.~Yonezaki.
\newblock A temporal logic with mean-payoff constraints.
\newblock In {\em ICFEM}, volume 7635 of {\em LNCS}. Springer, 2012.

\bibitem{Vardi96anautomata-theoretic}
M.~Y. Vardi.
\newblock An automata-theoretic approach to linear temporal logic.
\newblock In {\em Logics for Concurrency: Structure versus Automata}, volume
  1043 of {\em LNCS}. Springer, 1996.

\bibitem{VW86}
M.~Y. Vardi and P.~Wolper.
\newblock An automata-theoretic approach to automatic program verification
  (preliminary report).
\newblock In {\em LICS}. {IEEE}, 1986.

\end{thebibliography}

\ifx\techreport\iundefined\relax\else
\newpage
\appendix

\section{Propositional reasoning}\label{app:bool}

Intuitively, given a set of formulae $\Phi$ and a formula $\psi$, we say that $\Phi$ 
propositionally proves $\psi$ if $\psi$ can be deduced from formulae in $\Phi$ using only 
propositional reasoning. So, for instance, $\G a$ propositionally implies $\G a \vee \G b$,
but $\G a$ does not propositionally imply $\F a$. 

\begin{definition}[Propositional implication and equivalence]
A formula of fLTL is {\em non-Boolean} if it is not a conjunction or a disjunction
(i.e., if the root of its syntax tree is not $\wedge$ or $\vee$). 
The set of non-Boolean formulae of fLTL over $Ap$ is denoted by ${\it NB}(Ap)$.
A {\em propositional assignment}, or just an {\em assignment}, is a mapping 
$\mathit{Ass} \colon {\it NB}(Ap) \rightarrow \{0, 1\}$. Given $\varphi \in {\it NB}(Ap)$, 
we write $\mathit{Ass} \models_P \varphi$ if{}f $\mathit{Ass}(\varphi) = 1$, and extend
the relation $\models_P$ to arbitrary formulae by: 

\[\begin{array}[t]{lcl}
  \mathit{Ass}\models_P \varphi \wedge \psi & \mbox{ if{}f } & \mathit{Ass} \models_P \varphi \text{ and } \mathit{Ass} \models_P \psi \\
  \mathit{Ass} \models_P \varphi \vee \psi & \mbox{ if{}f } & \mathit{Ass} \models_P \varphi \text{ or } \mathit{Ass} \models_P \psi \\
\end{array}\]

We say that a set $\Phi$ of fLTL formulae \emph{propositionally proves} an fLTL formula $\psi$, written $\Phi\proves\psi$,
if for every assignment $\mathit{Ass}$, $\mathit{Ass}\models_P \bigwedge\Phi$ implies $\mathit{Ass}\models_P \psi$.

Finally, fLTL formulae $\varphi$ and $\psi$ are {\em propositionally equivalent}, denoted by $\varphi \equiv_P \psi$, 
if $\{\varphi\} \models_P \psi$ and $\{\psi\} \models_P \varphi$. We denote by $[\varphi]_P$ the equivalence class of $\varphi$ 
under the equivalence relation $\equiv_P$.
\end{definition}

Observe that $\varphi \equiv_P \psi$ implies that $\varphi$ and $\psi$ are equivalent also as {fLTL} formulae, i.e., for all words $w$, we have $w\models\varphi$ iff $w\models\psi$. Using the same reasoning, 
\begin{align}
w\models\bigwedge\Phi\text{ with }\Phi\proves\psi\text{ imply }w\models\psi\,.\label{eq:transitivity}
\end{align}

\begin{definition}[Propositional substitution]
Let $\psi,\chi$ be fLTL formulae and $\Psi\subseteq{\it NB}(Ap)$. %
The formula $\psi[\Psi/ \chi]_P$ is inductively defined as follows:
\begin{itemize}
\item If $\psi = \psi_1 \wedge \psi_2$ then $\psi[\Psi/ \chi]_P = \psi_1[\Psi/\chi]_P \wedge \psi_2[\Psi/ \chi]_P$.
\item If $\psi = \psi_1 \vee \psi_2$ then $\psi[\Psi/ \chi]_P = \psi_1[\Psi/\chi]_P \vee \psi_2[\Psi/ \chi]_P$.
\item If $\psi$ is a non-Boolean formula and $\psi \in \Psi$ then $\psi[\Psi/ \chi]_P = \chi$, else $\psi[\Psi/ \chi]_P = \psi$.
\end{itemize}
\end{definition}

The following lemma allows us to work with formulae as Boolean functions over $\mathit{NB}(Ap)$, i.e.,
as representatives of their propositional equivalence classes.

\begin{alemma}
For every formula $\varphi$ and every %
letter $\nu \in 2^{Ap}$,
if $\varphi_1 \equiv_P \varphi_2$ then $\un(\varphi_1))[\nu] \equiv_P \un(\varphi_2))[\nu]$.
\end{alemma}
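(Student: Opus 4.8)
The plan is to proceed by structural induction on $\varphi$, exploiting that both $\un$ and the $[\nu]$-substitution are defined compositionally over the Boolean connectives, while all the genuinely temporal behaviour is confined to the non-Boolean formulae. First I would record the trivial base and Boolean cases: if $\varphi_1\equiv_P\varphi_2$ then, since $\equiv_P$ is a congruence for $\wedge$ and $\vee$ at the level of assignments, any $\equiv_P$-witness between $\varphi_1$ and $\varphi_2$ can be read off componentwise, so it suffices to handle the case where $\varphi_1,\varphi_2$ are themselves non-Boolean or the case where the top connectives line up. The cleaner way to organise this is to first establish two auxiliary monotonicity facts: (i) if $\alpha\equiv_P\beta$ then $\un(\alpha)\equiv_P\un(\beta)$, and (ii) if $\alpha\equiv_P\beta$ then $\alpha[\nu]\equiv_P\beta[\nu]$. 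The lemma is then immediate by composing (i) and (ii). Fact (ii) is the easier one and I would prove it first.

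For fact (ii), observe that the $[\nu]$-operator only looks at the outermost non-Boolean formula of each conjunct/disjunct: on a literal it returns $\true$ or $\false$ according to $\nu$, on an $\X\psi$ it returns $\psi$, and on anything else it is the identity. Hence $[\nu]$ is itself a Boolean-combination-preserving map on the set $\mathit{NB}(Ap)$ of ``atoms'', and it therefore descends to the quotient by $\equiv_P$: formally, two propositionally equivalent formulae have the same set of satisfying assignments, and applying $[\nu]$ to each non-Boolean atom corresponds to substituting fixed truth values (for literals matching $\nu$) or renaming atoms (replacing $\X\psi$ by the atom $\psi$), both of which are operations that respect propositional equivalence. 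So $\alpha\equiv_P\beta$ forces $\alpha[\nu]\equiv_P\beta[\nu]$. The one subtlety to flag is that $\X\psi$ and $\psi$ are \emph{distinct} atoms of $\mathit{NB}(Ap)$ before the substitution, so the substitution is a well-defined map $\mathit{NB}(Ap)\to\mathit{NB}(Ap)\cup\{\true,\false\}$, and well-definedness of the induced map on equivalence classes is exactly the statement that the substitution commutes with arbitrary reassignments of truth values to the remaining atoms --- which is a routine check.

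Fact (i) is where the real content sits, and I expect it to be the main obstacle, though a mild one. The point is that $\un$ is defined by recursion that descends \emph{into} Boolean structure but \emph{not} into the temporal arguments: $\un(\F\psi)=\un(\psi)\vee\X\F\psi$, $\un(\G\psi)=\un(\psi)\wedge\X\G\psi$, $\un(\psi_1\U\psi_2)=\un(\psi_2)\vee(\un(\psi_1)\wedge\X(\psi_1\U\psi_2))$, the $\Gf{\bowtie p}_\ext$ case is the dummy $\true\wedge\X\Gf{\bowtie p}_\ext\psi$, and $\un$ is the identity on all other (atomic-level) formulae. The difficulty is that a single $\equiv_P$-equivalence $\alpha\equiv_P\beta$ between, say, two $\F$-formulae $\F\psi$ and $\F\psi'$ does \emph{not} entail $\psi\equiv_P\psi'$ --- indeed $\F\psi$ and $\F\psi'$ are both just opaque atoms of $\mathit{NB}(Ap)$, so $\F\psi\equiv_P\F\psi'$ holds iff they are literally the same atom. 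This is the key realisation that makes the induction go through: $\equiv_P$ treats every non-Boolean formula as an uninterpreted propositional variable, so $\alpha\equiv_P\beta$ with $\alpha,\beta$ non-Boolean means $\alpha$ and $\beta$ are \emph{identical}, and then $\un(\alpha)=\un(\beta)$ on the nose. Thus the only genuine work is the Boolean inductive step: if $\varphi_1=\psi_1\circ\psi_2$ and $\varphi_2$ is any $\equiv_P$-equivalent formula, push the equivalence down using the fact that $\un$ distributes over $\wedge$ and $\vee$ and apply the induction hypothesis to the non-Boolean atoms appearing in the normal forms. I would phrase the final argument at the level of assignments: any assignment $\mathit{Ass}$ to $\mathit{NB}(Ap)$ extends canonically to an assignment $\mathit{Ass}'$ after unfolding (interpreting the freshly introduced atoms $\X\F\psi$, $\X\G\psi$, etc. consistently), and one checks $\mathit{Ass}\models_P\un(\varphi_1)\iff\mathit{Ass}\models_P\un(\varphi_2)$ by comparing the two unfoldings connective by connective, using that $\varphi_1\equiv_P\varphi_2$ pins down the truth value of every shared atom. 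Composing with fact (ii) then yields $\un(\varphi_1)[\nu]\equiv_P\un(\varphi_2)[\nu]$, completing the proof.
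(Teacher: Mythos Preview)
Your approach is essentially the paper's: since both $\un$ and $(\cdot)[\nu]$ distribute over $\wedge,\vee$, the composite $\un(\cdot)[\nu]$ acts on any positive Boolean combination by a uniform substitution of its non-Boolean atoms, and the propositional substitution lemma then transports $\equiv_P$. The paper does exactly this in three lines, applying the substitution lemma once to the composite map; your decomposition into facts~(i) and~(ii) just applies it twice.

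Two points to clean up. First, drop the structural-induction scaffolding: it is a red herring. Once you (correctly) observe that two $\equiv_P$-equivalent non-Boolean formulae must be syntactically identical, there is no inductive step left, and attempting one on Boolean structure fails anyway --- e.g.\ $\varphi_1=(a\vee b)\wedge c$ and $\varphi_2=(a\wedge c)\vee(b\wedge c)$ are $\equiv_P$ but admit no matching subformula decomposition to which an induction hypothesis could apply. The argument is global substitution, not recursion on immediate subformulae. Second, your final paragraph has the assignment argument the wrong way round. To establish $\un(\varphi_1)\equiv_P\un(\varphi_2)$ you must start from an \emph{arbitrary} assignment $\mathit{Ass}$ on $\mathit{NB}(Ap)$ and pull it back to an assignment $\mathit{Ass}'$ on the original atoms via $\mathit{Ass}'(\chi):=1$ iff $\mathit{Ass}\models_P\un(\chi)$; distributivity of $\un$ over $\wedge,\vee$ then gives $\mathit{Ass}\models_P\un(\varphi_i)\iff\mathit{Ass}'\models_P\varphi_i$, and $\varphi_1\equiv_P\varphi_2$ closes the argument. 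Your sketch instead tries to push an assignment \emph{forward} to the ``freshly introduced'' atoms $\X\F\psi,\X\G\psi,\ldots$, which is neither canonical nor sufficient, since it does not range over all assignments on the unfolded side. The paper sidesteps all of this by simply citing the substitution lemma.
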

\begin{proof}
Observe that every formula $\varphi$ is a positive Boolean combination (i.e., built from conjunctions and disjunctions) of non-Boolean formulae. 
Since $\un$ and $(\cdot)[\nu]$ both distribute over $\wedge $ and $\vee$, the formula
$\un(\varphi)[\nu]$ is obtained by applying a simultaneous substitution to the non-Boolean formulae.
(For example, a non-Boolean formula $\G\psi$ is substituted by $\un(\psi)[\nu] \wedge \G\psi$.) 
Let $\varphi[S]$ be the result of the substitution. 

Consider two equivalent formulae $\varphi_1 \equiv_P \varphi_2$. Since we apply the same substitution to 
both sides, the substitution lemma of propositional logic guarantees $\varphi_1[S] \equiv_P \varphi_2[S]$.
\end{proof}

\section{Proofs of Section~\ref{sec:automata}}

\begin{reflemma}{lem:unfolding}
For every word $w$ and fLTL formula, we have $w\models \varphi\iff \suffix{w}{1} \models (\un(\varphi))[w[0]]$.
\end{reflemma}

\begin{proof}
Denote $w=\nu v$ where $\nu\subseteq Ap$. We proceed by a straightforward structural induction on $\varphi$. We focus on three representative cases.
\begin{itemize}
\item $\varphi = a$. Then
\[
\begin{array}{llr}
             & \nu v \models a   \\[0.1cm] 
\iff & a \in \nu & \text{(semantics of LTL)}  \\[0.1cm]
\iff & \un(a)[\nu] = \true & \text{(def. of $\un$ and $[\nu]$)} \\[0.1cm] 
\iff & v \models \un(a)[\nu] & \text{(semantics of LTL)} \\[0.1cm] 
\end{array}
\]

\item $\varphi = \F \psi$. Then
\[\begin{array}{llr}
              & \nu v \models \F \psi \\[0.1cm]
\iff & \nu v \models (\X\F \psi) \vee \psi & \text{($\F \psi \equiv \X\F \psi \vee \psi$)}\\[0.1cm]
\iff & v \models \F \psi \text{ or }  \nu v \models \psi & \text{(semantics of LTL)}\\[0.1cm]
\iff & v \models \F \psi \text{ or }  v \models \un(\psi)[\nu] & \text{(ind. hyp.)}\\[0.1cm]
\iff & v \models \F \psi \vee \un(\psi)[\nu] & \text{(semantics of LTL)}\\[0.1cm]
\iff & v \models \un(\F\psi)[\nu] & \text{(def. of $\un$)}
\end{array}\]

\item $\varphi = \Gf{\bowtie p}_{\inf} \psi$. Then
\[\begin{array}{llr}
              & \nu v \models \Gf{\bowtie p}_{\inf} \psi \\[0.1cm]
\iff & \displaystyle \liminf_{i\to \infty} \frac{1}{i} \Big(\idf_{\nu\models\varphi}+\sum_{j=0}^{i-2} \idf_{\suffix{v}{j}\models\varphi}\Big) & \text{(semantics of LTL)}\\[0.1cm]
\iff & \displaystyle \lim_{i\to \infty} \frac{1}{i} \idf_{\nu\models\varphi}+\liminf_{i\to \infty} \frac{1}{i} \sum_{j=0}^{i-2} \idf_{\suffix{v}{j}\models\varphi}\Big) &\\[0.1cm]
\iff & \displaystyle 0+\liminf_{i\to \infty} \frac{1}{i} \sum_{j=0}^{i-1} \idf_{\suffix{v}{j}\models\varphi} &\\[0.1cm]
\iff & v \models \Gf{\bowtie p}_{\inf} \psi  & \text{(semantics of LTL)}\\[0.1cm]
\iff & v \models \un(\Gf{\bowtie p}_{\inf} \psi )[\nu] & \text{(def. of $\un$)}
\end{array}\]
\end{itemize}
\qed
\end{proof}

\begin{reflemma}{lem:master-local}
Let $w$ be a word and $\mathcal \T(w)=\varphi_0\varphi_1\cdots$ the corresponding run. Then for all $n\in\mathbb N$, we have $w\models\varphi$ if and only if $\suffix{w}{n} \models\varphi_n$.
\end{reflemma}
\begin{proof}
We proceed by induction on $n$. 
For $n=0$, we conclude by $\varphi_0=\varphi$.
Let now $n\geq1$ and denote $w=u\,\nu\,v$ where $\nu\subseteq AP$ and $v=\suffix{w}{n}$.
Then we have 

\[\begin{array}{llr}
     & v \models \varphi_n \qquad\qquad\qquad\\
\iff & v \models \un(\varphi_{n-1})[\nu] & \text{(def. of $\delta^\T$)} \\
\iff & \nu\,v \models \varphi_{n-1} & \text{(Lemma~\ref{lem:unfolding})} \\
\iff & u\,\nu\,v\models \varphi & \text{(ind. hyp.)} 
\end{array}\]
\qed
\end{proof}

\begin{definition}
The \emph{threshold} $T(w)$ of a word $w$ is the smallest $T\in\Nset$ such that for all $t\geq T$
\begin{itemize}
\item for all $\psi\in\R(w)$, we have $\suffix wt\models\psi$,\footnote{This condition is actually non-trivial only for $\G$-formulae, other formulae of $\R(w)$ hold at all positions.}
\item for all $\psi\in\rec\setminus\R(w)$, we have $\suffix wt\not\models\psi$.
\end{itemize}
\end{definition}

Then we have $\suffix w {T(w)}\models\rho$ for every $\rho\in\R(w)$ (all $\G$-formulae that will ever hold do hold already) and $\suffix w {T(w)}\not\models\rho$ for every $\rho\in\rec\setminus\R(w)$ (none of the $\F$-formulae that hold only finitely often holds any more).

\begin{alemma}\label{lem:threshold}
For every word $w$ and $t\geq T(w)$, we have that  $\suffix w t\models \xi$ if{}f $\exists t'%
:\R(w)\cap\sf(\xi) \proves \S(\xi)(\infix w t {t'})$.
\end{alemma}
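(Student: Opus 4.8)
The plan is to prove Lemma~\ref{lem:threshold} by relating the slave LTS $\S(\xi)$ started from position $t \ge T(w)$ to the semantics of $\xi$ on the suffix $\suffix w t$. The essential fact is that beyond the threshold $T(w)$, the set $\R(w)$ behaves like an oracle that is \emph{exactly correct}: every formula in $\R(w)$ holds at every position $\ge T(w)$, and every formula in $\rec \setminus \R(w)$ fails at every such position. So from position $t$ onward, $\R(w) \cap \sf(\xi)$ is not merely a sound underapproximation of the truth of the $\rec$-subformulas of $\xi$, but coincides with it.

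First I would set up a structural induction on $\xi$ over the $\fltlfragment$ grammar for $\xi$ (i.e.\ the subgrammar $\xi ::= a \mid \neg a \mid \xi \wedge \xi \mid \xi \vee \xi \mid \X\xi \mid \F\xi \mid \G\xi \mid \Gf{\bowtie p}_\ext \xi$). The slave LTS never unfolds inner $\F$-, $\G$-, or $\Gf{\bowtie p}_\ext$-formulae: these are sink-like in the sense that $\delta^\S$ only rewrites literals and peels off outer $\X$'s, treating temporal formulae as atoms. Concretely, after reading some prefix $\infix w t {t'}$, the state $\S(\xi)(\infix w t {t'})$ is a positive Boolean combination of literals and members of $\rec \cap \sf(\xi)$ (for large enough $t'$, all literals have been resolved and we land in a sink). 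The claim to prove by induction is that for $t' $ large enough, $\S(\xi)(\infix w t {t'}) \in \sink$ and $\R(w) \cap \sf(\xi) \proves \S(\xi)(\infix w t {t'})$ if and only if $\suffix w t \models \xi$. For the base cases $a, \neg a$: reading one letter $w[t]$ resolves the literal to $\true$ or $\false$, and since $t \ge T(w)$ (in fact for any $t$) this matches $\suffix w t \models a$. For $\X\xi_1$: after one step the state becomes $\xi_1$, and $\suffix w t \models \X\xi_1 \iff \suffix w {t+1} \models \xi_1$; apply the induction hypothesis at position $t+1 \ge T(w)$. The conjunction/disjunction cases follow because $\delta^\S$, sinks, and $\proves$ all distribute appropriately over $\wedge$ and $\vee$ (using that propositional provability from a fixed assumption set distributes over Boolean connectives). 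For a temporal case, say $\xi = \G\xi_1 \in \gsf$: the slave never rewrites it, so $\S(\xi)(\infix w t {t'}) = \G\xi_1$ for all $t'$, which is already a sink; then $\R(w) \cap \sf(\xi) \proves \G\xi_1$ holds iff $\G\xi_1 \in \R(w)$ (since $\G\xi_1$ is a non-Boolean atom as far as $\proves$ is concerned, and $\sf(\xi)$ contains it), which by the threshold property is equivalent to $\suffix w t \models \G\xi_1$ for $t \ge T(w)$. The cases $\F\xi_1$ and $\Gf{\bowtie p}_\ext \xi_1$ are identical in spirit.

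The one subtlety — and I expect this to be the main obstacle — is handling mixed states that are Boolean combinations involving \emph{both} unresolved literals and temporal atoms, e.g.\ a state like $a \vee \G\xi_1$ reached after a partial read. Here one must argue that there exists a $t'$ at which either all literals in the current state have been driven to $\true$/$\false$ (so we reach a genuine sink), and that provability of that sink from $\R(w) \cap \sf(\xi)$ tracks the semantics. This requires knowing the slave LTS is \emph{acyclic} except at sinks and that literals are resolved in finitely many steps (reading $w[t], w[t+1], \dots$ eventually strips every outermost $\X$ down to a literal or a temporal atom), so such a $t'$ exists. Combined with the soundness/completeness behaviour of the sinks with respect to $\proves$ and $\R(w)$, this closes the induction. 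I would also make explicit that ``$\exists t' : \R(w)\cap\sf(\xi) \proves \S(\xi)(\infix w t {t'})$'' is monotone in $t'$ in the relevant sense — once we reach an accepting sink it stays there — so the existential quantifier in the statement matches the ``for large enough $t'$'' phrasing of the induction hypothesis. Finally, I would note the connection to Lemma~\ref{lem:slave-promises}: this lemma is essentially the ``exact'' version of that result, obtained by taking $\underline\R = \overline\R = \R(w)$ and restricting attention to positions past the threshold, which is precisely where the sandwich in Lemma~\ref{lem:slave-promises} collapses to an equivalence.
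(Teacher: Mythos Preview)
Your proposal is correct, but it takes a different route from the paper. The paper's proof does \emph{not} proceed by structural induction on $\xi$; instead it argues by induction along the slave run, mirroring Lemma~\ref{lem:master-local}. Concretely, it observes that for $u\geq T(w)$ and any $\psi\in\rec$ one has $\suffix w u\models\psi\iff\suffix w{u+1}\models\psi$, so the slave transition $\psi\mapsto\psi[\nu]$ (which differs from the master's $\un(\psi)[\nu]$ only by \emph{not} unfolding $\rec$-formulae) is semantics-preserving past the threshold: $\suffix w t\models\xi\iff\suffix w{t'}\models\xi_{t'}$ for every $t'$. Then at a sink, $\xi_{t'}$ is a positive Boolean combination over $\rec\cap\sf(\xi)$, and since past the threshold $\R(w)$ is exactly the set of such atoms that hold, provability from $\R(w)\cap\sf(\xi)$ coincides with satisfaction. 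This is a two-line argument leveraging the existing Lemma~\ref{lem:master-local} template.

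Your structural induction arrives at the same place but unpacks each constructor separately. What you gain is an explicit account of why each case works (in particular the temporal-atom cases, where the sink is reached immediately and provability reduces to membership in $\R(w)$). What the paper's approach buys is brevity: the single observation that ``not unfolding $\rec$ is harmless past $T(w)$'' handles all constructors at once, and the final matching of provability and semantics at the sink is a one-shot argument rather than being threaded through the induction. Your remark about monotonicity in $t'$ and the connection to Lemma~\ref{lem:slave-promises} with $\underline\R=\overline\R=\R(w)$ is apt and matches the paper's intended use of the lemma.
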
%
\begin{proof}
By similar arguments as in Lemma~\ref{lem:master-local}, we get that for the run of the slave $\S(\xi)(\suffix wt) = \xi_t \xi_{t+1} \cdots$ we have $\suffix wt \models \xi \iff \suffix w{t'} \models \xi_{t'}$. Indeed, not unfolding elements of $\rec$ is here equivalent to not unfolding them since for every $\psi\in\rec$ we have $\suffix wu\models\psi$ iff $\suffix wu\models \X\psi$, for all $u\geq T$.
Moreover, when reaching the sink at time $t'$, we know that $\xi_{t'}$ is a positive Boolean combination over $\rec(w) \cap \sf(\xi)$. 
Therefore, $\suffix wt \models \xi \iff \suffix w{t'} \models\xi_{t'}\iff \R(w)\models\xi_{t'}\iff \R(w)\cap\sf(\xi)\models\xi_{t'}$.
\qed
\end{proof}

\begin{reflemma}{lem:slave-promises} 	
Let us fix $\xi \in \sf$ and a word~$w$. For any $\R \in \rec$, we denote by $\sat(\R)$ the set $\{i\in\Nset\mid \exists j\geq i:\R \proves \S(\xi)(\infix wij) \}$. %
Then for any $\underline{\R},\overline{\R} \subseteq \rec$ such that $\underline{\R} \subseteq \R(w) \subseteq \overline{\R}$, we have
\begin{eqnarray}
\sat(\underline{\R}) \text{ is infinite} \implies & w \models \G\F\xi \implies & \sat(\overline{\R}) \text{ is infinite} \label{eq:sl1}\\
\Nset \setminus \sat(\underline{\R}) \text{ is finite} \implies & w \models \F\G\xi \implies & \Nset \setminus \sat(\overline{\R}) \text{ is finite} \label{eq:sl2} \\
\lrExt(\big(\idf_{\sat(\underline{\R})}(i)\big)_{i=0}^\infty) \bowtie p \implies & w \models \Gf{\bowtie p}_\ext \xi \implies & \lrExt(\big(\idf_{\sat(\overline{\R})}(i)\big)_{i=0}^\infty) \bowtie p \qquad\label{eq:sl3}
\end{eqnarray}
Moreover, the result holds also for $\underline{\R} \cap \sf(\xi) \subseteq \R(w) \cap \sf(\xi) \subseteq \overline{\R} \cap \sf(\xi)$. 
\end{reflemma}

\begin{proof}
For (\ref{eq:sl1}), let first $\sat(\underline{\R})$ be infinite. Then also $\sat'(\underline{\R}) := \{n\in\sat(\underline{\R}) \mid n\geq T(w)\}$ is infinite.
Therefore, infinitely many positions $i$ of $w$ satisfy $\exists j\geq i:\underline{\R} \proves \S(\xi)(\infix wij)$.
Observe that elements of $\rec$ are never under the scope of negation in $Q$, hence $\proves$ is monotonic w.r.t.\ adding assumptions from $\rec$.
Thus also infinitely many positions $i$ of $w$ satisfy $\exists j\geq i:\R(w) \proves \S(\xi)(\infix wij)$ and by Lemma~\ref{lem:threshold} also satisfy $\xi$.

Let now $w\models \G\F\xi$. Then $I:=\{i\in\Nset\mid i\geq T(w) \text{ and }\suffix w i \models \xi\}$ is infinite and by the lemma there are infinitely many positions $i$ of $w$ satisfying $\exists j\geq i:{\R} \proves \S(\xi)(\infix wij)$. 
By the monotonicity of $\proves$ above we can replace $\R$ by $\overline{\R}$.

Moreover, if we only assume $\underline{\R} \cap \sf(\xi) \subseteq \R(w) \cap \sf(\xi) \subseteq \overline{\R} \cap \sf(\xi)$ both statements remain valid.
Indeed, for every set $\R$ of formulae and formula reachable from $\xi$, $\R\proves\xi$ if{}f $\R\cap\sf(\xi)\proves\xi$ since the only non-Boolean formulae produced by $\xi[\cdot]$ are subformulas of $\xi$.

For (\ref{eq:sl2}), the argumentation is the same, replacing ``infinite'' and ``infinitely many'' by ``co-finite'' and ``almost all''.
For (\ref{eq:sl3}), the sequences can only differ in a finite prefix. 
Moreover, if we only assume $\underline{\R} \cap \sf(\xi) \subseteq \R(w) \cap \sf(\xi) \subseteq \overline{\R} \cap \sf(\xi)$,
apart from the finite prefix the sequence $\idf_{\sat(\underline{\R})}(i)$ is pointwise less or equal to $\idf_{\suffix wi\models\xi}$, 
which is again pointwise less or equal to $\idf_{\sat(\overline{\R})}(i)$.
\qed
\end{proof}

\begin{reflemma}{lem:slaves-acc}
Let $\xi \in \sf$, $w$, and $\underline{\R},\overline{\R} \subseteq \rec$ be such that $\underline{\R} \subseteq \R(w) \subseteq \overline{\R}$. Then 
\begin{eqnarray}
w \in \L(\SGF(\xi,\underline{\R}))  \implies & w \models \G\F \xi & \implies w \in \L(\SGF(\xi,\overline{\R})) \label{eq:sl4}\\
w \in \L(\SFG(\xi,\underline{\R}))  \implies & w \models \F\G \xi & \implies w \in \L(\SFG(\xi,\overline{\R})) \label{eq:sl5}\\
w \in \L(\SGf(\xi,\underline{\R}))  \implies & w \models \Gf{\bowtie p}_\ext \xi & \implies w \in \L(\SGf(\xi,\overline{\R})) \label{eq:sl6}
\end{eqnarray}
	Moreover, the result holds also for $\underline{\R} \cap \sf(\xi) \subseteq \R(w) \cap \sf(\xi) \subseteq \overline{\R} \cap \sf(\xi)$. 
\end{reflemma}
\begin{proof}
Due to Lemma~\ref{lem:slave-promises}, it suffices to prove for the given $\xi$ and $w$ and for any $\R$ that
\begin{align}
	\sat({\R}) \text{ is infinite} & \iff w \in \L(\SGF(\xi,{\R})) \label{eq:sl11}\\
	\Nset \setminus \sat(\R) \text{ is finite}  & \iff w \in \L(\SFG(\xi, \R)) \label{eq:sl12}\\
    \lrExt(\big(\idf_{\sat(\R)}(i)\big)_{i=0}^\infty) \bowtie p & \iff w \in \L(\SGf(\xi, \R)) \label{eq:sl13}
\end{align}

For (\ref{eq:sl11}), we must prove that there are infinitely many positions from which the run ends in an accepting sink if{}f there are infinitely many positions with a token in an accepting sink. 
To this end, observe that to each position $j$ with a token in an \emph{accepting} sink $q$ (i.e., $\R\proves q$.) we can assign a set $\mathit{EndIn}(j,q)$ of positions $i$ such that $\SGF(\xi)(\infix w ij)=q$.
On the one hand, each $i$ is exactly in one $\mathit{EndIn}(j,q)$ since the slave transition systems are acyclic and each path inevitably ends in a sink.
On the other hand, each $\mathit{EndIn}(j,q)$ is finite, again due to the acyclicity. 
Consequently, $\sat({\R})$ is infinite if{}f $\sum_{j,q}\mathit{EndIn}(j,q)$ is infinite if{}f the number of non-empty $\mathit{EndIn}(j,q)$ is infinite if{}f $\SGF$  accepts.

For (\ref{eq:sl12}), the argument is analogous, but we have to consider $\mathit{EndIn}(j,q)$ for \emph{rejecting} sinks $q$, i.e., $\R\not\proves q$.
Then $\sat({\R})$ is co-finite if{}f $\sum_{j,q}\mathit{EndIn}(j,q)$ is finite if{}f the number of non-empty $\mathit{EndIn}(j,q)$ is finite if{}f $\SFG$ accepts.

For (\ref{eq:sl13}), observe that in $\SGf$ the precise number of tokens is preserved in each state at every point of time.
Therefore, each successful run corresponds exactly to one $1$ in the total reward.
In order to prove that both sequence have the same $\liminf/\limsup$, we need to prove that the length of each run (difference between the element's positions in the two sequences) is bounded.
This follows by acyclicity of the automaton.
\qed	
\end{proof}

\begin{reflemma}{lem:slaves-product}
For $w$ and $\R \subseteq \rec$, we have
\begin{description}
\item[(soundness)]  whenever
        $w \in \L(\P({\R}))$ then $\R\subseteq\R(w)$ and hence
        \[w \models \bigwedge_{\F\xi \in \R} \G\F \xi \land \bigwedge_{\G\xi \in \R} \F\G \xi \land \bigwedge_{\Gf{\bowtie p}_\ext \xi \in \R} \Gf{\bowtie p}_\ext \xi\]
\item[(completeness)]    $w \in \L(\P({\R(w)}))$.    
\end{description}
\end{reflemma}
\begin{proof}
As to soundness, let $w \in \L(\S({\R}))$. Consider the dag on $\R$ given by an edge $(\chi,\chi')$ if $\chi'\in\sf(\chi)\setminus\{\chi\}$.
We prove the right-hand side of the implication for each formula $\xi\in\R$ by induction on the distance $d$ to the leaf in the dag.

Let $d=0$ and consider $\chi=\F\xi$; the other cases are analogous. 
Then $\xi$ does not contain any subformula from $\R$.
Therefore, not only $w\in\L(\SGF(\xi,\R)$, but also $w\in\L(\SGF(\xi,\emptyset)$.
Since $\emptyset\subseteq\R(w)$, Lemma~\ref{lem:slaves-acc} (part ``Moreover'') yields $w\models\G\F\xi$.

Let $d>0$ and $\chi=\F\xi$; the other cases are again analogous. 
We have not only $w\in\L(\SGF(\xi,\R)$, but also $w\in\L(\SGF(\xi,\R\cap\sf(\xi))$.
By induction hypothesis, $w\models\R\cap\sf(\xi)$.
Therefore, $\R\cap\sf(\xi)\subseteq\R(w)$ and thus Lemma~\ref{lem:slaves-acc} yields $w\models\G\F\xi$.

As to completeness, we prove that $w\in\L(\xi,\SGF(\R(w)))$ for $\F\xi\in\R(w)$; the proof for other types of automata is analogous. 
Since $\F\xi\in\R(w)$ we have $w\models\G\F\xi$.
By Lemma~\ref{lem:slaves-acc} we have $w \in \L(\SGF(\xi,\R(w)))$.
\qed
\end{proof}

We call the left-hand-side of $\proves$ of the acceptance condition ``extended assumptions'' since it is a conjunction of assumptions $\R$ extended by $\Psi_\xi[(\rec\setminus\R)/\false]$ for each $\G\xi\in\R$. We prove the extended assumptions hold at almost all positions:

\begin{alemma}\label{lem:tokens}
For every word $w$ accepted with respect to $\R$, and for every formula $\G\xi \in \R$, and for all $t\geq T(w)$ and for all $\tau\geq t$ such that $\psi:=\SGF(\xi)(\infix w t \tau)$ is defined, we have that 
$\suffix w\tau \models \psi[\rec\setminus\R/\false]$.
\end{alemma}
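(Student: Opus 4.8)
\textbf{Proof plan for Lemma~\ref{lem:tokens}.}
The plan is to prove the statement by induction on $\tau - t$, reasoning about how a token that is present in $\SGF(\xi)$ at time $t$ (and tokens born afterwards) evolve, and exploiting the fact that past time $T(w)$ the truth values of all formulae in $\rec$ have stabilized to match membership in $\R(w)\supseteq\R$. The key observation linking slave tokens to actual satisfaction is already available: by Lemma~\ref{lem:master-local} applied to the slave LTS (as rephrased in the proof of Lemma~\ref{lem:threshold}), for a token that entered $\S(\xi)$ at position $i\geq T(w)$ and currently sits in state $\eta$ at position $\tau$, we have $\suffix{w}{i}\models\xi$ iff $\suffix{w}{\tau}\models\eta$. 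Since $\G\xi\in\R\subseteq\R(w)$ means $w\models\F\G\xi$, and $t\geq T(w)$, we in fact have $\suffix{w}{i}\models\xi$ for \emph{every} $i\geq t$; hence every token currently alive in $\S(\xi)$ satisfies its current state on $\suffix{w}{\tau}$. But $\psi=\SGF(\xi)(\infix{w}{t}{\tau})$ is, by the token semantics of $\SGF$, a \emph{conjunction} (over all alive tokens, indexed by their birth times) of the individual token states, so $\suffix{w}{\tau}\models\psi$ holds outright.

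First I would make this precise: unfold the definition of the $\SGF$ transition relation, which sends $\Psi\mapsto\{\delta^\S(\eta,\nu)\mid \eta\in\Psi\setminus\sink\}\cup\{\xi\}$, to see that $\SGF(\xi)(\infix{w}{t}{\tau})$ collects exactly the states $\delta^\S$-reachable from $\xi$ in $\tau-j$ steps for each $t\leq j\leq \tau$ with the run not yet having hit a sink (sinks are dropped) — and that membership in this set is, as an element of $Q$, to be read as the conjunction of those state-formulae (this is the convention fixed when $Q$ is defined as Boolean functions over $\sf$; subsets of $Q$ appearing as slave states are conjunctions). Then each conjunct $\eta$ is the slave state at time $\tau$ of a token born at some $j\in[t,\tau]$, and by the stabilization at $T(w)$ together with $\suffix{w}{j}\models\xi$ and the slave-run correctness, $\suffix{w}{\tau}\models\eta$. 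So $\suffix{w}{\tau}\models\psi$.

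It then remains to upgrade $\suffix{w}{\tau}\models\psi$ to $\suffix{w}{\tau}\models\psi[\rec\setminus\R/\false]$. For this I would argue that no formula of $\rec\setminus\R$ actually occurs in $\psi$ as a non-Boolean subformula that matters — or, more robustly, that substituting $\false$ for formulae in $\rec\setminus\R$ preserves satisfaction \emph{because} those formulae are genuinely false on $\suffix{w}{\tau}$: since $\tau\geq t\geq T(w)$, for every $\rho\in\rec\setminus\R\subseteq\rec\setminus\R(w)$ we have $\suffix{w}{\tau}\not\models\rho$ by the definition of the threshold $T(w)$. Hence replacing such $\rho$ by $\false$ inside the positive Boolean combination $\psi$ does not change its truth value on $\suffix{w}{\tau}$ (a positive Boolean combination is monotone, and here we are replacing subformulae by $\false$ which already equals their truth value); formally this is the substitution lemma of propositional logic, as recorded in Appendix~\ref{app:bool}. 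This gives $\suffix{w}{\tau}\models\psi[\rec\setminus\R/\false]$, as required.

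The main obstacle I anticipate is purely bookkeeping: being careful that the slave states of $\SGF(\xi)$ are subsets of $Q$, each element of which is itself (a representative of) a propositional-equivalence class over $\sf(\xi)$, and that the formula $\psi$ we write is the conjunction of those elements; and being careful that the ``stabilization past $T(w)$'' argument is applied at the level of the slave LTS $\S(\xi)$, which does \emph{not} unfold inner $\rec$-formulae — this is exactly the subtlety already handled inside the proof of Lemma~\ref{lem:threshold}, namely that for $\rho\in\rec$ and $u\geq T(w)$ one has $\suffix{w}{u}\models\rho$ iff $\suffix{w}{u}\models\X\rho$, so not unfolding these formulae is harmless past the threshold. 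Once that correspondence is in hand, the rest is a short monotone-substitution argument.
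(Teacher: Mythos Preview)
Your first half—deducing $\suffix{w}{\tau}\models\psi$ from $\G\xi\in\R\subseteq\R(w)$ together with slave-LTS correctness past the threshold—is fine and matches the paper. The substitution step, however, has a genuine error: you assert $\rec\setminus\R\subseteq\rec\setminus\R(w)$, but the inclusion goes the other way ($\R\subseteq\R(w)$ gives $\rec\setminus\R(w)\subseteq\rec\setminus\R$). Thus any $\rho\in\R(w)\setminus\R$ lies in $\rec\setminus\R$ and is substituted to $\false$, yet $\suffix{w}{\tau}\models\rho$ (since $\rho\in\R(w)$ and $\tau\geq T(w)$). Replacing a \emph{true} atom by $\false$ in a positive Boolean combination can falsify it, so $\suffix{w}{\tau}\models\psi$ does not by itself yield $\suffix{w}{\tau}\models\psi[\rec\setminus\R/\false]$; your monotone-substitution argument only handles atoms that are already false.

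The paper sidesteps this by a different route: it pushes the token forward to the sink $s_\delta$ it eventually reaches. The sink is accepting, hence \emph{provable} from the assumption set, and is a positive Boolean combination purely over $\rec$-atoms; provability from a set $\R'\subseteq\rec$ is preserved when atoms outside $\R'$ are set to $\false$ (modify any satisfying assignment on those atoms), so one obtains $\suffix{w}{\delta}\models s_\delta[\cdots/\false]$ directly from provability rather than from semantic truth of $s_\delta$. The \emph{substituted} formula is then propagated backward to time $\tau$ via slave-LTS correctness (the substitution commutes with $[\nu]$, and the remaining $\rec$-atoms are stable past $T(w)$). As an aside, the paper's own proof carries this out with $\R(w)$ rather than $\R$, so there is a mismatch with the lemma statement as written; for the use in Proposition~\ref{prop:A-sound} only ``almost all $t$'' is needed, and there the co-B\"uchi acceptance for $\SFG(\xi,\R)$ additionally supplies $\R\proves s_\delta$ for all late-born tokens.
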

\begin{proof}
Any tokens born at time $t\geq T(w)$ will end up at some time $\delta \geq t$ in an accepting sink $s_\delta$.
Since $\suffix w {\delta}\models \R(w)$ and $\R(w)\proves s_\delta$, we have $\suffix w {\delta}\models s_\delta$.
Since also $\suffix w\tau\models \R(w)$ for all $\tau\in[t,\delta]$, we obtain also  $\suffix w \tau\models \psi$ by similar argumentation as in Lemma~\ref{lem:threshold}.

Moreover, since $\R(w)\proves s_t$, by propositional calculus $\R(w)\proves s_\delta[\rec\setminus\R(w)/\false]$ and $\suffix w\delta\models s_\delta[\rec\setminus\R(w)/\false]$, and similarly we obtain $\suffix w {\tau}\models \psi[\rec\setminus\R(w)/\false]$.
\qed
\end{proof}

\begin{refproposition}{prop:A-sound}
	If $w \in \L(\A)$, then $w \models \varphi$.
\end{refproposition}
\begin{proof}
If $w \in \L(\A)$ and it accepts by a disjunct in its acceptance conditions related to assumptions $\R \subseteq \rec$, then for almost all positions $t$ when visiting a state $(\psi,(\Psi_\xi)_{\xi\in \rec})$ we have 
\[\R\cup \bigcup_{\G\xi\in\R} \Psi_\xi[\rec\setminus\R/\false] \proves \psi\] 
and, moreover by Lemma~\ref{lem:slaves-product} and \ref{lem:tokens}, we also have \[\suffix w t\models\R\cup \bigcup_{\G\xi\in\R} \Psi_\xi[\rec\setminus\R/\false] \]
yielding together by (\ref{eq:transitivity})
\[\suffix w t\models \psi\]
which by Lemma~\ref{lem:master-local} gives $w\models \varphi$.
\qed
\end{proof}

\begin{refproposition}{prop:A-complete}
	If $w \models \varphi$, then $w \in \L(\A)$.
\end{refproposition}
\begin{proof}
Let $w$ be a word. Then $\acc(\R(w))$ is satisfied by Lemma~\ref{lem:slaves-product}.
We show that $\acc_{\T}(\R(w))$ is satisfied, too.
In other words, we prove that for almost all all positions $t$ when visiting a state $(\psi,(\Psi_\xi)_{\xi\in \rec})$ we have
\[\R(w)\cup \bigcup_{\G\xi\in\R(w)} \Psi_\xi[\rec\setminus\R(w)/\false] \proves \psi\]

Since both $\psi$ and each element of each $\Psi_\xi$ are actually Boolean functions, we choose formulae that are convenient representations thereof.
Namely, we consider the formula generated exactly from $\varphi$ or $\xi$ using the transition functions $\delta^\T$ or $\delta^\S$, respectively.
Therefore, each occurrence of $\G\xi\in\sf(\varphi)$ corresponds after reading a finite word $v$ to some occurrence of $\psi'\in\sf(\psi)$ where $\psi'=\G\xi\wedge \bigwedge_i \xi_i$ and $\xi_i=\delta^\T(\xi,v_i)$ for some infix $v_i$ of $w$; we call such a formula $\psi'$ \emph{derived $\G$-subformula}.
Similarly, reading $v$ transforms $\F\xi$ into a \emph{derived $\F$-subformula} $\F\xi\vee\bigvee_i \xi_i$.
Finally, similarly for $\Gf{\bowtie}_\ext$- and $\U$-formulae.
Note that every derived $\Gf{\bowtie p}_\ext$-formula is always of the form $\Gf{\bowtie p}_\ext \xi \wedge\bigwedge \true$.

We consider positions large enough so that 
\begin{itemize}
\item they are greater than $T(w)+|Q|$ (here $|Q|$ ensures that tokens born before $T(w)$ do not exist any more), and
\item all the satisfied $\U$-formulae have their second argument already satisfied, and
\item $\psi$ is a Boolean combination over derived formulae since all outer literals and $\X$-operators have been already removed through repetitive application of $[\cdot]$.
\end{itemize}

We prove that each derived formula $\psi'$ (in $\psi$) that currently holds is also provable from the extended  assumptions.
Since $\psi$ holds, this implies that also the whole $\psi$ is provable from the extended  assumptions.
We proceed by structural induction.

First, let $\psi'$ be a derived $\G$-subformula $\G\xi\wedge \bigwedge_i \xi_i$.
Since $\psi'$ holds, $\G\xi$ holds, we have $\G\xi\in\R(w)$ and thus $\R(w)\proves\G\xi$.
Further, each $\xi_i$ corresponds to a formula $\psi_i$ either in $\Psi_\xi$ or a sink, which is accepting since $\xi_i$ holds, as follows:
This correspondence mapping is very similar to identity, except for
\begin{itemize}
\item each derived $\F$-formula $\F\chi\vee\bigvee_i \chi_i$ is mapped to $\F\chi$ since $\S(\xi)$ does not unfold $\F$, and
\item each derived $\G$-formula $\G\chi \wedge \bigwedge_i \chi_i$ is mapped to $\G\chi$ since $\S(\xi)$ does not unfold $\G$, moreover, each $\chi_i$ again corresponds in the same way to a formula in $\Psi_\chi$ or an accepting sink of $\SGF(\R(w),\chi)$ by the induction hypothesis.
\end{itemize}
If we could replace each derived formula in $\xi_i$ by its simple image in the correspondence mapping, we would have $\psi_i\proves\xi_i$ (and since $\psi_i$ is provable from assumptions - either a token or an accepting sink - we could conclude).
Therefore it remains to prove all the derived formulae:
\begin{itemize}
\item $\G\chi \wedge \bigwedge_i \chi_i$ that holds can be proved by induction hypothesis,
\item $\G\chi \wedge \bigwedge_i \chi_i$ that does not hold is proved from $\G\chi[\rec\setminus\R(w)/\false]=\false$
\item $\F\chi\vee\bigwedge_i \chi_i$ that holds is proved from $\F\chi\in\R(w)$
\item $\F\chi\vee\bigwedge_i \chi_i$ that does not hold is proved from $\F\chi[\rec\setminus\R(w)/\false]=\false$
\end{itemize}

Second, $\psi'=\Gf{\bowtie p}_\ext\xi\wedge\bigwedge\true$ is proved directly from $\R(w)$.

Third, let (i) $\psi'$ be a derived $\F$-subformula $\F\xi\vee \bigvee_i \xi_i$ such that $\F\xi$ holds. 
Then $\F\xi\in\R(w)$ and thus $\R(w)\proves\F\xi$.

Finally, let $\psi'$ be a derived $\F$-subformula $\F\xi\vee \bigvee_i \xi_i$ such that $\F\xi$ does not hold (i.e., some of the $\xi_i$'s hold), or a derived $\U$-subformula, where thus one of the disjuncts not containing this until holds (since all satisfied untils have their second argument already satisfied). Then we conclude by the induction hypothesis.
\qed
\end{proof}

\section{Probability space of Markov chain}
\label{app:mc}
For a Markov chain $N=(L,P,\hat \ell)$ we define the probability space $(\mathit{Run}, \mathcal{F}, \mathbb{P})$ where
\begin{itemize}
 \item $\mathit{Run}$ contains all runs initiated in $\hat \ell$, i.e.
  all infinite sequences $\ell_0\ell_1\ldots$ satisfying $\ell_0=\hat \ell$ and $P(\ell_i,\ell_{i+1}) > 0$ for all $i\ge 0$.
 \item $\cal F$ is the $\sigma$-field generated by basic cylinders $\mathit{Cyl}(h) := \{\omega \mid \omega \text{ starts with } h\}$ for all $h$ which are a prefix of an element in $\mathit{Run}$.
 \item $\mathbb{P}$ is the unique probability function such that for $h=\ell_0\ell_1\ldots \ell_n$ we have $\mathbb{P}(\mathit{Cyl}(h)) = \prod_{i=0}^{i=n-1} P(\ell_i,\ell_{i+1})$
\end{itemize}

When we say ``almost surely'' or ``almost all runs'', it refers to an event happening with probability 1 according to the relevant measure (which is usually clear from the context).

\section{Proof of Proposition~\ref{thm:mp-main}}
In the rest of this section we prove Proposition~\ref{thm:mp-main}. 
To simplify the notation, for an action $a$ and reward structure $r$ we will use $\lrExt(a)$ and $\lrExt(r)$ for random variables that on a run $\omega = s_0a_0s_1a_1\ldots$ 
return $\lrExt(\idf_{a_0=a}\idf_{a_1=a}\ldots)$ and $\lrExt(r_{s_0}r_{s_1}\ldots)$, respectively.

The direction $\Rightarrow$ can be proved as follows.
For any fixed $i$, by \cite[Corollary 12]{lics15} there is a strategy $\sigma_i$ such that $\lrLim{a} = x_{i,a}$ almost surely, and hence $\sigma_i$
almost surely yields reward $\sum_{a\in A} \singlereward(a)\cdot x_{i,a}$ w.r.t.\ any reward function $\singlereward$. Moreover, $\sigma_i$ visits
every state of $\mdp$ infinitely often almost surely.

We now construct $\sigma$ inductively as follows. The strategy will keep the current ``mode'', which is a number from $1$ to $n$, and an unbounded ``timer'' ranging over natural numbers. Suppose we have
defined $\sigma$ for history $h$, but not for any other history starting with $h$.
Suppose that in the history before $h$ the 
strategy $\sigma$ was in mode $\ell$. Then in $h$ the mode is incremented by $1$ (modulo $n$), yielding the mode $\ell'$, and the strategy $\sigma$ starts playing as
$\sigma_{\ell'}$. It does so for $2^{|h|}$ steps, yielding a history $h'$. Afterwards, we apply the inductive definition again with $h'$ in place of $h$. 

\begin{lemma}
The strategy $\sigma$ satisfies the requirements of Proposition~\ref{thm:mp-main}.
\end{lemma}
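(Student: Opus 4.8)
The plan is to show that the strategy $\sigma$ makes the entire acceptance condition $\acc$ hold with probability $1$, treating its three conjuncts separately and using two properties of the strategies $\sigma_i$ supplied by \cite[Corollary~12]{lics15}: (a) for every reward function $r$, along almost every run of $\sigma_i$ the Cesàro averages $\tfrac1N\sum_{t<N}r(s_t)$ converge to $\rho_i(r):=\sum_{s\in S,\,a\in\act{s}}x_{i,a}\,r(s)$; and (b) almost every run of $\sigma_i$, from any initial state, visits every state of $\mdp$ infinitely often. Since $\sum_{s,a}x_{i,a}=1$ (by \eqref{eq:xssum} and the convention that each action is enabled in a unique state), constraints \eqref{eq:rew-inf} and \eqref{eq:rew-sup} say exactly that $\rho_i(\singlereward_j)\bowtie v_j$ for all $i,j$ and $\rho_i(\singlerewardalt_i)\bowtie u_i$ for all $i$; thus each $\sigma_i$ already realises \emph{all} the inferior-limit targets and the $i$-th superior-limit target. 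I write $N_k$ for the start time of phase $k$ and $\ell_k=2^{|h_k|}$ for its length, so $N_{k+1}=N_k+\ell_k$; the lengths grow tower-fast, so in particular $N_k/\ell_k\to0$, and the construction cycles through $\sigma_1,\dots,\sigma_n$, so each $\sigma_i$ is used in infinitely many phases, of lengths tending to infinity.

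\textbf{B\"uchi conjuncts.} Conditioned on the history up to $N_k$, phase $k$ is a fresh play of its $\sigma_{i(k)}$ of length $\ell_k$ from the current state, and by property~(b) the conditional probability that this phase visits every state of $\mdp$ tends to $1$ as $\ell_k\to\infty$, uniformly over the finitely many starting states. Hence $\sum_k \mathbb P[\,\text{phase }k\text{ visits all of }S\mid\calF_{N_k}\,]=\infty$ almost surely, and L\'evy's conditional Borel--Cantelli lemma yields that infinitely many phases visit all of $S$; in particular every state of each $S_i$ is visited infinitely often, almost surely.

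\textbf{Mean-payoff conjuncts.} Here the two ingredients are: (i) a \emph{domination} observation --- since $N_k/\ell_k\to0$ and rewards are bounded, the Cesàro average of any fixed reward at the end of phase $k$ equals the phase-$k$ average up to $o(1)$, and for times strictly inside phase $k$ the running average is squeezed between the prefix already accumulated before $N_k$ and the average of the long partial phase $k$, which has itself nearly reached its target; and (ii) a \emph{concentration} observation --- because the $\sigma_i$ induce finite Markov chains, their Cesàro averages over the first $\ell$ steps deviate from $\rho_i(\cdot)$ by more than $\varepsilon$ only with probability exponentially small in $\ell$, so since the $\ell_k$ grow tower-fast these per-phase deviation probabilities are summable in $k$, and by Borel--Cantelli only finitely many phases deviate, almost surely and simultaneously for all the finitely many rewards $\singlereward_j,\singlerewardalt_i$. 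Granting (i) and (ii): for the superior-limit conjuncts, at the ends of the cofinitely many non-deviating phases using $\sigma_i$ the global Cesàro average of $\singlerewardalt_i$ is within $\varepsilon$ of $\rho_i(\singlerewardalt_i)$, so $\lrSup(\singlerewardalt_i(s_0)\singlerewardalt_i(s_1)\cdots)\ge\rho_i(\singlerewardalt_i)$, which is $\bowtie u_i$ (strictly when $\bowtie\,={>}$). For the inferior-limit conjuncts, once only non-deviating phases remain, decomposing a time $N$ as $N_k+s$ and bounding the prefix before $N_k$ by the domination observation and the partial phase $k$ by its prefix averages gives $\tfrac1N\sum_{t<N}\singlereward_j(s_t)\ge v_j-\varepsilon-o(1)$; letting $N\to\infty$ and then $\varepsilon\to0$ yields $\lrInf(\singlereward_j(s_0)\singlereward_j(s_1)\cdots)\ge v_j$ (the strict case being analogous, as each $\rho_i(\singlereward_j)$ strictly exceeds $v_j$ when $\bowtie\,={>}$).

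\textbf{Main obstacle.} The hard part is ingredient (ii): upgrading the \emph{qualitative} almost-sure convergence of Cesàro averages along a single infinite play of $\sigma_i$ into a \emph{quantitative}, summable per-phase deviation bound for the concatenation of independently restarted finite plays. This is exactly why the phase lengths must grow so fast, and it is what is needed to pin down the inferior limits along \emph{all} tails of the run, not merely along the subsequence of phase endpoints (which would only control a superior limit). A secondary nuisance is handling times strictly inside a long phase uniformly in the offset, so that a transient dip within a phase cannot spoil the inferior limit; this is absorbed into the prefix estimate of ingredient (i).
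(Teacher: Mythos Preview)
Your approach is sound and more self-contained than the paper's. For the superior limits the paper uses only a weak consequence of property~(a)---that from any state, $\sigma_i$ brings the partial average within $\varepsilon/2$ of $u_i$ with probability at least $1/2$ after some fixed number of steps---and then applies ordinary Borel--Cantelli over the infinitely many $\sigma_i$-phases; for the inferior limits it does not argue directly at all but invokes the construction of \cite[Proof of Claim~10]{lics15}, checking only that the phase lengths $2^{|h|}$ satisfy the growth conditions required there. You instead appeal to exponential large-deviation bounds for the finite Markov chains induced by the $\sigma_i$, which lets you treat both kinds of mean-payoff conjuncts by the same Borel--Cantelli argument and avoids the external reference. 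The price is the extra hypothesis that each $\sigma_i$ is memoryless; the paper does not assert this, but it is standard that any flow satisfying \eqref{eq:xssum}--\eqref{eq:xa} is realised by a memoryless randomised strategy, so the assumption is harmless once stated.

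One point in your sketch needs tightening. For $\lrInf(\singlereward_j)$ you need the partial phase-$k$ average over the first $s$ steps to be close to its target for \emph{every} offset $s$ above some threshold, not only at the endpoint $s=\ell_k$; otherwise a dip midway through a long phase could drag the running average down at a time where the prefix before $N_k$ no longer dominates. Your ingredient~(ii) as written controls only the endpoint, and ``absorbed into the prefix estimate'' does not by itself cover the intermediate range where $s$ is already comparable to $N_k$ but far from $\ell_k$. The fix is easy---sum the pointwise bounds $Ce^{-c\varepsilon^2 s}$ over all $s\ge \varepsilon N_k$, obtaining a per-phase error that is still summable in $k$---but it should be made explicit rather than folded into~(i).
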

\begin{proof}[Sketch]
Firstly, the generalised B\"uchi condition is almost surely satisfied because it is satisfied under any $\sigma_i$, and $\sigma$ will eventually mimic $\sigma_i$
for an arbitrary long length.

Let us continue with the claim for the mean payoffs with supremum limits. Fix $1\le i\le n$. We will show that for every $\varepsilon>0$, almost every run $\omega$
has a prefix $s_0a_0s_1a_1\ldots s_\ell$
with 
\[
\frac{1}{\ell}\sum_{j=0}^{\ell-1} \singlerewardalt_i(a_j) \ge u_i - \varepsilon
\]

By properties of $\sigma_i$ for any $s\in S$ there is a number $k_{s,\varepsilon}$ and a set of runs 
$R$ with $\Pr{\sigma_i}{s}{T} \ge 1/2$ such that
for every $s'_0a'_0s'_1a'_1\ldots \in T$ and every $k' \ge k_{s,\varepsilon}$ we have 
\[
 \sum_{\ell=0}^{k'} \frac{1}{k'} \singlerewardalt_i(a'_\ell) \ge u_i - \varepsilon/2 
\]
Let $\alpha$ be the smallest assigned reward, there must be a number $J_\varepsilon$ such that
\[
  J_\varepsilon \cdot (u_i - \alpha) < 2^{J_\varepsilon} \cdot \varepsilon/2
\]
Intuitively, $J_\varepsilon$ is chosen so that no matter what the history is in the first $J_\varepsilon$ steps,
if the remainder has length at least $2^{J_\varepsilon}$ steps and gives partial average at least $u_i - \varepsilon/2$,
we know that the whole history gives a partial average at least $u_i - \varepsilon$.

Now almost every run $\omega$ has infinitely many prefixes $h_0,h_1\ldots$ such in the prefix $h_i$, the strategy $\sigma$ starts
mimicking $\sigma_i$ for $2^{|h_i|}$ steps. Now consider those prefixes $h_i$ which have length greater than $\max_s k_{\varepsilon,s}$
and $J_\varepsilon$. This ensures that starting with any such prefix $h_i$, with probability at least $1/2$ the history $h'=s'_0a'_0s'_1a'_1\ldots s'_\ell$ in which we end after
taking $2^{|h_i|}$ steps will satisfy
\[
 \frac{1}{\ell}\sum_{j=0}^{\ell-1} \singlerewardalt_i(a'_j) \ge u_i - \varepsilon
\]
Using Borel-Cantelli lemma~\cite{Royden88} this implies that almost every run has the required property.

The proof for mean payoff with inferior limits is analogous, although handling limit inferior is more subtle as it requires us to show
that from some point on the partial average never decreases below a given bound. To give a formal proof, we can reuse
the construction from \cite[Proof of Claim 10]{lics15} applied to strategies $(\xi_k)_{1\le k\le \infty}$ where each $\xi_k$ for $k$ of the form $\ell\cdot j + i$
is defined to be the strategy $\sigma_i$. Note that our choice of ``lengths'' of each mode
satisfy Equations (3) and (4) from \cite[Proof of Claim 10]{lics15}. Also note that while \cite[Proof of Claim 10]{lics15} requires the frequencies
of the actions to converge, in our proof we are only concerned about limits inferior of long run rewards, and so the requirements on $\xi_i$ are not
convergence of limits,
but only that limits inferior converge to the required bound. This requirement is clearly satisfied.
\qed
\end{proof}

Let us now proceed with the direction $\Leftarrow$ of the proof or Proposition~\ref{thm:mp-main}.
Because no $x_{i,a}$ and $x_{i',a'}$ with $i\neq i'$ occur in the same equation, we can
fix $1\le i \le m$ and to finish the proof it suffices to give a solution to $x_{i,a}$ for all $a$.

Similarly to~\cite{lics15,BBC+14} where only lim inf was considered, the main idea of the proof is to obtain suitable ``frequencies''
of actions and use these as the solution. Nevertheless, the formal approach of~\cite{lics15,BBC+14} itself cannot be easily adapted
(main issue being the use of Fatou's lemma, which for the purpose of limit superior does not allow to establish the required
inequality). Instead, we use a straightforward adaptation of an approach used in~\cite{BCFK13}. The statement we require is captured in the following lemma.

\begin{lemma}\label{lemma:subsequence}
For every run $\omega=s_0a_0s_1a_1\ldots$ %
there is a sequence of numbers $T_1[\omega],T_2[\omega],\ldots$ such that the number
\[
f_{\omega}(a) := \lim_{\ell\rightarrow \infty} \frac{1}{T_\ell[\omega]} \sum_{j=1}^{T_\ell[\omega]} \idf_{a_j=a}
\]
is defined and non-negative for all $a\in A$, and satisfies
\[
\begin{array}{rl}
\sum_{a\in A} f_{\omega}(a)\cdot \singlerewardalt_i(a) &= \lrSup(\singlerewardalt_i)(\omega)\\
\sum_{a\in A} f_{\omega}(a)\cdot \singlereward_j(a) &\ge \lrInf(\singlereward_j)(\omega)\quad{\text{for $1\le j \le n$}}\\
\sum_{a\in A} f_{\omega}(a) &= 1
\end{array}
\]
Moreover, for almost all runs $\omega$ we have
\[
 \sum_{a\in A} f_{\omega}(a) \cdot \delta(a)(s) = \sum_{a\in \act{s}} f_{\omega}(a)
\]
\end{lemma}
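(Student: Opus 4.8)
\textbf{Proof plan for Lemma~\ref{lemma:subsequence}.}
The plan is to extract the required sequence $T_1[\omega], T_2[\omega],\ldots$ from a subsequence along which \emph{both} the Ces\`aro action-frequencies and the partial average of $\singlerewardalt_i$ converge simultaneously, picking the latter to converge to its limit superior. Concretely, first I would fix a run $\omega = s_0 a_0 s_1 a_1\ldots$ and let $\ell = \limsup_{N}\frac1N\sum_{j=1}^N \singlerewardalt_i(a_j)$; choose an increasing sequence $(N_k)_k$ along which this partial average tends to $\ell$. The vectors $\big(\frac1{N_k}\sum_{j=1}^{N_k}\idf_{a_j=a}\big)_{a\in A}$ all lie in the compact simplex $\{x\ge 0 : \sum_a x_a = 1\}$, so by Bolzano--Weierstrass there is a further subsequence $(T_\ell)_\ell$ of $(N_k)_k$ along which each coordinate converges; call the limit $f_\omega(a)$. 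By construction $f_\omega(a)\ge 0$, $\sum_a f_\omega(a)=1$, and $\sum_a f_\omega(a)\cdot \singlerewardalt_i(a) = \lim_\ell \frac1{T_\ell}\sum_{j=1}^{T_\ell}\singlerewardalt_i(a_j) = \ell = \lrSup(\singlerewardalt_i)(\omega)$, since $(T_\ell)$ is a subsequence of a sequence realising the limsup.

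For the inequalities with $\singlereward_j$, the point is that $(T_\ell)_\ell$ is an arbitrary subsequence along which the averages converge, so $\sum_a f_\omega(a)\cdot \singlereward_j(a) = \lim_\ell \frac1{T_\ell}\sum_{j'=1}^{T_\ell}\singlereward_j(a_{j'})$, and any convergent subsequence of the partial averages of $\singlereward_j$ has limit at least $\liminf_N \frac1N\sum \singlereward_j = \lrInf(\singlereward_j)(\omega)$; hence the claimed $\ge$. Note all three identities/inequalities are pointwise in $\omega$ and require no probabilistic reasoning; they hold for every run.

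It remains to establish the flow equation $\sum_{a\in A} f_\omega(a)\cdot\delta(a)(s) = \sum_{a\in\act s} f_\omega(a)$ for almost all $\omega$. Here I would follow the standard argument (as in \cite{BCFK13,lics15,BBC+14}): for each state $s$ and action $a$, along the run the number of times action $a$ is taken and the number of times the chain enters $s$ differ, up to bounded additive error accounting for the in/out balance at $s$, by a martingale-type quantity. More precisely, let $X^s_N = \#\{j< N : s_j = s\}$; then $X^s_{N+1}-X^s_N = \sum_{a\in\act{s_N}}\idf_{a_N=a}\cdot\idf_{\ldots}$ on one side and on the other side each action $a$ with $\delta(a)(s)>0$ contributes to entering $s$, with the discrepancy between $\sum_a \#\{j<N: a_j=a\}\delta(a)(s)$ and $X^s_N$ forming a sum of bounded-variance martingale differences. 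By the strong law of large numbers for martingales (or Azuma--Hoeffding plus Borel--Cantelli), this discrepancy is $o(N)$ almost surely; dividing by $T_\ell$ and passing to the limit along $(T_\ell)_\ell$ gives the flow equation a.s.

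\textbf{Main obstacle.} The genuinely delicate part is the last step: making the flow equation survive the passage to an \emph{arbitrary} (run-dependent) subsequence $(T_\ell[\omega])_\ell$. The a.s.\ bound on the martingale discrepancy must be a $o(N)$ statement that holds \emph{along every} $N$ — not merely along a subsequence — so that it can then be specialised to whatever $(T_\ell)$ we chose for that run; this is exactly what the almost-sure $o(N)$ bound provides, but one has to be careful that the exceptional null set does not depend on the subsequence (it does not, since the $o(N)$ statement quantifies over all $N$ before the subsequence is chosen). I would therefore prove the martingale estimate first, uniformly in $N$, and only afterwards perform the compactness extraction of $f_\omega$, so that the two constructions do not interfere.
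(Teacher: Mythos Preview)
Your proposal is correct and follows essentially the same route as the paper: extract a subsequence realising $\lrSup(\singlerewardalt_i)$, refine it so that all action frequencies converge (the paper does this coordinate-by-coordinate, you package it as compactness of the simplex---same thing), and observe that convergence of the frequencies forces convergence of every reward average, giving the $\singlereward_j$ inequalities for free. For the flow equation the paper argues via the i.i.d.\ SLLN applied to the indicators ``state $s$ is visited right after the $k$-th execution of $a$'' (exploiting that, conditional on the action, the next state is i.i.d.), whereas you propose the equivalent martingale/Azuma route; both yield an almost-sure $o(N)$ bound uniform in $N$, which---as you correctly note---is what makes the null set independent of the run-dependent subsequence $(T_\ell[\omega])$.
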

\begin{proof}
Fix $\omega=s_0a_0s_1a_1\ldots$
We first define a sequence $T'_1[\omega],T'_2[\omega],\ldots$ to be any sequence satisfying 
\[
\lim_{\ell\rightarrow \infty} \frac{1}{T'_\ell[\omega]} \sum_{j=1}^{T'_\ell[\omega]} \singlerewardalt_i(a_j)\quad = \lrSup(\singlerewardalt_i)(\omega)
\]
Existence of such a sequence follows from the fact that every sequence of real numbers has a subsequence which converges to the lim sup of the original sequence.

Further, we define subsequences $\hat T^k_1[\omega],\hat T^k_2[\omega],\ldots$ for $1\le k \le n$
where for all $k$ the sequence $\hat T^k_1[\omega],\hat T^k_2[\omega],\ldots$ satisfies
\[
\lim_{\ell\rightarrow \infty} \frac{1}{\hat T^k_i[\omega]} \sum_{j=1}^{\hat T^k_i[\omega]} \singlerewardalt_i(a_j)\quad = \lrSup(\singlerewardalt_i)(\omega)
\]
and
\[
\lim_{\ell\rightarrow \infty} \frac{1}{\hat T^k_i[\omega]} \sum_{j=1}^{\hat T^k_i[\omega]} \singlereward_{k'}(a_j)\quad \ge \lrInf(\singlereward_{k'})(\omega)
\]
for all $k' \le k$. We define these subsequences inductively. We start with $\hat T^0[\omega],\hat T^1[\omega]\ldots = T'_1[\omega],T'_2[\omega]\ldots$.
Now assuming that $\hat T^{k-1},\hat T^{k-1}\ldots = 0,1\ldots$ has been defined, we
take $\hat T^k_1[\omega],\hat T^k_2[\omega],\ldots$ such that
\[
\lim_{\ell\rightarrow \infty} \frac{1}{\hat T^k_i[\omega]} \sum_{j=1}^{\hat T^k_i[\omega]} \singlereward_k(a_j)
\]
exists. The existence of such a sequence follows from the fact that every sequence of real numbers has a converging subsequence. The required properties then follow easily from properties of limits.

Now assuming an order on actions, $\bar a_1,\ldots,\bar a_{|A|}$ in $A$, we define $T^{k}_1[\omega],T^{k}_2[\omega],\ldots$ for $0\leq k\leq |A|$ so that $T^{0}_1[\omega],T^{0}_2[\omega],\ldots$ is the sequence $\hat T^n_1[\omega],\hat T^n_2[\omega],\ldots$, and every $T^{k}_1[\omega],T^{k}_2[\omega],\ldots$ is a subsequence of $T^{k-1}_1[\omega],T^{k-1}_2[\omega],\ldots$ 
such that the following limit exists
\[
f_\omega(\bar a_k) := \lim_{\ell\rightarrow \infty} \frac{1}{T^{k}_i[\omega]} \sum_{j=1}^{T^{k}_i[\omega]} \idf_{a_j=\bar a_k}
\]
The required properties follow as before. We take $T^{|A|}_1[\omega],T^{|A|}_2[\omega],\ldots$ to be the desired sequence $T_1[\omega],T_2[\omega],\ldots$.

Now we need to show that satisfies the required properties. Indeed
\begin{align*}
\sum_{a\in A} f_{\omega}(a)\cdot \singlerewardalt_i(a) &=  \sum_{a\in A} \lim_{\ell\rightarrow \infty} \frac{1}{T^{k}_\ell[\omega]} \sum_{j=1}^{T^{k}_\ell[\omega]} \idf_{a_j = a} \cdot \singlerewardalt_i(a)
 \tag{def. of $f_{\omega}(a)$}\\
 &= \lim_{\ell\rightarrow \infty} \frac{1}{T^{k}_\ell[\omega]} \sum_{j=1}^{T^{k}_\ell[\omega]} \singlerewardalt_i(a)
 \tag{property of $\idf$ and the sum}\\
 &= \lrSup(\singlerewardalt_i)(\omega)\tag{def. of subsequence $T^k_\ell[\omega]$}
\end{align*}
and analogously, for any $1\le i'\le n$:
\begin{align*}
\sum_{a\in A} f_{\omega}(a)\cdot \singlereward_{i'}(a) &=  \sum_{a\in A} \lim_{\ell\rightarrow \infty} \frac{1}{T^{k}_\ell[\omega]} \sum_{j=1}^{T^{k}_\ell[\omega]} \idf_{a_j = a} \cdot \singlereward_{i'}(a)\\
 &= \lim_{\ell\rightarrow \infty} \frac{1}{T^{k}_\ell[\omega]} \sum_{j=1}^{T^{k}_\ell[\omega]} \singlereward_{i'}(a)\\
 &\ge \lrInf(\singlereward_{i'})(\omega)
\end{align*}
Also
\[
\sum_{a\in A} f_{\omega}(a) \quad= \sum_{a\in A} \lim_{\ell\rightarrow \infty} \frac{1}{T^{k}_\ell[\omega]} \sum_{j=1}^{T^{k}_\ell[\omega]} \idf_{a_j = a}
 \quad= \lim_{\ell\rightarrow \infty} \frac{1}{T^{k}_\ell[\omega]} \sum_{j=1}^{T^{k}_\ell[\omega]} 1 \quad= 1\\
\]

To prove the last property in the lemma, we invoke the law of large numbers (SLLN) \cite{KSK76}. Given a run $\omega$, an action $a$, a state $s$ and $k\geq 1$, define
\[
N^{a,s}_k(\omega)=\begin{cases}
  1 & \text{ $a$ is executed at least $k$ times}\\&\text{ and $s$ is visited just after the $k$-th execution of $a$; }\\
  0 & \text{ otherwise.}
\end{cases}
\]
By SLLN and by the fact that in every step the distribution on the next states depends just on the chosen action, for almost all runs $\omega$ the following limit is defined and the equality holds whenever $f_\omega(a) > 0$:
\[
\lim_{j\rightarrow \infty} \frac{\sum_{k=1}^j N^{a,s}_k(\omega)}{j} = \delta(a)(s)
\]
We obtain, for almost every $\omega=s_0a_0s_1a_1\ldots$
\begin{eqnarray*}
\lefteqn{\sum_{a\in A} f_{\omega}(a)\cdot \delta(a)(s)}\\& = &  \sum_{a\in A} \lim_{\ell\rightarrow \infty} \frac{1}{T_\ell[\omega]} \sum_{j=1}^{T_\ell[\omega]} \idf_{a_j=a}\cdot   
  \lim_{\ell\rightarrow \infty} \frac{1}{\ell}\sum_{k=1}^{\ell} N^{a,s}_k(\omega) \\
& = & \sum_{a\in A} \lim_{\ell\rightarrow \infty} \frac{1}{T_\ell[\omega]} \sum_{j=1}^{T_\ell[\omega]}  \idf_{a_j=a}\cdot \lim_{\ell\rightarrow \infty} \frac{1}{\sum_{j=1}^{T_\ell[\omega]} \idf_{a_j=a}}\sum_{k=1}^{\sum_{j=1}^{T_\ell[\omega]} \idf_{a_j=a}} N^{a,s}_k(\omega) \\
& = & \sum_{a\in A} \lim_{\ell\rightarrow \infty} \frac{1}{T_\ell[\omega]}\sum_{k=1}^{\sum_{j=1}^{T_\ell[\omega]} \idf_{a_j=a}} N^{a,s}_k(\omega) \\
& = & \lim_{\ell\rightarrow \infty} \frac{1}{T_\ell[\omega]}\sum_{a\in A} \sum_{k=1}^{\sum_{j=1}^{T_\ell[\omega]} \idf_{a_j=a}} N^{a,s}_k(\omega) \\
& = & \lim_{\ell\rightarrow \infty} \frac{1}{T_\ell[\omega]}\sum_{j=1}^{T_\ell[\omega]} \idf_{s_j=s} \\
& = & \lim_{\ell\rightarrow \infty} \frac{1}{T_\ell[\omega]}\sum_{j=1}^{T_\ell[\omega]} \sum_{a\in \mathit{Act}(s)} \idf_{a_j=a} \\
& = & \sum_{a\in \mathit{Act}(s)}\lim_{\ell\rightarrow \infty} \frac{1}{T_\ell[\omega]}\sum_{j=1}^{T_\ell[\omega]} \idf_{a_j=a} \\
& = & \sum_{a\in \mathit{Act}(s)} f_{\omega}(a)
\end{eqnarray*}
\qed
\end{proof}

We apply Lemma~\ref{lemma:subsequence} to obtain values $f_\omega$ for every $\omega$.
Now it suffices to consider any $\omega$ for which $f_\omega$ satisfies the last condition of the lemma and which
also satisfies $\lrInf(\singlereward_j[\omega]) \ge u_j$ for all $1\le j\le n$ and $\lrSup(\singlerewardalt_i[\omega]) \ge v_i$; by the assumptions on $\sigma$ and $R$ such a run must exist.
This immediately gives us that all the equations from Figure~\ref{system-L} are satisfied.

\fi

\end{document}